\newtheorem{theorem}{Theorem}
\theoremstyle{plain}
\newtheorem{corollary}{Corollary}
\newtheorem{definition}{Definition}
\newtheorem{example}{Example}
\newtheorem{lemma}{Lemma}
\newtheorem{notation}{Notation}
\newtheorem{problem}{Problem}
\newtheorem{proposition}{Proposition}
\newtheorem{remark}{Remark}
\numberwithin{equation}{section}
\begin{document}
\title[Space-Time Bound Derivation of 3D NLS with A Quadratic Trap]{On the
Rigorous Derivation of the 3D Cubic Nonlinear Schr\"{o}dinger Equation with
A Quadratic Trap}
\author{Xuwen Chen}
\address{Department of Mathematics, Brown University, 151 Thayer Street,
Providence, RI 02912}
\email{chenxuwen@math.umd.edu / chenxuwen@math.brown.edu}
\urladdr{http://www.math.brown.edu/\symbol{126}chenxuwen/}
\date{V3 for ARMA, 04/04/2013}
\subjclass[2010]{Primary 35Q55, 35A02, 81V70; Secondary 35A23, 35B45, 81Q05.}
\keywords{Gross-Pitaevskii Hierarchy, Quadratic Trap, Collapsing Estimate,
Klainerman-Machedon Board Game and Space-time Bound.}
\dedicatory{Dedicated to Xuqing.}

\begin{abstract}
We consider the dynamics of the 3D $N$-body Schr\"{o}dinger equation in the
presence of a quadratic trap. We assume the pair interaction potential is $%
N^{3\beta -1}V(N^{\beta }x).$ We justify the mean-field approximation and
offer a rigorous derivation of the 3D cubic NLS with a quadratic trap. We
establish the space-time bound conjectured by Klainerman and Machedon \cite%
{KlainermanAndMachedon} for $\beta \in \left( 0,2/7\right] $ by adapting and
simplifying an argument in Chen and Pavlovi\'{c} \cite{TChenAndNPSpace-Time}
which solves the problem for $\beta \in \left( 0,1/4\right) $ in the absence
of a trap.
\end{abstract}

\maketitle
\tableofcontents

\section{Introduction}

It is widely believed that the cubic nonlinear Schr\"{o}dinger equation (NLS)%
\begin{equation*}
i\partial _{t}\phi =L\phi +\left\vert \phi \right\vert ^{2}\phi ,
\end{equation*}%
where $L$ is the Laplacian $-\triangle $ or the Hermite operator $-\triangle
+\omega ^{2}\left\vert x\right\vert ^{2},$ describes the physical phenomenon
of Bose-Einstein condensation (BEC). This belief is one of the main
motivations for studying the cubic NLS. BEC is the phenomenon that particles
of integer spin (Bosons) occupy a macroscopic quantum state. This unusual
state of matter was first predicted theoretically by Einstein for
non-interacting particles. The first experimental observation of BEC in an
interacting atomic gas did not occur until 1995 using laser cooling
techniques \cite{Anderson, Davis}. E. A. Cornell, W. Ketterle, and C. E.
Wieman were awarded the 2001 Nobel Prize in Physics for observing BEC. Many
similar successful experiments were performed later on \cite{Cornish,
Philips, Ketterle, Stamper}.

Let $t\in \mathbb{R}$ be the time variable and $\mathbf{x}_{N}=\left(
x_{1},x_{2},...,x_{N}\right) \in \mathbb{R}^{3N}$ be the position vector of $%
N$ particles in $\mathbb{R}^{3}$. Then BEC naively means that the N-body
wave function $\psi _{N}(t,\mathbf{x}_{N})$ satisfies 
\begin{equation}
\psi _{N}(t,\mathbf{x}_{N})=\dprod\limits_{j=1}^{N}\phi (t,x_{j})
\label{formula:BEC state}
\end{equation}%
up to a phase factor solely depending on $t$, for some one particle state $%
\phi .$ In other words, every particle is in the same quantum state. It is
believed that the one particle wave function $\phi $ which models the
condensate satisfies the cubic NLS. Gross \cite{Gr1,Gr2} and Pitaevskii \cite%
{Pitaevskii} proposed such a description. However, the cubic NLS is a
phenomenological mean field type equation and its validity needs to be
established rigorously from the many body system which it is supposed to
characterize. As a result, we investigate the procedure of laboratory
experiments of BEC according to \cite{Anderson, Davis}.

\begin{itemize}
\item[Step A.] Confine a large number of Bosons inside a trap e.g., the
magnetic fields in \cite{Anderson, Davis}. Cool it down so that the many
body system reaches its ground state. It is expected that this ground state
is a BEC state / factorized state. This step corresponds to the mathematical
problem.

\begin{problem}
\label{Problem:Lieb}Show that the ground state of the N-body Hamiltonian%
\begin{equation*}
\sum_{j=1}^{N}\left( -\frac{1}{2}\triangle _{x_{j}}+\frac{\omega _{0}^{2}}{2}%
\left\vert x_{j}\right\vert ^{2}\right) +\frac{1}{N}\sum_{1\leqslant
i<j\leqslant N}N^{3\beta }V\left( N^{\beta }\left( x_{i}-x_{j}\right) \right)
\end{equation*}%
is a factorized state.
\end{problem}

We use the quadratic potential $\left\vert x\right\vert ^{2}$ to represent
the trap. This simplified yet reasonably general model is expected to
capture the salient features of the actual trap: on the one hand the
quadratic potential varies slowly, on the other hand it tends to $\infty $
as $\left\vert x\right\vert \rightarrow \infty $. In the physics literature,
Lieb, Seiringer and Yngvason remarked in \cite{Lieb1} that the confining
potential is typically $\sim \left\vert x\right\vert ^{2}$ in the available
experiments. Mathematically speaking, the strongest trap we can deal with in
the usual regularity setting of NLS is the quadratic trap since the work 
\cite{Yajima} by Yajima and Zhang points out that the ordinary Strichartz
estimates start to fail as the trap exceeds quadratic.

\item[Step B.] Switch the trap in order to enable measurement or direct
observation. It is assumed that such a shift of the confining potential is
instant and does not destroy the BEC obtained from Step A. To be more
precise about the word "switch": in \cite{Anderson, Davis} the trap is
removed, in \cite{Stamper} the initial magnetic trap is switched to an
optical trap, in \cite{Cornish} the trap is enhanced, in \cite{Philips} the
trap is turned off in 2 spatial directions to generate a 2D Bose gas. Hence
we have a different trap after the switch, in other words, the trapping
potential becomes $\omega ^{2}\left\vert x\right\vert ^{2}$. The system is
then time dependent unless $\omega =\omega _{0}$. Therefore, the factorized
structure obtained in Step A must be preserved in time for the observation
of BEC. Mathematically, this step stands for the following problem.

\begin{problem}
\label{Problem:Mine}Take the BEC state obtained in Step A. as initial datum,
show that the solution to the many body Schr\"{o}dinger equation%
\begin{equation}
i\partial _{t}\psi _{N}=\sum_{j=1}^{N}\left( -\frac{1}{2}\triangle
_{x_{j}}+\omega ^{2}\frac{\left\vert x_{j}\right\vert ^{2}}{2}\right) \psi
_{N}+\frac{1}{N}\sum_{1\leqslant i<j\leqslant N}N^{3\beta }V(N^{\beta
}\left( x_{i}-x_{j}\right) )\psi _{N}
\label{equation:N-Body Schrodinger with switchable trap}
\end{equation}%
is a BEC state / factorized state.
\end{problem}
\end{itemize}

We first remark that neither of the problems listed above admits a
factorized state solution. It is also unrealistic to solve the equations in
Problems \ref{Problem:Lieb} and \ref{Problem:Mine} for large $N$. Moreover,
both problems are linear so that it is not clear how the cubic NLS arises
from either problem. Therefore, in order to justify the statement that the
cubic NLS depicts BEC, we have to show mathematically that, in an
appropriate sense,%
\begin{equation*}
\psi _{N}(t,\mathbf{x}_{N})\sim \dprod\limits_{j=1}^{N}\phi (t,x_{j})\text{
as }N\rightarrow \infty
\end{equation*}%
for some one particle state $\phi $ which solves a cubic NLS. However, when $%
\phi \neq \phi ^{\prime }$%
\begin{equation*}
\left\Vert \dprod\limits_{j=1}^{N}\phi (t,x_{j})-\dprod\limits_{j=1}^{N}\phi
^{\prime }(t,x_{j})\right\Vert _{L^{2}}^{2}\rightarrow 2\text{ }as\text{ }%
N\rightarrow \infty .
\end{equation*}%
i.e. our desired limit (the BEC state) is not stable against small
perturbations. One way to circumvent this difficulty is to use the concept
of the k-particle marginal density $\gamma _{N}^{(k)}$ associated with $\psi
_{N}\ $defined as%
\begin{equation}
\gamma _{N}^{(k)}(t,\mathbf{x}_{k};\mathbf{x}_{k}^{\prime })=\int \psi
_{N}(t,\mathbf{x}_{k},\mathbf{x}_{N-k})\overline{\psi _{N}}(t,\mathbf{x}%
_{k}^{\prime },\mathbf{x}_{N-k})d\mathbf{x}_{N-k},\text{ }\mathbf{x}_{k},%
\mathbf{x}_{k}^{\prime }\in \mathbb{R}^{3k}  \label{def:marginal density}
\end{equation}%
and show that%
\begin{equation*}
\gamma _{N}^{(k)}(t,\mathbf{x}_{k};\mathbf{x}_{k}^{\prime })\sim
\dprod\limits_{j=1}^{k}\phi (t,x_{j})\bar{\phi}(t,x_{j}^{\prime })\text{ as }%
N\rightarrow \infty .
\end{equation*}%
Penrose and Onsager \cite{Penrose} suggested such a formulation. Another
approach is to add a second order correction to the mean field
approximation. See \cite{Chen2ndOrder, GMM1, GMM2}.

For Problem \ref{Problem:Lieb}, Lieb, Seiringer, Solovej and Yngvason showed
that the ground state of the Hamiltonian exhibits complete BEC in \cite%
{Lieb2}, provided that the trapping potential $V_{trap}(x)$ satisfies $%
\inf_{\left\vert x\right\vert >R}V_{trap}(x)$ $\rightarrow \infty $ for $%
R\rightarrow \infty $ and the interaction potential is spherically
symmetric. To be more precise, let $\psi _{N,0}$ be the ground state, then 
\begin{equation*}
\gamma _{N,0}^{(1)}\rightarrow \left\vert \phi _{GP}\right\rangle
\left\langle \phi _{GP}\right\vert \text{ as }N\rightarrow \infty ,
\end{equation*}%
where $\gamma _{N,0}^{(1)}$ is the corresponding one particle marginal
density defined via formula \ref{def:marginal density} and $\phi _{GP}$ is
the minimizer of the Gross-Pitaevskii energy functional with coupling
constant $4\pi a_{0}\footnote{%
Here $a_{0}$ is the $N\rightarrow \infty $ limit of $N$ times the scattering
length of $N^{3\beta -1}V(N^{\beta }\cdot ).$ i.e. $a_{0}$ is the scattering
length of $V$ when $\beta =1$ and $a_{0}$ should be $\left( \int V\right)
/8\pi $ when $\beta \in \left( 0,1\right) $. See \cite{E-S-Y2}.},$ 
\begin{equation*}
\int \mathbf{(}\left\vert \nabla \phi \right\vert ^{2}+V_{trap}(x)\left\vert
\phi \right\vert ^{2}+4\pi a_{0}\left\vert \phi \right\vert ^{4}\mathbf{)}dx%
\mathbf{.}
\end{equation*}

So far, there has not been any work regarding the Hamiltonian evolution in
Step B in the case when $\omega \neq 0$. Motivated by the above
considerations, we aim to investigate the evolution of a many-body Boson
system with a quadratic trap and study the dynamics after the switch of the
trap. We derive rigorously the 3D cubic NLS with a quadratic trap from the $%
N-$body linear equation \ref{equation:N-Body Schrodinger with switchable
trap}. To be specific, we establish the following theorem in this paper.

\begin{theorem}
\label{Theorem:3D BEC}(Main Theorem) Let $\left\{ \gamma _{N}^{(k)}\right\} $
be the family of marginal densities associated with $\psi _{N}$, the
solution of the N-body Schr\"{o}dinger equation \ref{equation:N-Body
Schrodinger with switchable trap} for some $\beta \in \left( 0,\frac{2}{7}%
\right] $. Assume that the pair interaction $V\ $is a nonnegative $L^{1}(%
\mathbb{R}^{3})\cap H^{2}(\mathbb{R}^{3})\cap W^{2,\infty }(\mathbb{R}^{3})$
spherically symmetric function. Moreover, suppose the initial datum of
equation \ref{equation:N-Body Schrodinger with switchable trap} verifies the
following the conditions:

(a) the initial datum is normalized i.e. 
\begin{equation*}
\left\Vert \psi _{N}(0)\right\Vert _{L^{2}}=1,
\end{equation*}

(b) the initial datum is asymptotically factorized%
\begin{equation}
\lim_{N\rightarrow \infty }\limfunc{Tr}\left\vert \gamma
_{N}^{(1)}(0,x_{1};x_{1}^{\prime })-\phi _{0}(x_{1})\overline{\phi _{0}}%
(x_{1}^{\prime })\right\vert =0,  \label{eqn:asym factorized}
\end{equation}%
for some one particle wave function $\phi _{0}\in H^{1}\left( \mathbb{R}%
^{3}\right) $.

(c) the initial datum has bounded energy per particle i.e. 
\begin{equation*}
\sup_{N}\frac{1}{N}\left\langle \psi _{N}(0),H_{N}\psi _{N}(0)\right\rangle
<\infty ,
\end{equation*}%
where the Hamiltonian $H_{N}$ is%
\begin{equation*}
H_{N}=\sum_{j=1}^{N}\left( -\frac{1}{2}\triangle _{x_{j}}+\omega ^{2}\frac{%
\left\vert x_{j}\right\vert ^{2}}{2}\right) +\frac{1}{N}\sum_{1\leqslant
i<j\leqslant N}N^{3\beta }V(N^{\beta }\left( x_{i}-x_{j}\right) ).
\end{equation*}%
Then $\forall t\geqslant 0$, $\forall k\geqslant 1$, we have the convergence
in the trace norm that%
\begin{equation*}
\lim_{N\rightarrow \infty }\limfunc{Tr}\left\vert \gamma _{N}^{(k)}(t,%
\mathbf{x}_{k};\mathbf{x}_{k}^{\prime })-\dprod_{j=1}^{k}\phi (t,x_{j})%
\overline{\phi }(t,x_{j}^{\prime })\right\vert =0,
\end{equation*}%
where $\phi (t,x)$ is the solution to the 3D cubic NLS with a quadratic trap%
\begin{eqnarray}
i\partial _{t}\phi &=&\left( -\frac{1}{2}\triangle _{x}+\omega ^{2}\frac{%
\left\vert x\right\vert ^{2}}{2}\right) \phi +b_{0}\left\vert \phi
\right\vert ^{2}\phi \text{ in }\mathbb{R}^{3+1}
\label{equation:cubicNLSwithSwitch} \\
\phi (0,x) &=&\phi _{0}(x)  \notag
\end{eqnarray}%
and the coupling constant $b_{0}=\int_{\mathbb{R}^{3}}V(x)dx.$
\end{theorem}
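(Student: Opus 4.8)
The plan is to follow the by-now-standard BBGKY/Gross--Pitaevskii hierarchy approach, adapted to the Hermite operator $L=-\tfrac12\triangle+\tfrac{\omega^2}{2}|x|^2$. First I would write down the BBGKY hierarchy satisfied by the marginals $\{\gamma_N^{(k)}\}$ and derive, from the energy assumption (c) together with the nonnegativity of $V$, a priori bounds of the form $\sup_N \mathrm{Tr}\,(\prod_{j=1}^k S_j^{(1)})\gamma_N^{(k)}(t)\le C^k$, where $S_j^{(1)}$ is built from the one-particle Hermite energy $(1+L_{x_j})^{1/2}$; here one needs the analogue of the Lieb--Yngvason-type energy estimate and the operator inequality $H_N\ge (1-\varepsilon)\sum_j L_{x_j}+$ (lower order), which for a quadratic trap requires a little care but goes through since $|x|^2$ is nonnegative. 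These bounds give weak-$*$ compactness of the sequence $\gamma_N^{(k)}$ (along a subsequence) to a limiting hierarchy $\{\gamma^{(k)}\}$, and a standard argument (approximation of $\delta$, control of the difference between the BBGKY and GP hierarchies in a suitable Sobolev-type norm with the trap) shows any limit point solves the infinite Gross--Pitaevskii hierarchy
\begin{equation*}
i\partial_t \gamma^{(k)} = \sum_{j=1}^k [L_{x_j},\gamma^{(k)}] + b_0 \sum_{j=1}^k \mathrm{Tr}_{k+1}[\delta(x_j-x_{k+1}),\gamma^{(k+1)}].
\end{equation*}

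Second, since the factorized ansatz $\gamma^{(k)}=|\phi\rangle\langle\phi|^{\otimes k}$ with $\phi$ solving \eqref{equation:cubicNLSwithSwitch} is visibly a solution of this GP hierarchy with the correct initial data $|\phi_0\rangle\langle\phi_0|^{\otimes k}$ (using assumption (b)), the theorem reduces to \emph{uniqueness} for the GP hierarchy. This is exactly where the Klainerman--Machedon space-time bound enters: I would prove the conjectured estimate
\begin{equation*}
\left\| S^{(k+1)} B_{j,k+1} \gamma^{(k+1)} \right\|_{L^2_t L^2_{\mathbf{x},\mathbf{x}'}} \le C \left\| S^{(k+1)} \gamma^{(k+1)} \right\|_{L^2_t H^1}
\end{equation*}
for the Duhamel terms of the trap-modified hierarchy, and then run the Klainerman--Machedon board-game argument to reorganize the iterated Duhamel expansion into $O(C^n)$ many terms, so that the expansion can be summed and uniqueness follows. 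The mechanism for establishing the collapsing/space-time estimate is to adapt and simplify the Chen--Pavlovi\'c argument \cite{TChenAndNPSpace-Time}, replacing the flat Schr\"odinger propagator $e^{it\triangle}$ by the Mehler/Hermite propagator $e^{-itL}$; the key point is that the relevant bilinear and $X_{s,b}$-type estimates for $e^{-itL}$ can be recovered from the Euclidean ones, for instance via the lens transform / Mehler kernel identities that conjugate $e^{-itL}$ to $e^{it\triangle}$ on a time interval, which is precisely why the quadratic trap is the borderline case (by \cite{Yajima}, Strichartz fails beyond quadratic).

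The main obstacle I expect is establishing this space-time bound \emph{uniformly in the trap}: the Mehler kernel has time-dependent (and, at multiples of $\pi/\omega$, singular) coefficients, so the lens transform only works locally in time, and one must either patch together local-in-time estimates with constants that do not blow up over a fixed interval $[0,T]$, or redo the Chen--Pavlovi\'c dyadic/Littlewood--Paley and Klainerman--Machedon-type analysis directly with the Hermite spectral decomposition, tracking how the harmonic-oscillator spectral projectors interact with the restriction operator $B_{j,k+1}$. A secondary technical point is that, because $[L_{x_j}, S_i^{(1)}]\neq 0$ only through lower-order terms while $|x|^2$ is unbounded, one must be careful that the energy bounds and the trace-norm convergence are compatible — concretely, propagating the regularity $\prod S_j^{(1)}\gamma_N^{(k)}$ in time and upgrading the resulting weak convergence to trace-norm convergence via the standard argument (e.g. as in Erd\H{o}s--Schlein--Yau) using that the limit is a rank-one projection. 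The range $\beta\in(0,2/7]$ is exactly what the collapsing estimate can tolerate after the adaptation, mirroring the $\beta\in(0,1/4)$ threshold in the trap-free case but gaining a bit from the simplification.
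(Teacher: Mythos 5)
Your outline follows the same overall strategy as the paper --- BBGKY hierarchy, compactness, Klainerman--Machedon board game plus collapsing estimate, and a Chen--Pavlovi\'{c}-type argument \cite{TChenAndNPSpace-Time} for the space-time bound --- but it differs in where the lens transform is used, and this matters for feasibility. The paper never proves collapsing or $X_{s,b}$-type estimates for the Hermite propagator: it conjugates the hierarchies themselves (Lemmas \ref{Lemma:BBGKY hierarchy under GLens} and \ref{Lemma:GP hierarchy under GLens}), so that on a window $t\in\left[0,T_{0}\right]$ with $T_{0}<\pi /(2\omega )$ the whole analysis (energy bound via Proposition \ref{Proposition:GettingTheE-S-YEnergyBound}, compactness, the space-time bound of Theorem \ref{Theorem:Space-Time Bound for GlensBBGKY}, and uniqueness) is carried out for the \emph{free} propagator with only a harmless bounded time-dependent coupling $g(\tau )$; the conclusion is pulled back because the lens transform preserves the trace norm (Lemma \ref{Lemma:TraceNormPreservation}), and the window restriction is removed by time-translation invariance of the $N$-body equation --- exactly the ``patching'' you anticipate, but done at the level of the result rather than of estimates. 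Your alternative of working directly with the Hermite propagator (Mehler kernel or spectral decomposition) would additionally have to contend with the fact that the weights $\prod_{j}\left\vert \nabla _{x_{j}}\right\vert \left\vert \nabla _{x_{j}^{\prime }}\right\vert$ do not commute with $e^{-itL}$, a difficulty the paper's route avoids entirely.

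Two concrete points in your sketch need repair. First, the a priori bound $\sup_{t}\operatorname{Tr}\prod_{j}(1+L_{x_{j}})\gamma _{N}^{(k)}(t)\leqslant C^{k}$ does \emph{not} follow from assumption (c) alone: it requires $\left\langle \psi _{N}(0),H_{N}^{k}\psi _{N}(0)\right\rangle \leqslant C^{k}N^{k}$ for every $k$, hence the standard regularization $\chi (\kappa H_{N}/N)\psi _{N}(0)$ of the initial data together with a comparison of the corresponding marginals; this is precisely the reduction of Theorem \ref{Theorem:3D BEC} to Theorem \ref{Theorem:SmoothVersionofTheMainTheorem} in Section \ref{Section:ProofOfTheMainTheorem}, and it is absent from your outline. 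Second, the restriction $\beta \in \left( 0,2/7\right]$ does not come from the collapsing estimate (the analogue needed here, Theorem \ref{Theorem:3*3d}, holds for all $\beta$, with $V_{N}$ replacing the delta function); it comes from the potential part of the iterated Duhamel expansion of the BBGKY solution, Proposition \ref{proposition:BoundingThePotentialTerms}, where derivatives falling on $V_{N}/N$ cost powers $N^{3\beta -1}$, $N^{4\beta -1-3\beta /p}$, and these must stay bounded. Relatedly, the Klainerman--Machedon bound has to be established for the BBGKY marginals uniformly in $N$ (with the coupling level taken to be $\ln N$ in the interaction part), not merely postulated for limit points of the infinite hierarchy, and your displayed ``space-time bound'' conflates this a priori bound with the collapsing estimate for the free flow.
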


For the $\omega =0$ case, the approach which uses the marginal densities $%
\left\{ \gamma _{N}^{(k)}\right\} $ for the dynamics problem has been proven
to be successful in the fundamental papers \cite{E-E-S-Y1, E-Y1,
E-S-Y1,E-S-Y2,E-S-Y4, E-S-Y5, E-S-Y3} by Elgart, Erd\"{o}s, Schlein, and
Yau. As pointed out in \cite{E-S-Y3}, their work corresponds to the
evolution after the removal of the traps. Motivated by a kinetic formulation
of Spohn \cite{Spohn}, their program consists of two principal parts: in one
part, they prove that an appropriate limit of the sequence $\left\{ \gamma
_{N}^{(k)}\right\} $ as $N\rightarrow \infty $ solves the Gross-Pitaevskii
hierarchy 
\begin{equation}
\left( i\partial _{t}+\frac{1}{2}\triangle _{\mathbf{x}_{k}}-\frac{1}{2}%
\triangle _{\mathbf{x}_{k}^{\prime }}\right) \gamma
^{(k)}=b_{0}\sum_{j=1}^{k}B_{j,k+1}\left( \gamma ^{(k+1)}\right) ,\text{ }%
k=1,...,n,...  \label{equation:Gross-Pitaevskii hiearchy without a trap}
\end{equation}%
where $B_{j,k+1}=B_{j,k+1}^{1}-B_{j,k+1}^{2}$ and%
\begin{eqnarray*}
&&B_{j,k+1}^{1}\left( \gamma ^{(k+1)}\right) (t,\mathbf{x}_{k};\mathbf{x}%
_{k}^{\prime })=\int \int \delta (x_{j}-x_{k+1})\delta
(x_{k+1}-x_{k+1}^{\prime })\gamma ^{(k+1)}(t,\mathbf{x}_{k+1};\mathbf{x}%
_{k+1}^{\prime })dx_{k+1}dx_{k+1}^{\prime }, \\
&&B_{j,k+1}^{2}\left( \gamma ^{(k+1)}\right) (t,\mathbf{x}_{k};\mathbf{x}%
_{k}^{\prime })=\int \int \delta (x_{j}^{\prime }-x_{k+1})\delta
(x_{k+1}-x_{k+1}^{\prime })\gamma ^{(k+1)}(t,\mathbf{x}_{k+1};\mathbf{x}%
_{k+1}^{\prime })dx_{k+1}dx_{k+1}^{\prime };
\end{eqnarray*}%
in another part, they show that hierarchy $\ref{equation:Gross-Pitaevskii
hiearchy without a trap}$ has a unique solution which is therefore a
completely factorized state. However, as remarked by Terence Tao, the
uniqueness theory for hierarchy $\ref{equation:Gross-Pitaevskii hiearchy
without a trap}$ is surprisingly delicate due to the fact that it is a
system of infinitely many coupled equations over an unbounded number of
variables. In \cite{KlainermanAndMachedon}, by assuming a space-time bound
on the limit of $\left\{ \gamma _{N}^{(k)}\right\} $, Klainerman and
Machedon gave another proof of the uniqueness in \cite{E-S-Y2} through a
collapsing estimate originated from the ordinary multilinear Strichartz
estimates in their null form paper \cite{KlainermanMachedonNullForm} and a
board game argument inspired by the Feynman graph argument in \cite{E-S-Y2}.

Later, the method in Klainerman and Machedon \cite{KlainermanAndMachedon}
was taken up by Kirkpatrick, Schlein, and Staffilani \cite{Kirpatrick}, who
studied the corresponding problem in 2D; by Chen and Pavlovi\'{c} \cite%
{TChenAndNpGP1, TChenAndNP}, who considered the 1D and 2D 3-body interaction
problem and the general existence theory of hierarchy $\ref%
{equation:Gross-Pitaevskii hiearchy without a trap}$; and by the author \cite%
{ChenAnisotropic}, who investigated the trapping problem in 2D. In \cite%
{TCNPNT, TCNPNT1}, Chen, Pavlovi\'{c} and Tzirakis worked out the virial and
Morawetz identities for hierarchy \ref{equation:Gross-Pitaevskii hiearchy
without a trap}.

Recently, for the 3D case without traps, Chen and Pavlovi\'{c} \cite%
{TChenAndNPSpace-Time} proved that, for $\beta \in (0,1/4)$, the limit of $%
\left\{ \gamma _{N}^{(k)}\right\} $ actually satisfies the space-time bound
assumed by Klainerman and Machedon \cite{KlainermanAndMachedon} as $%
N\rightarrow \infty $. This has been a well-known open problem in the field.
Moreover, they showed that the solution to the BBGKY hierarchy converges
strongly to the solution to hierarchy \ref{equation:Gross-Pitaevskii
hiearchy without a trap} in $H^{1}$ without assuming asymptotically
factorized initial datum. In this paper, we adapt and simplify their
argument in establishing the Klainerman-Machedon space-time bound. We also
extend the range of $\beta $ from $\left( 0,1/4\right) $ in Chen and Pavlovi%
\'{c} \cite{TChenAndNPSpace-Time} to $\left( 0,2/7\right] .$ Through simple
functional analysis, we obtain a convergence result without assuming
asymptotically factorized initial datum as well. (See Corollary \ref%
{Corollary:general data convergence}) But we are not claiming it as a main
result in this paper. We compare our result and the one in Chen and Pavlovi%
\'{c} \cite{TChenAndNPSpace-Time} briefly in Section \ref%
{subsection:comparison} below.

\subsection{Comparison with Chen and Pavlovi\'{c} \protect\cite%
{TChenAndNPSpace-Time}\label{subsection:comparison}}

For comparison purpose, we transform Theorem \ref%
{Theorem:SmoothVersionofTheMainTheorem} in Section \ref%
{Section:ProofOfTheMainTheorem}, which implies Theorem \ref{Theorem:3D BEC},
into the general convergence result below without the assumption of
asymptotically factorized initial data since the result in Chen and Pavlovi%
\'{c} \cite{TChenAndNPSpace-Time} is under the same regularity setting
(condition \ref{condition:high energy bound}) as Theorem \ref%
{Theorem:SmoothVersionofTheMainTheorem}.

\begin{corollary}
\label{Corollary:general data convergence}(For comparison purpose only, not
a main result.) Let $\gamma _{N}^{(k)}(t,\mathbf{x}_{k};\mathbf{x}%
_{k}^{\prime })$ be the k-marginal density associated with the solution of
the N-body Schr\"{o}dinger equation \ref{equation:N-Body Schrodinger with
switchable trap} for some $\beta \in \left( 0,2/7\right] $\ and $\gamma
^{(k)}(t,\mathbf{x}_{k};\mathbf{x}_{k}^{\prime })$ be the solution to the
Gross-Pitaevskii hierarchy with a quadratic trap $\left( \text{hierarchy }%
\ref{hierarchy:TheGPHierarchyWithTraps}\right) $. Assume the initial data of 
$\gamma _{N}^{(k)}$ and $\gamma ^{(k)}$ satisfy the following conditions

($a^{\prime }$)%
\begin{equation*}
\lim_{N\rightarrow \infty }\limfunc{Tr}\left\vert \gamma _{N}^{(k)}(0,%
\mathbf{x}_{k};\mathbf{x}_{k}^{\prime })-\gamma ^{(k)}(0,\mathbf{x}_{k};%
\mathbf{x}_{k}^{\prime })\right\vert =0,
\end{equation*}

($b^{\prime }$)%
\begin{equation}
\left\langle \psi _{N}(0),H_{N}^{k}\psi _{N}(0)\right\rangle \leqslant
C^{k}N^{k}.  \label{condition:high energy bound}
\end{equation}%
Then we have the convergence of the evolution in trace norm%
\begin{equation*}
\lim_{N\rightarrow \infty }\limfunc{Tr}\left\vert \gamma _{N}^{(k)}(t,%
\mathbf{x}_{k};\mathbf{x}_{k}^{\prime })-\gamma ^{(k)}(t,\mathbf{x}_{k};%
\mathbf{x}_{k}^{\prime })\right\vert =0.
\end{equation*}
\end{corollary}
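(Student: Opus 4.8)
Following the corollary statement, here is the approach I would take.

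The plan is to carry out the Elgart--Erd\"{o}s--Schlein--Yau scheme together with the Klainerman--Machedon uniqueness argument, everywhere replacing the free Hamiltonian $\mathcal{H}_{0}:=-\tfrac{1}{2}\triangle$ by the one-particle Hermite operator $\mathcal{H}:=-\tfrac{1}{2}\triangle+\tfrac{\omega^{2}}{2}\left\vert x\right\vert ^{2}$ and the free propagator $e^{-it\mathcal{H}_{0}}$ by the Hermite propagator $e^{-it\mathcal{H}}$, whose integral kernel is the Mehler kernel. First I would use the high-energy bound $(b^{\prime})$, i.e.\ condition \ref{condition:high energy bound}, together with the nonnegativity of $V$, to derive the uniform-in-$N$ a priori estimate
\[
\limfunc{Tr}\left(\prod_{j=1}^{k}\left(1+\mathcal{H}_{x_{j}}\right)\gamma_{N}^{(k)}(t)\right)\leqslant C^{k},\qquad t\in\mathbb{R},
\]
the trapped analogue of the energy estimates of \cite{E-S-Y2}, with $1+\mathcal{H}$ playing the role of $1-\triangle$; by weak-$\ast$ lower semicontinuity the prescribed $\gamma^{(k)}$ inherits the same bound at $t=0$ from $(a^{\prime})$. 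These estimates give compactness of the BBGKY sequence $\{\gamma_{N}^{(k)}\}_{k\geqslant 1}$ in the weak-$\ast$ topology of trace-class operators, uniformly on compact time intervals, so one may pass to a subsequential limit $\{\gamma_{\infty}^{(k)}\}_{k\geqslant 1}$.

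Next I would show that every such limit point solves the Gross--Pitaevskii hierarchy with a quadratic trap (hierarchy \ref{hierarchy:TheGPHierarchyWithTraps}): in the Duhamel-expanded BBGKY hierarchy the linear Hermite flow passes to the limit by continuity, while the interaction kernel $N^{3\beta }V(N^{\beta }(x_{i}-x_{j}))\to b_{0}\,\delta (x_{i}-x_{j})$ with $b_{0}=\int V$, all error terms being absorbed by the a priori bound exactly as in \cite{E-S-Y2,KlainermanAndMachedon}. By $(a^{\prime})$ the initial data of $\gamma_{\infty}^{(k)}$ coincide with those of the prescribed solution $\gamma^{(k)}$.

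The heart of the matter, and the main obstacle, is to establish the Klainerman--Machedon space-time bound
\[
\left\Vert B_{j,k+1}\,\gamma_{\infty}^{(k+1)}\right\Vert _{L_{[0,T]}^{2}L^{2}(\mathbb{R}^{3k}\times \mathbb{R}^{3k})}\leqslant C^{k}
\]
in the weighted norm built from $\prod_{j}(1+\mathcal{H}_{x_{j}})^{1/2}(1+\mathcal{H}_{x_{j}^{\prime }})^{1/2}$, for the limiting hierarchy. Following \cite{TChenAndNPSpace-Time}, I would prove it first for the BBGKY solutions, uniformly in $N$, and then pass to the limit. This needs two ingredients: (i) a collapsing multilinear estimate for $e^{-it\mathcal{H}}$ replacing the Klainerman--Machedon estimate for the free flow, obtained either directly from the Mehler kernel or, more transparently, by conjugating $e^{-it\mathcal{H}}$ to $e^{-it\mathcal{H}_{0}}$ through the lens transform on a rescaled time interval and invoking the Strichartz structure available up to quadratic confinement (cf.\ \cite{Yajima}); and (ii) the Klainerman--Machedon board game, which reorganizes the coupled Duhamel expansion into $O(C^{k})$ groups of terms, each estimated by means of (i). The delicate points are the commutators between the Hermite weights $\prod_{j}(1+\mathcal{H}_{x_{j}})$ and the boundary operators $B_{j,k+1}$, the bookkeeping of the $N^{\beta}$-concentration of the potential, and keeping the constants of size $C^{k}$ through the board game; the extra flexibility of an $L_{t}^{2}L^{2}$ bound, rather than a pointwise-in-time estimate, is what pushes the admissible range from $\beta \in (0,1/4)$ up to $\beta \in (0,2/7]$. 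This is in essence the content of Theorem \ref{Theorem:SmoothVersionofTheMainTheorem}; the passage from the free flow to the Hermite flow and the enlargement of the $\beta$-range are the genuinely new work.

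Finally, with the space-time bound in hand I would invoke the Klainerman--Machedon uniqueness theorem for the trapped hierarchy --- the adaptation of \cite{KlainermanAndMachedon} to $\mathcal{H}$ via the same collapsing estimate as in (i) --- according to which hierarchy \ref{hierarchy:TheGPHierarchyWithTraps} with given initial data has at most one solution obeying that bound. Since the prescribed $\gamma^{(k)}$ lies in this uniqueness class and, by $(a^{\prime})$, shares its initial data with $\gamma_{\infty}^{(k)}$, every subsequential limit equals $\gamma^{(k)}$; hence the whole sequence $\{\gamma_{N}^{(k)}\}$ converges weak-$\ast$ to $\gamma^{(k)}$. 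Because each $\gamma^{(k)}(t)$ is a nonnegative trace-class operator whose trace is conserved and equals $\lim_{N}\limfunc{Tr}\gamma_{N}^{(k)}(t)$ by $(a^{\prime})$, a standard argument (Gr\"{u}mm's convergence theorem, as used in \cite{E-S-Y2}) upgrades weak-$\ast$ convergence to convergence in trace norm, which is the assertion of the corollary. The same chain of reasoning also recovers Theorem \ref{Theorem:3D BEC}: when $\gamma^{(k)}(0)=\left(\left\vert \phi _{0}\right\rangle \left\langle \phi _{0}\right\vert \right)^{\otimes k}$, the factorized ansatz $\prod_{j=1}^{k}\phi (t,x_{j})\overline{\phi }(t,x_{j}^{\prime })$ solves \ref{hierarchy:TheGPHierarchyWithTraps} and is singled out by uniqueness.
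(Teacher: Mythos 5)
Your proposal is correct in substance, and its final step is essentially the paper's own proof of this corollary: Appendix I likewise upgrades weak* convergence to trace-norm convergence by combining the positivity of the marginals $\gamma _{N}^{(k)}$ (hence of their weak limit), conservation of the trace along both hierarchies together with condition ($a^{\prime }$), and Lemma \ref{Theorem:Simons'} (Simon's Theorem 2.20, which implies the Gr\"{u}mm theorem you invoke). Where you genuinely diverge is in how you reach the weak* convergence itself: you propose to run the Erd\"{o}s--Schlein--Yau/Klainerman--Machedon scheme directly on the Hermite side, with weights built from $1+\mathcal{H}$, $\mathcal{H}=-\frac{1}{2}\triangle +\frac{\omega ^{2}}{2}\left\vert x\right\vert ^{2}$, and a collapsing estimate for the Mehler propagator, whereas the paper conjugates the BBGKY and Gross--Pitaevskii hierarchies to the free-Laplacian side through the lens transform (picking up only the harmless time-dependent factor $g(\tau )$), proves the energy bound, the space-time bound for $\beta \in \left( 0,2/7\right] $ and uniqueness there with the flat-space Klainerman--Machedon machinery, and transfers the conclusion back using the fact that the lens transform preserves trace norms (Lemma \ref{Lemma:TraceNormPreservation}). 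Your route is plausible---indeed you note the Hermite collapsing estimate could itself be obtained by the lens transform---but it would force you to redo the weight/commutator bookkeeping with $\mathcal{H}$ in the board game and the energy estimates, while the paper's reduction reuses the Laplacian estimates verbatim and only needs the trace-norm isometry of the transform. Three small corrections: the Klainerman--Machedon bound used here is an $L_{t}^{1}L_{\mathbf{x},\mathbf{x}^{\prime }}^{2}$ bound rather than $L_{t}^{2}$; the extension to $\beta \in \left( 0,2/7\right] $ comes from the estimate of the potential part of the Duhamel expansion (Proposition \ref{proposition:BoundingThePotentialTerms}), not merely from the space-time (as opposed to pointwise-in-time) formulation; and the nonnegativity of $\gamma ^{(k)}(t)$ is not an assumption of the corollary, so it should be derived, as in the paper, from its identification (via uniqueness) with the weak limit of the nonnegative operators $\gamma _{N}^{(k)}(t)$, with trace conservation on the limiting hierarchy justified as in \cite{TCNPNT}.
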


\begin{proof}
This result for the non-trap case should be credited to Erd\"{o}s, Schlein
and Yau though they did not state it as one of their main theorems. They
mentioned it on page 297 of their paper \cite{E-S-Y3}. Once we have
established Theorem \ref{Theorem:SmoothVersionofTheMainTheorem}, its proof
together with some simple functional analysis proves this corollary. We
include the proof in Appendix I (Section \ref{Section:Appendix-0}) for
completeness.
\end{proof}

Briefly, the main result (Theorem 3.1) in Chen and Pavlovi\'{c} \cite%
{TChenAndNPSpace-Time} is the following.

\begin{theorem}
\cite{TChenAndNPSpace-Time} Let $\omega =0$. Suppose $\gamma _{N}^{(k)}(t,%
\mathbf{x}_{k};\mathbf{x}_{k}^{\prime })$ is the k-marginal density
associated with the solution of the N-body Schr\"{o}dinger equation \ref%
{equation:N-Body Schrodinger with switchable trap} for some $\beta \in
(0,1/4)$ and $\gamma ^{(k)}(t,\mathbf{x}_{k};\mathbf{x}_{k}^{\prime })$ is
the solution to the ordinary Gross-Pitaevskii hierarchy without traps$.$ If
the initial data of $\gamma _{N}^{(k)}$ and $\gamma ^{(k)}$ satisfy the
following conditions

($a^{\prime \prime }$)%
\begin{equation*}
\lim_{N\rightarrow \infty }\left\Vert \left( \dprod_{j=1}^{k}\left(
1-\triangle _{x_{j}}\right) ^{\frac{1}{2}}\left( 1-\triangle _{x_{j}^{\prime
}}\right) ^{\frac{1}{2}}\right) \left( \gamma _{N}^{(k)}(0,\mathbf{x}_{k};%
\mathbf{x}_{k}^{\prime })-\gamma ^{(k)}(0,\mathbf{x}_{k};\mathbf{x}%
_{k}^{\prime })\right) \right\Vert _{L^{2}(d\mathbf{x}_{k}d\mathbf{x}%
_{k}^{\prime })}=0,
\end{equation*}

($b^{\prime \prime }$)%
\begin{equation*}
\left\langle \psi _{N}(0),H_{N}^{k}\psi _{N}(0)\right\rangle \leqslant
C^{k}N^{k},
\end{equation*}%
then we have the convergence of the evolution in $H^{1}$ norm%
\begin{equation*}
\lim_{N\rightarrow \infty }\left\Vert \left( \dprod_{j=1}^{k}\left(
1-\triangle _{x_{j}}\right) ^{\frac{1}{2}}\left( 1-\triangle _{x_{j}^{\prime
}}\right) ^{\frac{1}{2}}\right) \left( \gamma _{N}^{(k)}(t,\mathbf{x}_{k};%
\mathbf{x}_{k}^{\prime })-\gamma ^{(k)}(t,\mathbf{x}_{k};\mathbf{x}%
_{k}^{\prime })\right) \right\Vert _{L^{2}(d\mathbf{x}_{k}d\mathbf{x}%
_{k}^{\prime })}=0.
\end{equation*}
\end{theorem}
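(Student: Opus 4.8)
The plan is to run the Klainerman--Machedon uniqueness scheme, but to obtain the crucial space-time bound \emph{uniformly in $N$} directly at the level of the BBGKY hierarchy rather than only for its limit. First I would write the BBGKY hierarchy for $\left\{ \gamma_N^{(k)}\right\}$ and put it in Duhamel (integral) form, involving the free two-parameter propagators $U^{(k)}(t)=e^{\frac{it}{2}\left(\triangle _{\mathbf{x}_k}-\triangle _{\mathbf{x}_k^{\prime }}\right)}$ and the rescaled potential $V_N(x)=N^{3\beta}V(N^{\beta}x)$, which tends to $b_{0}\delta(x)$ as $N\to\infty$. From hypothesis $(b^{\prime\prime})$ together with the nonnegativity of $V$ I would deduce the a priori Sobolev bounds
\begin{equation*}
\left\Vert \left( \dprod_{j=1}^{k}\left( 1-\triangle _{x_{j}}\right) ^{\frac{1}{2}}\left( 1-\triangle _{x_{j}^{\prime }}\right) ^{\frac{1}{2}}\right) \gamma _{N}^{(k)}(t)\right\Vert _{L^{2}}\leqslant C^{k},
\end{equation*}
uniformly in $t$ and $N$; here the positivity of the interaction is what lets $H_N^{k}$ dominate the product of the one-particle kinetic energies.

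The heart of the argument is a collapsing/multilinear estimate for the free flow, roughly
\begin{equation*}
\left\Vert B_{j,k+1}\,U^{(k+1)}(t)f\right\Vert _{L^{2}([0,T];L^{2})}\leqslant C\left\Vert \left( 1-\triangle _{x_{k+1}}\right) ^{\frac{1}{2}}\left( 1-\triangle _{x_{k+1}^{\prime }}\right) ^{\frac{1}{2}}f\right\Vert _{L^{2}},
\end{equation*}
together with its $N$-dependent analogue in which $B_{j,k+1}$ is replaced by the regularized interaction operator built from $V_N$. Feeding these estimates into the Duhamel expansion of $B_{j,k+1}\gamma_N^{(k+1)}$, and reorganizing the resulting $\sim k!$ terms by the Klainerman--Machedon board game so that the combinatorial loss is only $C^{k}$, I would obtain the uniform space-time bound $\left\Vert B_{j,k+1}\gamma_N^{(k+1)}\right\Vert _{L^{2}([0,T];L^{2})}\leqslant C^{k}$ on a small but $N$-independent time interval, then iterate to cover all time. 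Balancing the concentration scale $N^{-\beta}$ of $V_N$ against the smoothing supplied by the collapsing estimate is precisely what constrains $\beta$; in \cite{TChenAndNPSpace-Time} this forces $\beta\in(0,1/4)$.

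With the uniform bounds established, weak-$*$ compactness provides a limiting hierarchy $\Gamma=\left\{ \gamma^{(k)}\right\}$, and weak lower semicontinuity transfers both the Sobolev bounds and the space-time bound to $\Gamma$. Passing to the limit in the Duhamel form of the BBGKY hierarchy, using $V_N\to b_{0}\delta$ and the uniform estimates to control the error terms, shows $\Gamma$ solves the Gross--Pitaevskii hierarchy with the prescribed initial data. Since $\Gamma$ satisfies the Klainerman--Machedon space-time bound, the board-game uniqueness theorem identifies it with the given solution $\left\{ \gamma^{(k)}\right\}$, so the whole sequence converges. Finally I would upgrade weak convergence to strong $H^{1}$ convergence: hypothesis $(a^{\prime\prime})$ gives convergence of the initial $H^{1}$-type norms, an energy argument propagates this, and weak convergence plus convergence of norms yields convergence in $L^{2}(d\mathbf{x}_{k}d\mathbf{x}_{k}^{\prime})$ after applying $\dprod_{j=1}^{k}\left( 1-\triangle _{x_{j}}\right) ^{\frac{1}{2}}\left( 1-\triangle _{x_{j}^{\prime }}\right) ^{\frac{1}{2}}$.

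The main obstacle is the uniform-in-$N$ space-time bound in the second step. Unlike the GP hierarchy, the BBGKY hierarchy carries an extra interaction term, its Duhamel expansion proliferates factorially many contributions, and the potential $V_N$ degenerates to a delta function as $N\to\infty$; taming all three at once, so that one still sees the $C^{k}$ growth needed for summation, is exactly the well-known open problem resolved in \cite{TChenAndNPSpace-Time}, and it is where the collapsing estimates and the board-game combinatorics carry the weight.
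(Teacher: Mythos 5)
You have correctly identified the overall architecture behind this statement, which is the Chen--Pavlovi\'{c} result quoted here from \cite{TChenAndNPSpace-Time}; the present paper does not reprove it but adapts the same machinery (a priori energy bounds from $(b'')$, Klainerman--Machedon collapsing estimate and board game, a uniform-in-$N$ space-time bound at the BBGKY level, compactness plus the KM uniqueness theorem) to the trapped, trace-norm setting of Sections \ref{Section:ProofOfSmooth3DBEC} and \ref{Section:Space-TimeBound}. At that level of description your outline is consistent with the actual route.

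However, there is a genuine gap exactly where the theorem is hard. Your second step asserts that feeding the collapsing estimates into the Duhamel expansion and invoking the board game yields $\left\Vert B_{j,k+1}\gamma _{N}^{(k+1)}\right\Vert _{L^{2}([0,T];L^{2})}\leqslant C^{k}$ uniformly in $N$, but after the board game the expansion still terminates in a remainder containing the \emph{interacting} density $\gamma _{N}^{(n+1)}$, to which the collapsing estimate does not apply (it is not a free evolution), and whose crude bound costs a positive power of $N$ (in this paper, $N^{5\beta /2}$ in Section \ref{SubSection:InteractionPartOfBBGKY}). The resolution is to let the coupling level depend on $N$, taking $n\sim \ln N$ so that the smallness factor $(CT^{1/2})^{n}$ defeats the $N$-power -- precisely the device this paper explicitly credits to \cite{TChenAndNPSpace-Time} -- and your sketch never supplies it; indeed your closing paragraph defers this ``main obstacle'' back to the cited paper, which is the step a proof must actually carry out. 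Likewise, the potential part $\frac{1}{N}\sum_{1\leqslant i<j\leqslant k}V_{N}$ requires its own estimate (the analogue of Proposition \ref{proposition:BoundingThePotentialTerms}), and it is there, through explicit bounds on $\Vert V_{N}/N\Vert _{L^{\infty }}$, $\Vert V_{N}^{\prime }/N\Vert _{L^{p}}$, etc., that the restriction on $\beta $ concretely enters; ``balancing the concentration scale against the smoothing'' is not yet an argument. Finally, your upgrade to strong $H^{1}$ convergence is unsupported: the norm $\Vert \prod_{j}(1-\triangle _{x_{j}})^{1/2}(1-\triangle _{x_{j}^{\prime }})^{1/2}\,\gamma ^{(k)}\Vert _{L^{2}}$ is not conserved by either hierarchy, so ``an energy argument propagates'' the convergence of norms does not stand as written; in \cite{TChenAndNPSpace-Time} the strong convergence of the evolution is obtained by directly estimating the difference of the two Duhamel expansions in this norm, using hypothesis $(a'')$ for the driven terms and the $N$-dependent coupling level for the remainders, not by a weak-convergence-plus-norm-convergence argument.
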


One can then easily tell the following: on the one hand, the result in this
work allows a quadratic trap and a larger range of $\beta $ in the analysis
(these are also the main novelty and the main technical improvement of this
paper); on the other hand, the result in \cite{TChenAndNPSpace-Time} yields
a stronger convergence ($H^{1}$ convergence) when the initial data admits a
stronger convergence. The main purpose of this paper is to justify the
mean-field approximation and offer a rigorous derivation of the 3D cubic NLS
with a quadratic trap (Theorem \ref{Theorem:3D BEC}). Thus we establish the
convergence of probability densities which is the trace norm convergence in
this context. The Chen-Pavlovi\'{c} result is crucial for their work on the
Cauchy problem of Gross-Pitaevskii hierarchies. Before we delve into the
proofs, we remark that taking the coupling level $k$ to be $\ln N$ in
Section \ref{SubSection:InteractionPartOfBBGKY} is exactly the place where
we follow the original idea of Chen and Pavlovi\'{c} \cite%
{TChenAndNPSpace-Time}.

\subsection{The Anisotropic Version of the Main Theorem}

It is of interest to use anisotropic traps in laboratory experiments. (See,
for example, \cite{Philips}.) Our proof for Theorem \ref{Theorem:3D BEC}
also applies to the case with anisotropic traps. In fact, we have the
following theorem.

\begin{theorem}
\label{Theorem:Anisotropic}Define a 3*3 diagonal matrix $Q$ whose diagonal
entries are $(\omega _{1}^{2},\omega _{2}^{2},\omega _{3}^{2})$. Let $%
\left\{ \gamma _{N}^{(k)}\right\} $ be the family of marginal densities
associated with $\psi _{N}$, the solution of the N-body Schr\"{o}dinger
equation 
\begin{equation*}
i\partial _{t}\psi _{N}=\sum_{j=1}^{N}\left( -\frac{1}{2}\triangle _{x_{j}}+%
\frac{1}{2}x_{j}^{T}Qx_{j}\right) \psi _{N}+\frac{1}{N}\sum_{1\leqslant
i<j\leqslant N}N^{3\beta }V(N^{\beta }\left( x_{i}-x_{j}\right) )\psi _{N}
\end{equation*}%
for some $\beta \in \left( 0,\frac{2}{7}\right] $. Assume that the pair
interaction $V\ $is a nonnegative $L^{1}(\mathbb{R}^{3})\cap H^{2}(\mathbb{R}%
^{3})\cap W^{2,\infty }(\mathbb{R}^{3})$ spherically symmetric function.
Moreover, suppose the initial datum of equation \ref{equation:N-Body
Schrodinger with switchable trap} verifies Conditions (a), (b), and (c) in
the assumption of Theorem \ref{Theorem:3D BEC}. Then $\forall t\geqslant 0$, 
$\forall k\geqslant 1$, we have the convergence in the trace norm that%
\begin{equation*}
\lim_{N\rightarrow \infty }\limfunc{Tr}\left\vert \gamma _{N}^{(k)}(t,%
\mathbf{x}_{k};\mathbf{x}_{k}^{\prime })-\dprod_{j=1}^{k}\phi (t,x_{j})%
\overline{\phi }(t,x_{j}^{\prime })\right\vert =0,
\end{equation*}%
where $\phi (t,x)$ is the solution to the 3D cubic NLS with anisotropic
quadratic traps%
\begin{eqnarray*}
i\partial _{t}\phi &=&\left( -\frac{1}{2}\triangle _{x}+\frac{1}{2}%
x^{T}Qx\right) \phi +b_{0}\left\vert \phi \right\vert ^{2}\phi \text{ in }%
\mathbb{R}^{3+1} \\
\phi (0,x) &=&\phi _{0}(x)
\end{eqnarray*}%
and $b_{0}=\int_{\mathbb{R}^{3}}V(x)dx.$
\end{theorem}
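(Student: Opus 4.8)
The plan is to observe that Theorem~\ref{Theorem:Anisotropic} is a direct generalization of Theorem~\ref{Theorem:3D BEC}, obtained by replacing the isotropic quadratic form $\omega^{2}\left\vert x\right\vert ^{2}$ with the general positive-definite diagonal form $x^{T}Qx$, and that \emph{every} ingredient used in the proof of Theorem~\ref{Theorem:3D BEC} survives this replacement essentially verbatim. Concretely, the proof of the main theorem passes through (i) a priori energy bounds on the BBGKY marginals $\gamma_{N}^{(k)}$ coming from the Hamiltonian $H_{N}$, (ii) compactness of the sequence $\{\gamma_{N}^{(k)}\}$ and extraction of a limit hierarchy, (iii) the Klainerman--Machedon collapsing/space-time estimate for the trapped Gross--Pitaevskii hierarchy for $\beta\in(0,2/7]$, (iv) the board-game combinatorial argument, and (v) uniqueness of the limiting hierarchy together with the observation that the factorized ansatz $\prod_{j}\phi(t,x_{j})\overline{\phi}(t,x_{j}')$ solves it, where $\phi$ solves the trapped NLS. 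So the strategy is simply to run through these five steps and check at each one that only the following two abstract facts about the one-particle operator $h:=-\tfrac12\triangle_{x}+\tfrac12 x^{T}Qx$ are actually used.

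First I would isolate those two facts. Fact one: $h$ is a nonnegative self-adjoint Hermite-type operator whose associated Sobolev-type spaces $\mathcal H^{1}$, defined via $\|f\|_{\mathcal H^{1}}^{2}=\langle f,(1+h)f\rangle$, are equivalent (as Banach spaces) to the spaces built from the isotropic Hermite operator, since $Q$ is positive definite and diagonal so $c(1+|x|^{2}-\triangle)\le 1+h\le C(1+|x|^{2}-\triangle)$ for constants $0<c\le C$ depending only on $\min_i\omega_i,\max_i\omega_i$; consequently all the energy estimates, trace-norm estimates, and the multilinear collapsing estimates (which are the analytic heart of the paper, established for the isotropic trap) transfer with at most a change of absolute constants. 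Fact two: the $x$-variables are still separated, so the propagator $e^{-ith}$ factors over coordinates, which is exactly what the board-game argument and the Duhamel expansion need. Since the anisotropic Hermite operator also satisfies the Yajima--Zhang-type Strichartz estimates (the trap is still quadratic, just with different frequencies in each direction), the collapsing estimate in its trapped form goes through unchanged.

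After establishing that the quantitative estimates transfer, I would re-derive the trapped BBGKY hierarchy for $\{\gamma_{N}^{(k)}\}$ with the diagonal matrix $Q$ in place of $\omega^{2}I$, note that the interaction term $B_{j,k+1}$ is untouched (it involves only the pair potential and the delta functions, not the trap), and then invoke the compactness/limit-point extraction, the space-time bound for $\beta\in(0,2/7]$, and the board-game uniqueness argument exactly as in the proof of Theorem~\ref{Theorem:3D BEC}. Finally I would verify the consistency step: if $\phi$ solves $i\partial_t\phi=h\phi+b_0|\phi|^2\phi$ with $\phi(0)=\phi_0\in H^1$ (local well-posedness in the energy space for an NLS with a quadratic trap being standard, again because the trap is at most quadratic), then $\gamma^{(k)}=|\phi\rangle\langle\phi|^{\otimes k}$ solves the trapped GP hierarchy with operator $h$; uniqueness then forces the BBGKY limit to coincide with this factorized object, giving the claimed trace-norm convergence for all $t\ge 0$ and all $k\ge 1$.

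The main obstacle, such as it is, is not a genuine obstruction but rather a bookkeeping burden: one must confirm that \emph{each} lemma in the body of the paper that is stated for the isotropic trap (energy estimates, the trace inequalities, and above all the collapsing estimate, which is the delicate space-time estimate driving the $2/7$ threshold) used only properties of $h$ that are invariant under replacing $\omega^{2}|x|^{2}$ by $x^{T}Qx$. I expect the collapsing estimate to be the one place where a reader would want reassurance, since its proof presumably uses the explicit form of the Hermite semigroup or the associated Strichartz estimates; but because $Q$ is diagonal, the Hermite semigroup for $h$ is just a tensor product of one-dimensional harmonic-oscillator semigroups with frequencies $\omega_1,\omega_2,\omega_3$, so the relevant dispersive and Strichartz bounds hold with constants depending only on those frequencies, and the estimate survives. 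Hence the proof reduces to the remark that the argument for Theorem~\ref{Theorem:3D BEC} never used isotropy, only positivity and separability of the quadratic trap, which $x^{T}Qx$ retains.
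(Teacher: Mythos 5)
Your high-level plan (rerun the five steps of the proof of Theorem \ref{Theorem:3D BEC} and check that nothing depends on the trap being isotropic) is the same thing the paper itself does for Theorem \ref{Theorem:Anisotropic} -- the paper gives no separate proof, only the remark that the same techniques apply with technically more complicated modifications. But your execution mischaracterizes what the proof of Theorem \ref{Theorem:3D BEC} actually uses, and the place where you wave your hands is exactly the place the paper flags as nontrivial. The paper never proves or invokes a collapsing estimate for the \emph{trapped} (Hermite) hierarchy: the analytic core is the lens transform of Section \ref{Section:LensTrasnform}, which converts the problem to the free Laplacian side, and the collapsing estimate actually used (Lemma \ref{Theorem:CollapsingEstimateForGP} and Theorem \ref{Theorem:3*3d}) is for the \emph{free} propagator $U^{(k)}$ with a time-rescaled potential. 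That transform genuinely uses isotropy: the new time variable $\tau =\tan (\omega t)/\omega $ and dilation $y=x/\cos \omega t$ involve a single frequency, so with distinct $\omega _{1},\omega _{2},\omega _{3}$ a coordinate-wise lens transform no longer produces a hierarchy driven by $e^{i\tau \triangle _{\mathbf{y}_{k}}/2}$ in one common new time; one instead gets time-dependent, direction-dependent coefficients (equivalently, anisotropically rescaled potentials and deltas), and the space-time bound of Theorem \ref{Theorem:Space-Time Bound for GlensBBGKY} and the collapsing estimate must be reproved in that setting (this is what ``technically more complicated'' refers to, and what the author's 2D anisotropic paper \cite{ChenAnisotropic} handles). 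So your assertion that ``the argument for Theorem \ref{Theorem:3D BEC} never used isotropy'' is false for the paper's argument.

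The concrete gap is your treatment of the key estimate: you claim the collapsing/space-time estimate ``in its trapped form goes through unchanged'' because the anisotropic Hermite semigroup is a tensor product of 1D oscillator semigroups satisfying Strichartz estimates. Strichartz bounds do not give the derivative-gain bound \ref{formula:theSpace-TimeBound}; as the paper's conclusion states, in 3D this bound comes from smoothing-type (collapsing) estimates, whose proof in the paper (Appendix II) relies on the explicit Fourier representation of the free flow and the surface integrals of Lemma \ref{Lemma:MateiLemmaForIntegrals}, not on Strichartz estimates or abstract norm equivalences such as $c(1+\left\vert x\right\vert ^{2}-\triangle )\leqslant 1+h\leqslant C(1+\left\vert x\right\vert ^{2}-\triangle )$. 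Your energy-estimate step is fine (Lemma \ref{Lemma:ESYLemma} and Lemma \ref{Lemma:momentum between two sides of the lens transform} do adapt with $S_{j}^{2}=-\triangle _{x_{j}}+x_{j}^{T}Qx_{j}$ and per-coordinate momenta), but without either an anisotropic lens transform together with the corresponding analogue of Theorem \ref{Theorem:3*3d}, or an honest proof of a collapsing estimate for the anisotropic Hermite evolution, the crux of Theorem \ref{Theorem:Anisotropic} is asserted rather than proved.
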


The anisotropic version of the main theorem stated above yields to the same
techniques as Theorem \ref{Theorem:3D BEC} but is technically more
complicated. Therefore we prove only the latter in detail, merely suggesting
during the course of the proof the appropriate modifications needed to
obtain the more general theorem.

\section{Proof of the Main Theorem (Theorem \protect\ref{Theorem:3D BEC}) 
\label{Section:ProofOfTheMainTheorem}}

We establish Theorem \ref{Theorem:3D BEC} with a smooth approximation
argument and the following theorem.

\begin{theorem}
\label{Theorem:SmoothVersionofTheMainTheorem}Let $\left\{ \gamma
_{N}^{(k)}\right\} $ be the family of marginal densities associated with $%
\psi _{N}$, the solution of the N-body Schr\"{o}dinger equation \ref%
{equation:N-Body Schrodinger with switchable trap} for some $\beta \in
\left( 0,\frac{2}{7}\right] $. Assume that the pair interaction $V\ $is a
nonnegative $L^{1}(\mathbb{R}^{3})\cap H^{2}(\mathbb{R}^{3})\cap W^{2,\infty
}(\mathbb{R}^{3})$ spherically symmetric function. Moreover, suppose the
initial datum of equation \ref{equation:N-Body Schrodinger with switchable
trap} is normalized and asymptotically factorized in the sense of (a) and
(b) of Theorem \ref{Theorem:3D BEC}, and verifies the condition that there
is a $C$ independent of $k$ or $N$ such that%
\begin{equation*}
\left\langle \psi _{N}(0),H_{N}^{k}\psi _{N}(0)\right\rangle \leqslant
C^{k}N^{k}
\end{equation*}%
Then $\forall t\geqslant 0$ and $\forall k\geqslant 1$, we have the
convergence in the trace norm that%
\begin{equation*}
\lim_{N\rightarrow \infty }\limfunc{Tr}\left\vert \gamma _{N}^{(k)}(t,%
\mathbf{x}_{k};\mathbf{x}_{k}^{\prime })-\dprod_{j=1}^{k}\phi (t,x_{j})%
\overline{\phi }(t,x_{j}^{\prime })\right\vert =0,
\end{equation*}%
where $\phi (t,x)$ is the solution to the 3D cubic NLS with a quadratic trap%
\begin{eqnarray*}
i\partial _{t}\phi &=&\left( -\frac{1}{2}\triangle _{x}+\omega ^{2}\frac{%
\left\vert x\right\vert ^{2}}{2}\right) \phi +b_{0}\left\vert \phi
\right\vert ^{2}\phi \text{ in }\mathbb{R}^{3+1} \\
\phi (0,x) &=&\phi _{0}(x)
\end{eqnarray*}%
and $b_{0}=\int_{\mathbb{R}^{3}}V(x)dx.$
\end{theorem}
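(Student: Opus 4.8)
The plan is to follow the now-standard BBGKY/Gross-Pitaevskii hierarchy scheme of Erdős-Schlein-Yau as reorganized by Klainerman-Machedon, adapted to the Hermite operator $-\frac12\triangle+\tfrac12\omega^2|x|^2$. First I would write down the BBGKY hierarchy satisfied by $\{\gamma_N^{(k)}\}$, with the quadratic-trap kinetic operator on the left and the rescaled pair potential $N^{3\beta}V(N^\beta\cdot)$ on the right, and observe that the energy bound $\langle\psi_N(0),H_N^k\psi_N(0)\rangle\le C^kN^k$ is propagated in time (since $H_N$ commutes with the flow) and yields, after the usual positivity/operator-inequality argument (using $V\ge0$ and $V\in W^{2,\infty}$), a priori bounds of the form $\operatorname{Tr}(S^{(k)}\gamma_N^{(k)}S^{(k)})\le C^k$ uniformly in $N$, where $S^{(k)}=\prod_j(1-\triangle_{x_j}+\omega^2|x_j|^2)^{1/2}$ is the Hermite-Sobolev weight. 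This is the step where the quadratic trap forces genuine modifications: one must commute the Hermite operator through the interaction and control the $|x|^2$ growth, which is exactly why $V\in W^{2,\infty}$ (not merely $H^2$) is assumed. Next I would extract weak-* limit points $\{\gamma^{(k)}\}$ by Banach-Alaoglu in the appropriate trace-class-with-weights topology, and show any such limit solves the infinite Gross-Pitaevskii hierarchy with the Hermite operator and coupling $b_0=\int V$; the convergence of the interaction term $N^{3\beta}V(N^\beta\cdot)\to b_0\delta$ acting on $\gamma_N^{(k+1)}$ is handled as in ESY using the a priori bounds, the key input being that $V$ is an approximate identity to a $\delta$-function.

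The heart of the matter, and the part that follows Chen-Pavlović most closely, is establishing the Klainerman-Machedon space-time bound
\begin{equation*}
\int_0^T\left\|S^{(k)}B_{j,k+1}\gamma^{(k+1)}(t)\right\|_{L^2}\,dt\le C^k
\end{equation*}
for the limiting hierarchy, for $\beta\in(0,2/7]$. I would do this at the level of the BBGKY hierarchy itself: writing the Duhamel expansion of $\gamma_N^{(k+1)}$ iterated to a coupling depth $k_0$ that I take to be of order $\ln N$ (this is the step flagged in the excerpt as the borrowed Chen-Pavlović idea), apply the Klainerman-Machedon board game to reduce the sum over $(k_0)!$ Duhamel terms to a manageable number of "admissible" terms, and then estimate each term using the collapsing/multilinear estimate adapted to the Hermite propagator $e^{it(\frac12\triangle-\frac12\omega^2|x|^2)}$. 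The boundary term at depth $k_0$ must be shown to vanish as $N\to\infty$ using the $C^{k_0}$ a priori bound against the $(k_0)!$-type combinatorial factor — here $k_0\sim\ln N$ and the smallness $N^{-\text{(something in }\beta\text{)}}$ from the rescaled potential must beat $(\ln N)!\,C^{\ln N}$, and it is precisely optimizing this competition that produces the threshold $\beta\le 2/7$.

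The main obstacle I anticipate is the collapsing estimate for the trapped propagator: one needs an estimate of the shape
\begin{equation*}
\left\|S^{(k)}B_{j,k+1}\,U^{(k+1)}(t)f\right\|_{L^2_{x}L^2_{[0,T]}}\lesssim \left\|S^{(k+1)}f\right\|_{L^2},
\end{equation*}
where $U^{(k+1)}$ is the free Hermite evolution, and the classical proof of such estimates rests on Fourier-side arguments and ordinary Strichartz estimates that are not directly available for $-\triangle+|x|^2$. The resolution I would use exploits the Mehler-type structure of $e^{itH_{\mathrm{osc}}}$: on the torus-like time interval away from the lattice $t\in\frac{\pi}{\omega}\mathbb Z$ the Hermite flow is, up to an explicit metaplectic/lens change of variables, conjugate to the free Schrödinger flow with a quadratic phase factor, so one transfers the Chen-Pavlović collapsing estimate through this conjugation; the quadratic phases and the Jacobian are smooth and bounded on compact time intervals, hence harmless, and one reconciles the Hermite-Sobolev weight $S^{(k)}$ with the flat weight $\prod(1-\triangle_{x_j})^{1/2}$ using the equivalence of the two on the relevant function spaces together with the commutator bounds already needed for the a priori estimates. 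Once the space-time bound and the a priori bounds are in place, uniqueness of the Gross-Pitaevskii hierarchy with a quadratic trap follows by the board-game argument exactly as in Klainerman-Machedon (the trap does not affect the combinatorics, only the propagator inside the collapsing estimate), the factorized state $\prod_j\phi(t,x_j)\overline\phi(t,x_j')$ with $\phi$ solving \eqref{equation:cubicNLSwithSwitch} is checked to be a solution, and trace-norm convergence upgrades from the weak-* convergence by the standard argument that weak-* convergence to a pure/factorized limit plus convergence of the (constant) trace implies trace-norm convergence.
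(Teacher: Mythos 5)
Your overall scheme is the one the paper follows (a priori Hermite--Sobolev energy bounds via the operator inequality $\langle\psi,H_N^k\psi\rangle\gtrsim N^k\langle\psi,\prod_j(-\triangle_{x_j}+\omega^2|x_j|^2)\psi\rangle$, compactness, Klainerman--Machedon board game plus a collapsing estimate, coupling depth $\sim\ln N$, uniqueness, and the weak*-to-trace-norm upgrade using that the limit is a rank-one projection); the only organizational difference is that the paper applies the lens transform globally at the outset, so that \emph{all} of the analysis (energy bound, compactness, space-time bound, uniqueness) is carried out for the free propagator with a harmless time-dependent coupling $(1+\omega^2\tau^2)^{\pm 1/2}$, and the result is pulled back using that the lens transform preserves trace norms, plus time-translation invariance to reach all $t\geqslant 0$; your plan to conjugate only inside the collapsing estimate is essentially the same mechanism used more locally.

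The genuine gap is in your space-time bound. After the Duhamel expansion of the BBGKY hierarchy you only account for the particle-number-increasing terms $\frac{N-k}{N}\sum_j B^N_{j,k+1}$ and the final boundary term at depth $\sim\ln N$, and you attribute the threshold $\beta\leqslant 2/7$ to a competition between $(\ln N)!$ and a power $N^{-(\cdot)}$ coming from the rescaled potential. Neither point survives inspection: after the board game the combinatorial factor is only $4^{\ln N}=N^{\ln 4}$, and the boundary (interaction) term is controlled in the paper for \emph{arbitrary} $\beta>0$ by crude bounds of size $N^{5\beta/2}C^{\ln N}$ beaten by $(T^{1/2})^{\ln N}$ for $T$ small depending on $\beta$; so no threshold arises there. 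The restriction $\beta\leqslant 2/7$ comes entirely from the terms you never estimate, namely the intra-cluster potential terms $\frac1N\sum_{1\leqslant i<j\leqslant k}N^{3\beta}V\bigl(N^{\beta}(x_i-x_j)\bigr)\gamma_N^{(k)}$, which do not lower the particle number and therefore cannot be absorbed into the board-game iteration; they must be bounded in the weighted $L^2$ norm with two derivatives available to land on the potential, and $\bigl\Vert N^{-1}\nabla^2\bigl(N^{3\beta}V(N^{\beta}\cdot)\bigr)\bigr\Vert_{L^2}\sim N^{7\beta/2-1}\Vert V''\Vert_{L^2}$ is bounded precisely when $\beta\leqslant 2/7$ (this is the paper's Proposition on the potential part, explicitly flagged as the only place where $\beta\leqslant 2/7$ is used). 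Without an estimate for these terms your Duhamel expansion does not close, and the claimed range of $\beta$ is not justified by the mechanism you describe.
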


Before presenting the proof of Theorem \ref%
{Theorem:SmoothVersionofTheMainTheorem}, we discuss how to deduce Theorem %
\ref{Theorem:3D BEC} from Theorem \ref{Theorem:SmoothVersionofTheMainTheorem}%
. It is a well-known smooth approximation argument. We include it for
completeness. For technical details, we refer the readers to Erd\"{o}%
s-Schlein-Yau \cite{E-S-Y5, E-S-Y3} and Kirkpatrick-Schlein-Staffilani \cite%
{Kirpatrick}.

Write the spaces of compact operators and trace class operators of $L^{2}(%
\mathbb{R}^{3k})$ as $\mathcal{K}_{k}$ and $\mathcal{L}_{k}^{1}$. Then $%
\left( \mathcal{K}_{k}\right) ^{\prime }=\mathcal{L}_{k}^{1}$. Via the fact
that $\mathcal{K}_{k}$ is separable, we select a dense countable subset of
the unit ball of $\mathcal{K}_{k}$ and call it $\left\{ J_{i}^{(k)}\right\}
_{i\geqslant 1}\subset \mathcal{K}_{k}$. We have $\left\Vert
J_{i}^{(k)}\right\Vert \leqslant 1$ where $\left\Vert \cdot \right\Vert $ is
the operator norm. We set up the following metric on $\mathcal{L}_{k}^{1}$,
for $\gamma ^{(k)}$, $\tilde{\gamma}^{(k)}\in \mathcal{L}_{k}^{1}:$ define 
\begin{equation*}
d_{k}(\gamma ^{(k)},\tilde{\gamma}^{(k)})=\sum_{i=1}^{\infty
}2^{-i}\left\vert \limfunc{Tr}J_{i}^{(k)}\left( \gamma ^{(k)}-\tilde{\gamma}%
^{(k)}\right) \right\vert .
\end{equation*}%
Then a uniformly bounded sequence $\gamma _{N}^{(k)}\in \mathcal{L}_{k}^{1}$
converges to $\gamma ^{(k)}\in \mathcal{L}_{k}^{1}$ with respect to the
weak* topology if and only if 
\begin{equation*}
\lim_{N\rightarrow \infty }d_{k}(\gamma _{N}^{(k)},\gamma ^{(k)})=0.
\end{equation*}

Fix $\kappa >0$ and $\chi \in C_{c}^{\infty }\left( \mathbb{R}\right) $,
with $0\leqslant \chi \leqslant 1,$ $\chi (s)=1$, for $0\leqslant s\leqslant
1,$ and $\chi (s)=0$, for $s\geqslant 2.$ We regularize the initial data of
the $N$-body Schr\"{o}dinger equation \ref{equation:N-Body Schrodinger with
switchable trap} with%
\begin{equation*}
\tilde{\psi}_{N}\left( 0\right) =\frac{\chi \left( \kappa H_{N}/N\right)
\psi _{N}\left( 0\right) }{\left\Vert \chi \left( \kappa H_{N}/N\right) \psi
_{N}\left( 0\right) \right\Vert },
\end{equation*}%
and we denote $\tilde{\psi}_{N}(t)$ the solution of the $N$-body Schr\"{o}%
dinger equation \ref{equation:N-Body Schrodinger with switchable trap}
subject to this regularized initial data and $\left\{ \tilde{\gamma}%
_{N}^{(k)}(t)\right\} _{k=1}^{\infty }$ the family of marginal densities
associated with $\tilde{\psi}_{N}(t)$.

With these notations, if $\kappa >0$ small enough, on the one hand, we have%
\begin{equation*}
\left\langle \tilde{\psi}_{N}\left( 0\right) ,H_{N}^{k}\tilde{\psi}%
_{N}\left( 0\right) \right\rangle \leqslant \tilde{C}^{k}N^{k},
\end{equation*}%
and%
\begin{equation*}
\lim_{N\rightarrow \infty }\limfunc{Tr}J^{(k)}\left( \tilde{\gamma}%
_{N}^{(k)}(0)-\dprod_{j=1}^{k}\phi _{0}\overline{\phi }_{0}\right) =0,
\end{equation*}%
for every $J^{(k)}\in \mathcal{K}_{k}.$ Thus $\tilde{\gamma}_{N}^{(k)}(0,%
\mathbf{x}_{k};\mathbf{x}_{k}^{\prime })\rightarrow \dprod_{j=1}^{k}\phi _{0}%
\overline{\phi }_{0}$ as $N\rightarrow \infty $ in the weak* topology. Since 
$\dprod_{j=1}^{k}\phi _{0}\overline{\phi }_{0}$ is an orthogonal projection,
the convergence in the weak* topology is equivalent to the convergence in
the trace norm. Consequently, the conditions of Theorem \ref%
{Theorem:SmoothVersionofTheMainTheorem} are verified and it implies that $%
\forall t\in \lbrack 0,T_{0}]$ and $\forall k\geqslant 1$,%
\begin{equation*}
\lim_{N\rightarrow \infty }\limfunc{Tr}\left\vert \tilde{\gamma}_{N}^{(k)}(t,%
\mathbf{x}_{k};\mathbf{x}_{k}^{\prime })-\dprod_{j=1}^{k}\phi (t,x_{j})%
\overline{\phi }(t,x_{j}^{\prime })\right\vert =0.
\end{equation*}%
On the other hand, there is a constant $C$ independent of $N$ and $\kappa $
such that 
\begin{equation*}
\left\vert \limfunc{Tr}J^{(k)}\left( \tilde{\gamma}_{N}^{(k)}(t)-\gamma
_{N}^{(k)}(t)\right) \right\vert \leqslant C\left\Vert J^{(k)}\right\Vert
\kappa ^{\frac{1}{2}}
\end{equation*}%
for every $J^{(k)}\in \mathcal{K}_{k}$. Therefore,%
\begin{eqnarray*}
&&\left\vert \limfunc{Tr}J^{(k)}\left( \gamma _{N}^{(k)}(t,\mathbf{x}_{k};%
\mathbf{x}_{k}^{\prime })-\dprod_{j=1}^{k}\phi (t,x_{j})\overline{\phi }%
(t,x_{j}^{\prime })\right) \right\vert \\
&\leqslant &\left\vert \limfunc{Tr}J^{(k)}\left( \gamma _{N}^{(k)}(t)-\tilde{%
\gamma}_{N}^{(k)}(t)\right) \right\vert +\left\vert \limfunc{Tr}%
J^{(k)}\left( \tilde{\gamma}_{N}^{(k)}(t)-\dprod_{j=1}^{k}\phi (t,x_{j})%
\overline{\phi }(t,x_{j}^{\prime })\right) \right\vert \\
&\leqslant &C\left\Vert J^{(k)}\right\Vert \kappa ^{\frac{1}{2}}+\left\vert 
\limfunc{Tr}J^{(k)}\left( \tilde{\gamma}_{N}^{(k)}(t)-\dprod_{j=1}^{k}\phi
(t,x_{j})\overline{\phi }(t,x_{j}^{\prime })\right) \right\vert .
\end{eqnarray*}%
As the above inequality holds for all $\kappa >0$ small enough, we know $%
\gamma _{N}^{(k)}(t,\mathbf{x}_{k};\mathbf{x}_{k}^{\prime })\rightarrow
\dprod_{j=1}^{k}\phi (t,x_{j})\overline{\phi }(t,x_{j}^{\prime })$ as $%
N\rightarrow \infty $ in the weak* topology. This convergence is again
equivalent to the convergence in the trace norm because $\dprod_{j=1}^{k}%
\phi (t,x_{j})\overline{\phi }(t,x_{j}^{\prime })$ is an orthogonal
projection as well. Whence we have established our main theorem (Theorem \ref%
{Theorem:3D BEC}) through Theorem \ref{Theorem:SmoothVersionofTheMainTheorem}%
. It remains to show Theorem \ref{Theorem:SmoothVersionofTheMainTheorem}. We
prove Theorem \ref{Theorem:SmoothVersionofTheMainTheorem} with the help of
the lens transform. We define the lens transform and show its related
properties in the next section then we establish Theorem \ref%
{Theorem:SmoothVersionofTheMainTheorem} in Section \ref%
{Section:ProofOfSmooth3DBEC}.

\section{Lens Transform\label{Section:LensTrasnform}}

In this section, we first define the lens transform and review its relevant
properties, then we prove an energy estimate (Proposition \ref%
{Proposition:GettingTheE-S-YEnergyBound}) which relates the energy on the
two sides of the lens transform. It aids in the proof of Theorem \ref%
{Theorem:SmoothVersionofTheMainTheorem} in the sense that it links the
analysis of $-\triangle _{x}+\omega ^{2}\left\vert x\right\vert ^{2}$ to the
analysis of $-\triangle _{y}$ which is a better understood operator. We
denote $(t,x)$ the space-time on the Hermite side and $(\tau ,y)$ the
space-time on the Laplacian side. We now define the lens transform we need.

\begin{definition}
\label{def:functionG-Lens}Let $\mathbf{x}_{N},\mathbf{y}_{N}\in \mathbb{R}%
^{3N}$. We define the lens transform for $L^{2}$ functions $M_{N}:L^{2}(d%
\mathbf{y}_{N})\rightarrow $ $L^{2}(d\mathbf{x}_{N})$ and its inverse by%
\begin{eqnarray*}
\left( M_{N}u_{N}\right) (t,\mathbf{x}_{N}) &=&\frac{e^{-i\omega \tan \omega
t\frac{\left\vert \mathbf{x}_{N}\right\vert ^{2}}{2}}}{(\cos \omega t)^{%
\frac{3N}{2}}}u_{N}(\frac{\tan \omega t}{\omega },\frac{\mathbf{x}_{N}}{\cos
\omega t}) \\
\left( M_{N}^{-1}\psi _{N}\right) (\tau ,\mathbf{y}_{N}) &=&\frac{e^{i\frac{%
\omega ^{2}\tau }{1+\omega ^{2}\tau ^{2}}\frac{\left\vert \mathbf{y}%
_{N}\right\vert ^{2}}{2}}}{\left( 1+\omega ^{2}\tau ^{2}\right) ^{\frac{3N}{4%
}}}\psi _{N}(\frac{\arctan \left( \omega \tau \right) }{\omega },\frac{%
\mathbf{y}_{N}}{\sqrt{1+\omega ^{2}\tau ^{2}}}).
\end{eqnarray*}%
$M_{N}$ is unitary by definition and the variables are related by%
\begin{equation*}
\tau =\frac{\tan \omega t}{\omega },\text{ }\mathbf{y}_{N}=\frac{\mathbf{x}%
_{k}}{\cos \omega t}.
\end{equation*}
\end{definition}

\begin{definition}
\label{def:KernelG-Lens}Let $\mathbf{x}_{k},\mathbf{x}_{k}^{\prime },\mathbf{%
y}_{k},\mathbf{y}_{k}^{\prime }\in \mathbb{R}^{3k}$. We define the lens
transform for Hilbert-Schmidt kernels $T_{k}:$ $L^{2}(d\mathbf{y}_{k}d%
\mathbf{y}_{k}^{\prime })\rightarrow $ $L^{2}(d\mathbf{x}_{k}d\mathbf{x}%
_{k}^{\prime })$ and its inverse by%
\begin{eqnarray}
\left( T_{k}u^{(k)}\right) (t,\mathbf{x}_{k};\mathbf{x}_{k}^{\prime }) &=&%
\frac{e^{-i\omega \tan \omega t\frac{\left( \left\vert \mathbf{x}%
_{k}\right\vert ^{2}-\left\vert \mathbf{x}_{k}^{\prime }\right\vert
^{2}\right) }{2}}}{(\cos \omega t)^{3k}}u^{(k)}(\frac{\tan \omega t}{\omega }%
,\frac{\mathbf{x}_{k}}{\cos \omega t};\frac{\mathbf{x}_{k}^{\prime }}{\cos
\omega t})  \notag \\
\left( T_{k}^{-1}\gamma ^{(k)}\right) (\tau ,\mathbf{y}_{k};\mathbf{y}%
_{k}^{\prime }) &=&\frac{e^{i\frac{\omega ^{2}\tau }{1+\omega ^{2}\tau ^{2}}%
\frac{\left( \left\vert \mathbf{y}_{k}\right\vert ^{2}-\left\vert \mathbf{y}%
_{k}^{\prime }\right\vert ^{2}\right) }{2}}}{\left( 1+\omega ^{2}\tau
^{2}\right) ^{\frac{3k}{2}}}\gamma ^{(k)}(\frac{\arctan \left( \omega \tau
\right) }{\omega },\frac{\mathbf{y}_{k}}{\sqrt{1+\omega ^{2}\tau ^{2}}};%
\frac{\mathbf{y}_{k}^{\prime }}{\sqrt{1+\omega ^{2}\tau ^{2}}}).
\label{def:inverse lens}
\end{eqnarray}%
$T_{k}$ is unitary by definition as well and the variables are again related
by%
\begin{equation*}
\tau =\frac{\tan \omega t}{\omega },\text{ }\mathbf{y}_{k}=\frac{\mathbf{x}%
_{k}}{\cos \omega t}\text{ and }\mathbf{y}_{k}^{\prime }=\frac{\mathbf{x}%
_{k}^{\prime }}{\cos \omega t}.
\end{equation*}
\end{definition}

Before we characterize the exact effect of the lens transform, we clarify
the motivation of such definitions by a lemma.

\begin{lemma}
\cite{Carles, ChenAnisotropic} Define $\alpha $ and $\beta $ via the system%
\begin{eqnarray*}
\ddot{\alpha}(t)+\eta (t)\alpha (t) &=&0,\alpha (0)=0,\dot{\alpha}(0)=1, \\
\ddot{\beta}(t)+\eta (t)\beta (t) &=&0,\beta (0)=1,\dot{\beta}(0)=0.
\end{eqnarray*}%
If $\beta $ is nonzero in the time interval $[0,T]$, then the solution of
the 1D Schr\"{o}dinger equation with a time dependent quadratic trap%
\begin{eqnarray*}
i\partial _{t}\psi &=&\left( -\frac{1}{2}\partial _{x}^{2}+\frac{1}{2}\eta
(t)x^{2}\right) \psi \text{ in }\mathbb{R}^{1+1} \\
\psi (0,x) &=&f(x)\in L^{2}(\mathbb{R})
\end{eqnarray*}%
in $[0,T]$ is given by%
\begin{equation*}
\psi (t,x)=\frac{e^{i\frac{\dot{\beta}(t)}{\beta (t)}\frac{x^{2}}{2}}}{%
\left( \beta (t)\right) ^{\frac{1}{2}}}u(\frac{\alpha (t)}{\beta (t)},\frac{x%
}{\beta (t)}),
\end{equation*}%
if $u(\tau ,y)$ solves the 1D free Schr\"{o}dinger equation%
\begin{equation*}
i\partial _{\tau }u=-\frac{1}{2}\partial _{y}^{2}u\text{ in }\mathbb{R}^{1+1}
\end{equation*}%
subject to the same initial data.
\end{lemma}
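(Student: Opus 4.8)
The plan is to verify directly that the displayed formula solves the initial value problem, and then to invoke uniqueness of the linear Schr\"{o}dinger flow with a (continuous, time dependent) quadratic potential to conclude that it \emph{is} the solution. Uniqueness here is standard: for continuous $\eta$ the time dependent Hamiltonian $-\tfrac12\partial_x^2+\tfrac12\eta(t)x^2$ generates a strongly continuous unitary propagator on $L^2(\mathbb{R})$, so two solutions with the same datum coincide; in the application of interest $\eta\equiv\omega^2$ is constant and this is classical. Thus it suffices to check that the ansatz $\psi$ has the correct initial data and satisfies the PDE. The initial condition is immediate from the data prescribed for $\alpha$ and $\beta$: at $t=0$ one has $\beta(0)=1$, $\dot\beta(0)=0$ and $\alpha(0)=0$, so the normalization $\beta^{-1/2}$ equals $1$, the Gaussian phase is trivial, and $\psi(0,x)=u(0,x)=f(x)$. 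It remains to check the equation.

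For the PDE I would first record two elementary consequences of the ODEs. First, the Wronskian $W(t)=\dot\alpha(t)\beta(t)-\alpha(t)\dot\beta(t)$ is constant --- since $\alpha$ and $\beta$ solve the same second order equation with no first order term, $\dot W=\ddot\alpha\beta-\alpha\ddot\beta=-\eta\alpha\beta+\eta\alpha\beta=0$ --- and $W(0)=1$, so $W\equiv1$; hence, writing $\tau(t)=\alpha(t)/\beta(t)$, we get $\dot\tau=W/\beta^2=1/\beta^2$. Second, with $b(t)=\dot\beta(t)/\beta(t)$ the equation $\ddot\beta=-\eta\beta$ gives the Riccati identity $\dot b=(\ddot\beta\beta-\dot\beta^2)/\beta^2=-\eta-b^2$. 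With these in hand, setting $y=x/\beta$ one computes $\partial_t\psi$ and $\partial_x^2\psi$ by the chain rule. In $\partial_t\psi$ the time derivative of the Gaussian phase $bx^2/2$ produces, via $\dot b=-\eta-b^2$, exactly the term $-\tfrac12\eta x^2\psi$ needed to absorb the trap, together with a spurious $\tfrac12 b^2x^2\psi$; the latter cancels the $-\tfrac12 b^2x^2$ term coming from the second $x$-derivative of the phase in $\partial_x^2\psi$. The terms linear in $x\,\partial_y u$ cancel because $\dot y=-by$ is matched by the mixed term in $\partial_x^2\psi$, and the $u$-terms proportional to $b$ with no factor of $x$ (one from differentiating the normalization $\beta^{-1/2}$, one from $\partial_x^2$ of the phase) cancel against each other. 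What is left is $\beta^{-5/2}e^{ibx^2/2}\big(i\partial_\tau u+\tfrac12\partial_y^2 u\big)$ evaluated at $(\tau,y)=(\alpha/\beta,\,x/\beta)$, which vanishes because $u$ solves the free Schr\"{o}dinger equation. Hence $i\partial_t\psi=(-\tfrac12\partial_x^2+\tfrac12\eta(t)x^2)\psi$.

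The computation just described is the only real labor, and it is pure bookkeeping: the only ``obstacle'' is organizing the chain rule so that the quadratic-in-$x$ terms, the linear-in-$x$-times-$\partial_y u$ terms, and the residual terms are collected separately and each group is seen to cancel using $W\equiv1$ and $\dot b=-\eta-b^2$; there is no conceptual difficulty. Finally, specializing to $\eta\equiv\omega^2$ gives $\beta(t)=\cos\omega t$ and $\alpha(t)=\sin\omega t/\omega$, so $b(t)=-\omega\tan\omega t$ and $\alpha/\beta=\tan\omega t/\omega$; this is precisely the one particle, one dimensional instance of the operator $M_N$, and, after tensoring over the $3N$ coordinates and passing to kernels, of $T_k$ --- which is what motivates Definitions \ref{def:functionG-Lens} and \ref{def:KernelG-Lens}.
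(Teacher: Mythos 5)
Your argument is correct: the Wronskian identity $W\equiv 1$ (hence $\dot\tau=1/\beta^{2}$) and the Riccati identity $\dot b=-\eta-b^{2}$ for $b=\dot\beta/\beta$ make the chain-rule bookkeeping close exactly, leaving the residual $\beta^{-5/2}e^{ibx^{2}/2}\bigl(i\partial_{\tau}u+\tfrac12\partial_{y}^{2}u\bigr)$, and uniqueness of the unitary propagator finishes the proof. This is essentially the same route the paper takes, since it simply defers to the direct-computation proof in \cite{Carles} (the alternative cited in \cite{ChenAnisotropic} being the metaplectic-representation argument), and your specialization $\beta=\cos\omega t$, $\alpha=\sin\omega t/\omega$ correctly recovers Definition \ref{def:functionG-Lens}.
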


\begin{proof}
See \cite{Carles} for a proof by direct computation and \cite%
{ChenAnisotropic} for an algebraic proof using the metaplectic
representation. When $\eta (t)=\omega ^{2}$, such a transformation has a
long history, we refer the readers to \cite{Carles} and the references
within.
\end{proof}

To make formulas shorter, let us write formula \ref{def:inverse lens} as%
\begin{equation*}
\left( T_{k}^{-1}\gamma ^{(k)}\right) (\tau ,\mathbf{y}_{k};\mathbf{y}%
_{k}^{\prime })=\gamma ^{(k)}(t,\mathbf{x}_{k};\mathbf{x}_{k}^{\prime
})h^{(k)}\left( \tau ,\mathbf{y}_{k};\mathbf{y}_{k}^{\prime }\right) 
\end{equation*}%
where $h^{(k)}\left( \tau ,\mathbf{y}_{k};\mathbf{y}_{k}^{\prime }\right) $
represents the factor away from $\gamma ^{(k)}$ in formula \ref{def:inverse
lens}, then precisely, the lens transform has the following effect.

\begin{proposition}
\label{Proposition:ChangeVariableEffectOfGLens}%
\begin{eqnarray*}
&&\left( i\partial _{\tau }+\frac{1}{2}\triangle _{\mathbf{y}_{k}}-\frac{1}{2%
}\triangle _{\mathbf{y}_{k}^{\prime }}\right) \left( T_{k}^{-1}\gamma
^{(k)}\right) (\tau ,\mathbf{y}_{k};\mathbf{y}_{k}^{\prime }) \\
&=&\frac{h^{(k)}\left( \tau ,\mathbf{y}_{k};\mathbf{y}_{k}^{\prime }\right) 
}{\left( 1+\omega ^{2}\tau ^{2}\right) }\left( i\partial _{t}-\left( -\frac{1%
}{2}\triangle _{\mathbf{x}_{k}}+\omega ^{2}\frac{\left\vert \mathbf{x}%
_{k}\right\vert ^{2}}{2}\right) +\left( -\frac{1}{2}\triangle _{\mathbf{x}%
_{k}^{\prime }}+\omega ^{2}\frac{\left\vert \mathbf{x}_{k}^{\prime
}\right\vert ^{2}}{2}\right) \right) \left( \gamma ^{(k)}\right) (t,\mathbf{x%
}_{k};\mathbf{x}_{k}^{\prime })
\end{eqnarray*}
\end{proposition}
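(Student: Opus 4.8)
The plan is to establish the identity by a direct computation with the chain rule and the Leibniz rule. The structural point that makes this painless is that the weight $h^{(k)}$, the change of variables, and both Laplacians all split as products/sums over the individual coordinates, so nothing beyond one-variable calculus enters; one works on the range of the lens transform, where $t\in[0,\pi/(2\omega))$, $\tau\in[0,\infty)$ and every factor is smooth. Writing $a(\tau)=\omega^{2}\tau/(1+\omega^{2}\tau^{2})$ and $\lambda(\tau)=(1+\omega^{2}\tau^{2})^{-1/2}$, so that $t=\omega^{-1}\arctan(\omega\tau)$, $\mathbf{x}_{k}=\lambda\mathbf{y}_{k}$, $\mathbf{x}_{k}^{\prime}=\lambda\mathbf{y}_{k}^{\prime}$ and $h^{(k)}=\lambda^{3k}\exp(\tfrac{i}{2}a(|\mathbf{y}_{k}|^{2}-|\mathbf{y}_{k}^{\prime}|^{2}))$, I would first record the elementary derivatives that drive everything: $\partial_{\tau}t=(1+\omega^{2}\tau^{2})^{-1}$, $\partial_{\tau}\mathbf{x}_{k}=-a\mathbf{x}_{k}$ (and the same with primes), $\nabla_{\mathbf{y}_{k}}\mathbf{x}_{k}=\lambda I$, $\nabla_{\mathbf{y}_{k}}\log h^{(k)}=ia\mathbf{y}_{k}$, $\nabla_{\mathbf{y}_{k}^{\prime}}\log h^{(k)}=-ia\mathbf{y}_{k}^{\prime}$ and $\partial_{\tau}\log h^{(k)}=-3ka+\tfrac{i}{2}a^{\prime}(|\mathbf{y}_{k}|^{2}-|\mathbf{y}_{k}^{\prime}|^{2})$.

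Next I would apply $i\partial_{\tau}+\tfrac12\triangle_{\mathbf{y}_{k}}-\tfrac12\triangle_{\mathbf{y}_{k}^{\prime}}$ to $\gamma^{(k)}h^{(k)}$, expand by Leibniz, and sort the resulting terms according to how $\gamma^{(k)}$ appears: (i) the $\partial_{t}\gamma^{(k)}$ term, which arises only through $\partial_{\tau}t$ and instantly produces $\tfrac{h^{(k)}}{1+\omega^{2}\tau^{2}}\,i\partial_{t}\gamma^{(k)}$; (ii) the terms carrying exactly one $\mathbf{x}_{k}$- or $\mathbf{x}_{k}^{\prime}$-gradient of $\gamma^{(k)}$, which come from $i\partial_{\tau}\gamma^{(k)}$ and from the Leibniz cross terms $\nabla_{\mathbf{y}}\gamma^{(k)}\cdot\nabla_{\mathbf{y}}h^{(k)}$ with opposite coefficients ($-ia\mathbf{x}_{k}$ from $i\partial_{\tau}\gamma^{(k)}$ against $+ia\mathbf{x}_{k}$ from the cross term after $\lambda\mathbf{y}=\mathbf{x}$, and likewise in the primed variables), and hence cancel; (iii) the $\triangle_{\mathbf{x}_{k}}$ and $\triangle_{\mathbf{x}_{k}^{\prime}}$ terms, which come out multiplied by $(\nabla_{\mathbf{y}}\mathbf{x})^{2}=\lambda^{2}=(1+\omega^{2}\tau^{2})^{-1}$ and assemble into $\tfrac{h^{(k)}}{1+\omega^{2}\tau^{2}}(\tfrac12\triangle_{\mathbf{x}_{k}}-\tfrac12\triangle_{\mathbf{x}_{k}^{\prime}})\gamma^{(k)}$; and (iv) the bare $\gamma^{(k)}h^{(k)}$ terms with no surviving derivative, coming from $i\gamma^{(k)}\partial_{\tau}h^{(k)}$, $\tfrac12\gamma^{(k)}\triangle_{\mathbf{y}_{k}}h^{(k)}$ and $-\tfrac12\gamma^{(k)}\triangle_{\mathbf{y}_{k}^{\prime}}h^{(k)}$.

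All the content sits in group (iv), which I would split into its constant part and its quadratic-in-$\mathbf{y}$ part. The constant contributions ($\tfrac{3k}{2}ia$ from $\tfrac12\gamma^{(k)}\triangle_{\mathbf{y}_{k}}h^{(k)}$, $\tfrac{3k}{2}ia$ from $-\tfrac12\gamma^{(k)}\triangle_{\mathbf{y}_{k}^{\prime}}h^{(k)}$, and $-3kia$ from $i\gamma^{(k)}\partial_{\tau}h^{(k)}$) cancel identically, which is precisely where the normalizing exponent $-3k/2$ in Definition~\ref{def:KernelG-Lens} is needed (a quarter power per coordinate, $6k$ coordinates). For the quadratic part, the $|\mathbf{y}_{k}|^{2}$- and $|\mathbf{y}_{k}^{\prime}|^{2}$-coefficients combine to $-\tfrac12(a^{2}+a^{\prime})(|\mathbf{y}_{k}|^{2}-|\mathbf{y}_{k}^{\prime}|^{2})$, and I would finish using the elementary identity $a(\tau)^{2}+a^{\prime}(\tau)=\omega^{2}(1+\omega^{2}\tau^{2})^{-2}$ together with $|\mathbf{y}_{k}|^{2}-|\mathbf{y}_{k}^{\prime}|^{2}=(1+\omega^{2}\tau^{2})(|\mathbf{x}_{k}|^{2}-|\mathbf{x}_{k}^{\prime}|^{2})$, which turns it into $-\tfrac{h^{(k)}}{1+\omega^{2}\tau^{2}}\,\omega^{2}\tfrac{|\mathbf{x}_{k}|^{2}-|\mathbf{x}_{k}^{\prime}|^{2}}{2}\gamma^{(k)}$; adding (i), (iii) and this reproduces exactly the Hermite operator on the right-hand side. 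I do not expect any analytic obstacle here: the only difficulty is organizational, namely keeping the dozen-odd Leibniz terms and their signs straight, and the two steps that actually carry information are the constant-term cancellation (which pins down the normalization) and the identity $a^{2}+a^{\prime}=\omega^{2}(1+\omega^{2}\tau^{2})^{-2}$ that regenerates the harmonic potential. An essentially equivalent alternative is to invoke the scalar lens lemma recorded above in each of the $6k$ variables, conjugating it in the primed ones, but then the single shared time variable of the kernel has to be untangled by hand, which is no shorter.
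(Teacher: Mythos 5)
Your computation is correct and is exactly the "direct computation" the paper invokes without spelling out: the chain-rule/Leibniz bookkeeping, the cancellation of the first-order cross terms and of the constant terms (which fixes the $-3k/2$ normalization), and the identity $a^{2}+a'=\omega^{2}(1+\omega^{2}\tau^{2})^{-2}$ regenerating the harmonic potential all check out. No gap; this matches the paper's (unwritten) argument.
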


\begin{proof}
This is a direct computation.
\end{proof}

Via Proposition \ref{Proposition:ChangeVariableEffectOfGLens}, we know how
the lens transform acts on the Bogoliubov--Born--Green--Kirkwood--Yvon
(BBGKY) hierarchy and the Gross-Pitaevskii hierarchy.

\begin{lemma}
\label{Lemma:BBGKY hierarchy under GLens}$\left( \text{BBGKY hierarchy under
the lens transform}\right) $ Write $V_{N}\left( x\right) =N^{3\beta }V\left(
N^{\beta }x\right) .$ $\left\{ \gamma _{N}^{(k)}\right\} $ solves the 3D
BBGKY hierarchy with a quadratic trap 
\begin{eqnarray}
&&i\partial _{t}\gamma _{N}^{(k)}-\left( -\frac{1}{2}\triangle _{\mathbf{x}%
_{k}}+\omega ^{2}\frac{\left\vert \mathbf{x}_{k}\right\vert ^{2}}{2}\right)
\gamma _{N}^{(k)}+\left( -\frac{1}{2}\triangle _{\mathbf{x}_{k}^{\prime
}}+\omega ^{2}\frac{\left\vert \mathbf{x}_{k}^{\prime }\right\vert ^{2}}{2}%
\right) \gamma _{N}^{(k)}  \label{hierarchy:TheBBGKYHierarchyWithTraps} \\
&=&\frac{1}{N}\sum_{1\leqslant i<j\leqslant k}\left(
V_{N}(x_{i}-x_{j})-V_{N}(x_{i}^{\prime }-x_{j}^{\prime })\right) \gamma
_{N}^{(k)}  \notag \\
&&+\frac{N-k}{N}\sum_{j=1}^{k}\int \left(
V_{N}(x_{j}-x_{k+1})-V_{N}(x_{j}^{\prime }-x_{k+1})\right) \gamma
_{N}^{(k+1)}\left( t,\mathbf{x}_{k},x_{k+1};\mathbf{x}_{k}^{\prime
},x_{k+1}\right) dx_{k+1},  \notag
\end{eqnarray}%
in $\left[ -T_{0},T_{0}\right] $ if and only if $\left\{
u_{N}^{(k)}=T_{k}^{-1}\gamma _{N}^{(k)}\right\} $ solves the hierarchy%
\begin{eqnarray}
&&\left( i\partial _{\tau }+\frac{1}{2}\triangle _{\mathbf{y}_{k}}-\frac{1}{2%
}\triangle _{\mathbf{y}_{k}^{\prime }}\right) u_{N}^{(k)}
\label{hierarchy:GlensBBGKY} \\
&=&\frac{1}{\left( 1+\omega ^{2}\tau ^{2}\right) }\frac{1}{N}%
\sum_{1\leqslant i<j\leqslant k}\left( V_{N}(\frac{y_{i}-y_{j}}{\left(
1+\omega ^{2}\tau ^{2}\right) ^{\frac{1}{2}}})-V_{N}(\frac{y_{i}^{\prime
}-y_{j}^{\prime }}{\left( 1+\omega ^{2}\tau ^{2}\right) ^{\frac{1}{2}}}%
)\right) u_{N}^{(k)}  \notag \\
&&+\frac{N-k}{N}\frac{1}{\left( 1+\omega ^{2}\tau ^{2}\right) }%
\sum_{j=1}^{k}\int \left( V_{N}(\frac{y_{j}-y_{k+1}}{\left( 1+\omega
^{2}\tau ^{2}\right) ^{\frac{1}{2}}})-V_{N}(\frac{y_{j}^{\prime }-y_{k+1}}{%
\left( 1+\omega ^{2}\tau ^{2}\right) ^{\frac{1}{2}}})\right)   \notag \\
&&\times u_{N}^{(k+1)}\left( \tau ,\mathbf{y}_{k},y_{k+1};\mathbf{y}%
_{k}^{\prime },y_{k+1}\right) dy_{k+1},  \notag
\end{eqnarray}%
in $\left[ -\frac{\tan \omega T_{0}}{\omega },\frac{\tan \omega T_{0}}{%
\omega }\right] .$
\end{lemma}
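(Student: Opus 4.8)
The plan is to apply the kernel lens transform $T_{k}^{-1}$ of Definition \ref{def:KernelG-Lens} to each equation of the BBGKY hierarchy \ref{hierarchy:TheBBGKYHierarchyWithTraps} and check, term by term, that it turns into the corresponding equation of \ref{hierarchy:GlensBBGKY}. Since $T_{k}$ is unitary and $T_{k}^{-1}$ is its genuine two-sided inverse, carrying out this single computation proves both implications of the ``if and only if'' at once. The linear part -- the left-hand side of \ref{hierarchy:TheBBGKYHierarchyWithTraps} -- is handled directly by Proposition \ref{Proposition:ChangeVariableEffectOfGLens}, which tells us that, writing $u_{N}^{(k)}=T_{k}^{-1}\gamma _{N}^{(k)}$,
\begin{equation*}
\left( i\partial _{\tau }+\tfrac{1}{2}\triangle _{\mathbf{y}_{k}}-\tfrac{1}{2}\triangle _{\mathbf{y}_{k}^{\prime }}\right) u_{N}^{(k)}=\frac{h^{(k)}}{1+\omega ^{2}\tau ^{2}}\left( i\partial _{t}-\left( -\tfrac{1}{2}\triangle _{\mathbf{x}_{k}}+\omega ^{2}\tfrac{\left\vert \mathbf{x}_{k}\right\vert ^{2}}{2}\right) +\left( -\tfrac{1}{2}\triangle _{\mathbf{x}_{k}^{\prime }}+\omega ^{2}\tfrac{\left\vert \mathbf{x}_{k}^{\prime }\right\vert ^{2}}{2}\right) \right) \gamma _{N}^{(k)}.
\end{equation*}
Hence it only remains to multiply the right-hand side of \ref{hierarchy:TheBBGKYHierarchyWithTraps} by $h^{(k)}/(1+\omega ^{2}\tau ^{2})$ and recognize the outcome as the right-hand side of \ref{hierarchy:GlensBBGKY}.

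For the self-interaction terms I would use the change-of-variables dictionary built into Definition \ref{def:KernelG-Lens}, namely $\mathbf{x}_{k}=\mathbf{y}_{k}\cos \omega t$ together with the elementary identity $\cos ^{2}\omega t=(1+\omega ^{2}\tau ^{2})^{-1}$ (which follows from $\omega \tau =\tan \omega t$), so that $x_{i}-x_{j}=(y_{i}-y_{j})/\sqrt{1+\omega ^{2}\tau ^{2}}$ and similarly for the primed variables; since $V_{N}(x)=N^{3\beta }V(N^{\beta }x)$ is a multiplication operator, no Jacobian appears. Because $h^{(k)}\,\gamma _{N}^{(k)}(t,\mathbf{x}_{k};\mathbf{x}_{k}^{\prime })=u_{N}^{(k)}(\tau ,\mathbf{y}_{k};\mathbf{y}_{k}^{\prime })$ by the very definition of $h^{(k)}$ in formula \ref{def:inverse lens}, the sum $\frac{1}{N}\sum_{1\leqslant i<j\leqslant k}(V_{N}(x_{i}-x_{j})-V_{N}(x_{i}^{\prime }-x_{j}^{\prime }))\gamma _{N}^{(k)}$, once multiplied by $h^{(k)}/(1+\omega ^{2}\tau ^{2})$, becomes exactly the first sum on the right of \ref{hierarchy:GlensBBGKY}.

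The only step needing care is the contraction term, where $\gamma _{N}^{(k+1)}$ is evaluated on the partial diagonal $x_{k+1}=x_{k+1}^{\prime }$ and then integrated in $x_{k+1}$. Here I would substitute $x_{k+1}=y_{k+1}\cos \omega t$, which contributes the Jacobian $(\cos \omega t)^{3}=(1+\omega ^{2}\tau ^{2})^{-3/2}$, and observe that the phase-and-normalization factor $h^{(k+1)}$ of formula \ref{def:inverse lens}, restricted to $y_{k+1}=y_{k+1}^{\prime }$, loses the quadratic phase coming from the $(k+1)$-st variable and therefore equals $h^{(k)}\,(1+\omega ^{2}\tau ^{2})^{-3/2}$; the two factors of $(1+\omega ^{2}\tau ^{2})^{-3/2}$ cancel, yielding
\begin{equation*}
h^{(k)}\int \gamma _{N}^{(k+1)}(t,\mathbf{x}_{k},x_{k+1};\mathbf{x}_{k}^{\prime },x_{k+1})\,dx_{k+1}=\int u_{N}^{(k+1)}(\tau ,\mathbf{y}_{k},y_{k+1};\mathbf{y}_{k}^{\prime },y_{k+1})\,dy_{k+1}.
\end{equation*}
Combining this with the rescaling $x_{j}-x_{k+1}=(y_{j}-y_{k+1})/\sqrt{1+\omega ^{2}\tau ^{2}}$ of the potential arguments turns the contraction term, after multiplication by $h^{(k)}/(1+\omega ^{2}\tau ^{2})$, into the second sum on the right of \ref{hierarchy:GlensBBGKY}. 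Finally, since $\tau =(\tan \omega t)/\omega $ is an increasing bijection of $[-T_{0},T_{0}]$ onto $[-(\tan \omega T_{0})/\omega ,(\tan \omega T_{0})/\omega ]$, the time interval transforms as stated. I expect the bookkeeping of $h^{(k)}$ versus $h^{(k+1)}$ on the partial diagonal -- making sure the last-variable phases cancel so that exactly one power of $(1+\omega ^{2}\tau ^{2})^{-3/2}$ is absorbed by the Jacobian -- to be the only genuinely delicate point; everything else is the ``direct computation'' already flagged after Proposition \ref{Proposition:ChangeVariableEffectOfGLens}.
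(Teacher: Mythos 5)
Your computation is correct and matches the paper's (unwritten) argument exactly: the paper derives this lemma from Proposition \ref{Proposition:ChangeVariableEffectOfGLens} by the same direct change of variables, and your bookkeeping of the contraction term — the restriction of $h^{(k+1)}$ to the partial diagonal equalling $h^{(k)}\left( 1+\omega ^{2}\tau ^{2}\right) ^{-3/2}$, which is absorbed by the Jacobian of $x_{k+1}=y_{k+1}\cos \omega t$ — is precisely the right delicate point and is handled correctly.
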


\begin{lemma}
\label{Lemma:GP hierarchy under GLens}$\left( \text{Gross-Pitaevskii
hierarchy under the lens transform}\right) $ $\left\{ \gamma ^{(k)}\right\} $
solves the 3D Gross-Pitaevskii hierarchy with a quadratic trap 
\begin{equation}
i\partial _{t}\gamma ^{(k)}-\left( -\frac{1}{2}\triangle _{\mathbf{x}%
_{k}}+\omega ^{2}\frac{\left\vert \mathbf{x}_{k}\right\vert ^{2}}{2}\right)
\gamma ^{(k)}+\left( -\frac{1}{2}\triangle _{\mathbf{x}_{k}^{\prime
}}+\omega ^{2}\frac{\left\vert \mathbf{x}_{k}^{\prime }\right\vert ^{2}}{2}%
\right) \gamma ^{(k)}=b_{0}\sum_{j=1}^{k}B_{j,k+1}\gamma ^{(k+1)},
\label{hierarchy:TheGPHierarchyWithTraps}
\end{equation}%
in $\left[ -T_{0},T_{0}\right] $ if and only if $\left\{
u^{(k)}=T_{k}^{-1}\gamma ^{(k)}\right\} $ solves the hierarchy%
\begin{equation}
\left( i\partial _{\tau }+\frac{1}{2}\triangle _{\mathbf{y}_{k}}-\frac{1}{2}%
\triangle _{\mathbf{y}_{k}^{\prime }}\right) u^{(k)}=\left( 1+\omega
^{2}\tau ^{2}\right) ^{\frac{1}{2}}b_{0}\sum_{j=1}^{k}B_{j,k+1}u^{(k+1)},
\label{hierarchy:GLensGP}
\end{equation}%
in $\left[ -\frac{\tan \omega T_{0}}{\omega },\frac{\tan \omega T_{0}}{%
\omega }\right] .$
\end{lemma}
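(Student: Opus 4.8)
The plan is to derive the equivalence directly from Proposition \ref{Proposition:ChangeVariableEffectOfGLens} together with a short algebraic computation of how the contraction operators $B_{j,k+1}$ behave under the lens map; the same method handles Lemma \ref{Lemma:BBGKY hierarchy under GLens}. First I would apply Proposition \ref{Proposition:ChangeVariableEffectOfGLens} with $u^{(k)}=T_{k}^{-1}\gamma ^{(k)}$: it rewrites the left-hand side of hierarchy \ref{hierarchy:GLensGP}, namely $\left( i\partial _{\tau }+\frac{1}{2}\triangle _{\mathbf{y}_{k}}-\frac{1}{2}\triangle _{\mathbf{y}_{k}^{\prime }}\right) u^{(k)}$, as $\frac{h^{(k)}}{1+\omega ^{2}\tau ^{2}}$ times the Hermite-side evolution operator applied to $\gamma ^{(k)}$. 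If $\left\{ \gamma ^{(k)}\right\} $ solves hierarchy \ref{hierarchy:TheGPHierarchyWithTraps}, that Hermite-side expression equals $b_{0}\sum_{j=1}^{k}B_{j,k+1}\gamma ^{(k+1)}$, so the lemma comes down to the pointwise identity $\frac{h^{(k)}}{1+\omega ^{2}\tau ^{2}}B_{j,k+1}\gamma ^{(k+1)}=\left( 1+\omega ^{2}\tau ^{2}\right) ^{1/2}B_{j,k+1}u^{(k+1)}$, where $u^{(k+1)}=T_{k+1}^{-1}\gamma ^{(k+1)}$.

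To establish that identity I would substitute $\gamma ^{(k+1)}=T_{k+1}u^{(k+1)}$ from Definition \ref{def:KernelG-Lens} and carry out the contraction by hand. The key point is that $B_{j,k+1}$ forces $x_{k+1}=x_{k+1}^{\prime }=x_{j}$, so the part of the quadratic phase of $T_{k+1}u^{(k+1)}$ carried by the variable $x_{k+1}$, being proportional to $\left\vert x_{k+1}\right\vert ^{2}-\left\vert x_{k+1}^{\prime }\right\vert ^{2}$, vanishes identically on that diagonal; what remains is exactly the phase of $h^{(k)}$ once one passes to the variables $\tau =\tan \omega t/\omega $ and $\mathbf{y}_{k}=\mathbf{x}_{k}/\cos \omega t$. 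The scalar prefactors then match: $T_{k+1}$ carries $(\cos \omega t)^{-3(k+1)}$, of which $(\cos \omega t)^{-3k}$ is absorbed by $h^{(k)}=e^{i\omega \tan \omega t\left( \left\vert \mathbf{x}_{k}\right\vert ^{2}-\left\vert \mathbf{x}_{k}^{\prime }\right\vert ^{2}\right) /2}(\cos \omega t)^{3k}$, and the leftover $(\cos \omega t)^{-3}$ combined with $\frac{1}{1+\omega ^{2}\tau ^{2}}=\cos ^{2}\omega t$ equals precisely $(\cos \omega t)^{-1}=\left( 1+\omega ^{2}\tau ^{2}\right) ^{1/2}$, the asserted factor in front of $B_{j,k+1}u^{(k+1)}$. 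This proves the forward implication on $\left[ -\frac{\tan \omega T_{0}}{\omega },\frac{\tan \omega T_{0}}{\omega }\right] $, where it is understood that $T_{0}<\pi /(2\omega )$ so that $\cos \omega t$ never vanishes. The reverse implication requires nothing new, since $T_{k}$ and $T_{k+1}$ are explicit unitary bijections with explicit inverses and every line above is reversible.

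There is essentially no analysis in this; the only step where care is needed --- and where a sign or a power of $\cos \omega t$ could be mistyped --- is the bookkeeping of the quadratic phases and the powers of $\cos \omega t$, in particular the observation that the phase attached to the contracted variable is trivial on the diagonal, so that the lens transform $h^{(k+1)}$ of $\gamma ^{(k+1)}$ degenerates into $h^{(k)}$ and the transformed hierarchy still closes at level $k$, with only the harmless extra scalar $\left( 1+\omega ^{2}\tau ^{2}\right) ^{1/2}$. I expect the final write-up to be only a couple of lines of direct computation. The same scheme, with $b_{0}\sum_{j}B_{j,k+1}$ replaced by the rescaled multiplication and contraction operators built from $V_{N}\left( \cdot /(1+\omega ^{2}\tau ^{2})^{1/2}\right) $, yields Lemma \ref{Lemma:BBGKY hierarchy under GLens}; there the $(k+1)$-st integration variable also gets rescaled by $\cos \omega t$, contributing a Jacobian $(\cos \omega t)^{3}$ and hence producing the prefactor $\frac{1}{1+\omega ^{2}\tau ^{2}}$ instead of $\left( 1+\omega ^{2}\tau ^{2}\right) ^{1/2}$ on the interaction term.
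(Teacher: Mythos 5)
Your proposal is correct and follows exactly the route the paper intends: the paper states this lemma as a direct consequence of Proposition \ref{Proposition:ChangeVariableEffectOfGLens} plus a computation of how $B_{j,k+1}$ interacts with $T_{k+1}$, which is precisely the bookkeeping you carry out, and your phase cancellation on the diagonal $x_{k+1}=x_{k+1}^{\prime}$ together with the factor count $(\cos \omega t)^{-3}\cdot\cos^{2}\omega t=(1+\omega^{2}\tau^{2})^{1/2}$ checks out. Your closing remark about the extra Jacobian $(\cos\omega t)^{3}$ producing the prefactor $\frac{1}{1+\omega^{2}\tau^{2}}$ in the BBGKY case (Lemma \ref{Lemma:BBGKY hierarchy under GLens}) is also consistent with hierarchy \ref{hierarchy:GlensBBGKY}.
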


The lens transform for the Hilbert-Schmidt kernels is not only by definition
a unitary transform on $L^{2}\left( \mathbb{R}^{6k}\right) $, it is also an
isometry on the space of the self-adjoint trace class operator kernels.

\begin{lemma}
\label{Lemma:TraceNormPreservation}$\forall t\in \lbrack -T_{0},T_{0}]$, $%
\forall K(\mathbf{y}_{k},\mathbf{y}_{k}^{\prime })$ the kernel of a
self-adjoint trace class operator on $L^{2}\left( \mathbb{R}^{3k}\right) $.
If 
\begin{equation*}
\int K(\mathbf{y}_{k},\mathbf{y}_{k}^{\prime })f(\mathbf{y}_{k}^{\prime })d%
\mathbf{y}_{k}^{\prime }=\lambda f(\mathbf{y}_{k}),
\end{equation*}%
then%
\begin{equation*}
\int \left( T_{k}K\right) (\mathbf{x}_{k},\mathbf{x}_{k}^{\prime })\left(
M_{N}f\right) \left( \mathbf{x}_{k}^{\prime }\right) d\mathbf{x}_{k}^{\prime
}=\lambda \left( M_{N}f\right) \left( \mathbf{x}_{k}\right) .
\end{equation*}%
In other words, the eigenvectors of the kernel $\left( T_{k}K\right) (%
\mathbf{x}_{k},\mathbf{x}_{k}^{\prime })$ are exactly the lens transform
(Lemma \ref{def:functionG-Lens}) of the eigenvectors of the kernel $K(%
\mathbf{y}_{k},\mathbf{y}_{k}^{\prime })$ with the same eigenvalues. In
particular, we have%
\begin{equation*}
\limfunc{Tr}\left\vert T_{k}K\right\vert =\limfunc{Tr}\left\vert
K\right\vert .
\end{equation*}
\end{lemma}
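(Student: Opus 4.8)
The plan is to recognize the kernel lens transform $T_{k}$ of Definition \ref{def:KernelG-Lens} as conjugation of the associated operator by the unitary function lens transform of Definition \ref{def:functionG-Lens}, and then to read off every assertion from the elementary fact that unitary conjugation preserves the whole spectral picture of a self-adjoint trace class operator. Throughout I write $M_{k}$ for the $k$-particle analogue of the map $M_{N}$ in Definition \ref{def:functionG-Lens}, i.e.\ the same formula with $k$ in place of $N$.

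First I would identify $T_{k}$ with conjugation by $M_{k}$. Let $\mathcal{K}$ be the trace class operator on $L^{2}(\mathbb{R}^{3k})$ with kernel $K(\mathbf{y}_{k},\mathbf{y}_{k}^{\prime })$. By Definition \ref{def:functionG-Lens}, $M_{k}$ is multiplication by $m(t,\mathbf{x}_{k})=(\cos \omega t)^{-3k/2}e^{-i\omega \tan \omega t\left\vert \mathbf{x}_{k}\right\vert ^{2}/2}$ composed with the dilation $\mathbf{x}_{k}\mapsto \mathbf{x}_{k}/\cos \omega t$ and the time change $t\mapsto \tau =\tan (\omega t)/\omega $; since $M_{k}$ is unitary, $M_{k}^{-1}$ (its adjoint) carries the conjugate multiplier $\overline{m(t,\mathbf{x}_{k}^{\prime })}$ and the inverse change of variables. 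A short computation of the kernel of $M_{k}\mathcal{K}M_{k}^{-1}$ then gives $m(t,\mathbf{x}_{k})\overline{m(t,\mathbf{x}_{k}^{\prime })}$ times $K$ evaluated at $(\mathbf{x}_{k}/\cos \omega t;\mathbf{x}_{k}^{\prime }/\cos \omega t)$, and since $m(t,\mathbf{x}_{k})\overline{m(t,\mathbf{x}_{k}^{\prime })}=(\cos \omega t)^{-3k}e^{-i\omega \tan \omega t(\left\vert \mathbf{x}_{k}\right\vert ^{2}-\left\vert \mathbf{x}_{k}^{\prime }\right\vert ^{2})/2}$, this is precisely $(T_{k}K)(\mathbf{x}_{k},\mathbf{x}_{k}^{\prime })$ from Definition \ref{def:KernelG-Lens}. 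Hence $T_{k}K$ is the kernel of $M_{k}\mathcal{K}M_{k}^{-1}$; this is the operator level counterpart of Proposition \ref{Proposition:ChangeVariableEffectOfGLens}.

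With this identification the eigenvalue assertion is immediate: if $\mathcal{K}f=\lambda f$, equivalently $\int K(\mathbf{y}_{k},\mathbf{y}_{k}^{\prime })f(\mathbf{y}_{k}^{\prime })\,d\mathbf{y}_{k}^{\prime }=\lambda f(\mathbf{y}_{k})$, then $(M_{k}\mathcal{K}M_{k}^{-1})(M_{k}f)=M_{k}\mathcal{K}f=\lambda M_{k}f$, which is exactly the stated equation for $T_{k}K$ with eigenvector $M_{k}f$. Conversely, applying $M_{k}^{-1}$ shows that every eigenvector of $M_{k}\mathcal{K}M_{k}^{-1}$ is $M_{k}g$ for an eigenvector $g$ of $\mathcal{K}$ with the same eigenvalue; hence the two operators have the same eigenvalues with the same multiplicities, $M_{k}$ sends an orthonormal eigenbasis of $\mathcal{K}$ to one of $M_{k}\mathcal{K}M_{k}^{-1}$ (orthonormality surviving because $M_{k}$ is unitary), and $T_{k}K$ is again self-adjoint and trace class. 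For the final claim I would then use that $\limfunc{Tr}\left\vert \cdot \right\vert $ of a self-adjoint trace class operator is the sum of the moduli of its eigenvalues counted with multiplicity; as that list is unchanged, $\limfunc{Tr}\left\vert T_{k}K\right\vert =\limfunc{Tr}\left\vert \mathcal{K}\right\vert =\limfunc{Tr}\left\vert K\right\vert $, or equivalently $\left\vert M_{k}\mathcal{K}M_{k}^{-1}\right\vert =M_{k}\left\vert \mathcal{K}\right\vert M_{k}^{-1}$ together with invariance of the trace under unitary conjugation.

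The hard part here is essentially nonexistent: everything past the first identification is the textbook behaviour of trace class operators under unitary conjugation, so the only genuine task is the bookkeeping that matches the multiplier and dilation in Definition \ref{def:KernelG-Lens} with $m(t,\mathbf{x}_{k})\overline{m(t,\mathbf{x}_{k}^{\prime })}$ and the accompanying change of variables. If one prefers to avoid operator language, the same computation runs directly: substitute $M_{k}f$ into $\int (T_{k}K)(\mathbf{x}_{k},\mathbf{x}_{k}^{\prime })\,(M_{k}f)(\mathbf{x}_{k}^{\prime })\,d\mathbf{x}_{k}^{\prime }$, change variables $\mathbf{x}_{k}^{\prime }=\mathbf{y}_{k}^{\prime }\cos \omega t$, invoke the eigen-equation for $K$, and collect phases.
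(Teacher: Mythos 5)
Your proposal is correct and matches the paper's (one-line) argument: the paper simply declares the lemma a straightforward computation, and your identification of $T_{k}K$ with the kernel of $M_{k}\mathcal{K}M_{k}^{-1}$, followed by unitary invariance of the spectrum and of $\limfunc{Tr}\left\vert \cdot \right\vert$, is exactly that computation packaged in operator language, with the direct change-of-variables check you sketch at the end being the paper's intended verification. No gaps: you also handle the converse direction needed for ``exactly'' and the passage from equal eigenvalue lists to equal trace norms.
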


\begin{proof}
This is a straight forward computation. We remark that we have defined the
generalized lens transform for a function and a kernel separately via
Definitions \ref{def:functionG-Lens} and \ref{def:KernelG-Lens}.
\end{proof}

Once we have proved the following proposition which relates the energy of
the two sides of the lens transform, we can start the proof of Theorem \ref%
{Theorem:SmoothVersionofTheMainTheorem}.

\begin{proposition}
\label{Proposition:GettingTheE-S-YEnergyBound}Let $\psi _{N}(t,\mathbf{x}%
_{N})$ be the solution to equation \ref{equation:N-Body Schrodinger with
switchable trap} for some $\beta \in \left( 0,3/5\right) $ subject to
initial $\psi _{N}(0)$ which satisfies the energy condition%
\begin{equation*}
\left\langle \psi _{N}(0),H_{N}^{k}\psi _{N}(0)\right\rangle \leqslant
C^{k}N^{k}.
\end{equation*}%
If $u_{N}(\tau ,\mathbf{y}_{N})=M_{N}^{-1}\psi _{N}$, then there is a $%
C\geqslant 0$, for all $k\geqslant 0$, $\exists N_{0}\left( k\right) $ such
that 
\begin{equation*}
\left\langle u_{N}(\tau ),\dprod\limits_{j=1}^{k}\left( 1-\triangle
_{y_{j}}\right) u_{N}(\tau )\right\rangle \leqslant C^{k},
\end{equation*}%
for all $N\geqslant N_{0}$ and all $\tau \in \left[ -\frac{\tan \omega T_{0}%
}{\omega },\frac{\tan \omega T_{0}}{\omega }\right] $ provided that $T_{0}<%
\frac{\pi }{2\omega }.$
\end{proposition}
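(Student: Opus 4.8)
The plan is to transport the conserved $N$-body energy through the lens transform. The three ingredients are: (i) $\langle\psi_N(t),H_N^m\psi_N(t)\rangle$ is conserved along the flow, hence $\le C^mN^m$ for every $m$ and every $t\in[-T_0,T_0]$ by the energy condition; (ii) an Erd\"{o}s--Schlein--Yau type energy inequality for the trapped Hamiltonian converts this into a bound for the Hermite--Sobolev norm of $\psi_N(t)$; (iii) since $T_0<\pi/(2\omega)$, the lens transform of Definition~\ref{def:functionG-Lens} is well defined on $[-T_0,T_0]$ and acts boundedly, uniformly in time, on fixed-time Sobolev spaces, so the bound passes to $u_N(\tau)=M_N^{-1}\psi_N$ in the stated form. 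Ingredient (i) is immediate: $V\ge0$ gives $H_N\ge\tfrac32N\omega>0$, and $H_N$ commutes with $e^{-itH_N}$, so $\|H_N^{m/2}\psi_N(t)\|^2=\|H_N^{m/2}\psi_N(0)\|^2\le C^mN^m$.

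For (ii), write $h_j=-\tfrac12\triangle_{x_j}+\tfrac{\omega^2}{2}|x_j|^2$. I would establish, by induction on $k$, a constant $C_0>0$ independent of $k$ and $N$ such that on the symmetric subspace of $L^2(\mathbb{R}^{3N})$, for all $N\ge N_0(k)$,
\[
H_N^{\,k}\ \ge\ C_0^{\,k}\,N^{k}\,\prod_{j=1}^k\bigl(1+h_j\bigr),
\]
which combined with (i) yields $\langle\psi_N(t),\prod_{j=1}^k(1+h_j)\psi_N(t)\rangle\le C_1^{\,k}$ on $[-T_0,T_0]$. This is the induction of Erd\"{o}s--Schlein--Yau \cite{E-S-Y3} and Kirkpatrick--Schlein--Staffilani \cite{Kirpatrick}. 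The only new feature is the confining term $\tfrac{\omega^2}{2}|x_j|^2$, but it is a multiplication operator, so $[h_j,h_i]=0$ for $i\neq j$ and $[h_j,V_N(x_i-x_j)]=[-\tfrac12\triangle_{x_j},V_N(x_i-x_j)]$ --- precisely the relations the induction relies on --- and the quadratic forms that must be bounded are exactly those of the untrapped problem. Controlling the potential contributions, which involve $V_N$, $\nabla V_N$ and $\nabla^2V_N$, by products of the $h_j$ without losing powers of $N$ is where the hypotheses $V\in H^2\cap W^{2,\infty}$ and $\beta<3/5$ are used; the latter is dictated by the worst scaling that must be absorbed, such as $\|\nabla^2V_N\|_{L^\infty}=N^{5\beta}\|\nabla^2V\|_{L^\infty}$ with $5\beta<3$.

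For (iii), on $[-T_0,T_0]$ one has $\cos\omega t\ge\cos\omega T_0>0$, so $M_N$ is well defined and $\tau=\tan(\omega t)/\omega$ ranges over the compact interval in the statement. Inverting Definition~\ref{def:functionG-Lens}, $u_N(\tau,\mathbf{y}_N)=(\cos\omega t)^{3N/2}\,e^{i\omega\sin(\omega t)\cos(\omega t)|\mathbf{y}_N|^2/2}\,\psi_N\bigl(t,(\cos\omega t)\mathbf{y}_N\bigr)$, so $M_N$ is, particle by particle, a one-particle Gaussian phase composed with a dilation. Writing $\langle u_N(\tau),\prod_{j=1}^k(1-\triangle_{y_j})u_N(\tau)\rangle=\|\prod_{j=1}^k(1-\triangle_{y_j})^{1/2}u_N(\tau)\|_{L^2}^2$, one slides each of the mutually commuting factors $(1-\triangle_{y_j})^{1/2}$ next to the $j$-th phase and dilation and conjugates: $1-\triangle_{y_j}$ becomes $1-\triangle_{y_j}$ plus a first-order term plus a multiple of $|y_j|^2$, with coefficients bounded on $[-T_0,T_0]$, and after undoing the dilation this is dominated on the $\mathbf{x}$-side by $C_2(1+h_j)$ with $C_2$ uniform in $\tau$. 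Propagating this through the product (each factor acting on a distinct particle) gives
\[
\Bigl\langle u_N(\tau),\prod_{j=1}^k(1-\triangle_{y_j})u_N(\tau)\Bigr\rangle\ \le\ C_2^{\,k}\,\Bigl\langle\psi_N(t),\prod_{j=1}^k(1+h_j)\psi_N(t)\Bigr\rangle\ \le\ C^{\,k}\qquad(N\ge N_0(k)),
\]
which is the assertion.

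The only real difficulty is (ii): the combinatorics of expanding $H_N^{\,k}$, and, above all, absorbing the singular, $N$-dependent pair interaction and its derivatives into products of the one-particle energies $h_j$ while bookkeeping powers of $N$. That is what limits the range to $\beta<3/5$. Ingredient (i) is free, and (iii), though it requires the usual care with products of noncommuting operators, is in essence a change of variables on a bounded time interval, introducing only time-dependent constants that stay bounded there.
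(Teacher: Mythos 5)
Your proposal is correct and follows essentially the same route as the paper: conservation of $\left\langle \psi_{N}(t),H_{N}^{k}\psi_{N}(t)\right\rangle$, an Erd\"{o}s--Schlein--Yau-type operator lower bound $H_{N}^{k}\geqslant C^{k}N^{k}\dprod_{j=1}^{k}\left( -\triangle _{x_{j}}+\omega ^{2}\left\vert x_{j}\right\vert ^{2}\right)$ obtained by adapting the E-E-S-Y induction to the trapped Hamiltonian (the paper's Lemma \ref{Lemma:ESYLemma}), and transfer through the lens transform by conjugation, which the paper phrases via the intertwined momentum $P_{x}(t)=i\nabla _{x}\cos \omega t-\omega x\sin \omega t$ and the uniform bound $1+P_{x}^{2}(t)\leqslant C\left( -\triangle _{x}+\omega ^{2}\left\vert x\right\vert ^{2}\right)$ on $[-T_{0},T_{0}]$ (Lemma \ref{Lemma:momentum between two sides of the lens transform}), exactly the content of your step (iii).
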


The rest of this section is the proof of Proposition \ref%
{Proposition:GettingTheE-S-YEnergyBound}. We prove it for $\omega >0$
through Lemmas \ref{Lemma:ESYLemma} and \ref{Lemma:momentum between two
sides of the lens transform} since the case $\omega =0$ has already been
studied in \cite{E-E-S-Y1, E-S-Y2}.

\begin{lemma}
\label{Lemma:ESYLemma}For $\beta \in \left( 0,3/5\right) ,$ there is a $%
C\geqslant 0$, for all $k\geqslant 0$, $\exists N_{0}\left( k\right) $ such
that 
\begin{equation*}
\left\langle \varphi ,H_{N}^{k}\varphi \right\rangle \geqslant
C^{k}N^{k}\left\langle \varphi ,\dprod\limits_{j=1}^{k}\left( -\triangle
_{x_{j}}+\omega ^{2}\left\vert x_{j}\right\vert ^{2}\right) \varphi
\right\rangle ,
\end{equation*}%
for all $N\geqslant N_{0}$ and all $\varphi \in L_{s}^{2}\left( \mathbb{R}%
^{3N}\right) .$
\end{lemma}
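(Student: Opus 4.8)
The plan is to reduce the inequality with the trapped operator $h_j := -\triangle_{x_j} + \omega^2|x_j|^2$ on the right-hand side to the corresponding inequality with $-\triangle_{x_j}$ alone, which is precisely the Erd\H{o}s--Schlein--Yau energy estimate (Proposition 1 / Lemma in \cite{E-S-Y2, E-E-S-Y1}). First I would record the structure of $H_N$: writing $H_N = \sum_{j=1}^N h_j/2 + \frac{1}{N}\sum_{i<j} V_N(x_i-x_j)$ with $V_N \geq 0$, the main point is that $H_N$ differs from the ESY Hamiltonian only by the harmonic part $\frac{1}{2}\sum_j \omega^2|x_j|^2$, which is a nonnegative operator that commutes with itself across particles. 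The heart of the matter is that $V_N \geq 0$, $V \in W^{2,\infty}$, and $\beta < 3/5$ are exactly the hypotheses under which ESY prove $H_N^k \geq C^k N^k \prod_{j=1}^k(1-\triangle_{x_j})$ on the symmetric subspace $L^2_s$; so I would like to ``add back'' the harmonic oscillators on both sides.

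The key step is an operator inequality of the form
\begin{equation*}
H_N^k \;\geq\; C^k N^k \prod_{j=1}^k \Bigl(-\triangle_{x_j} + \omega^2|x_j|^2 + 1\Bigr)
\end{equation*}
on $L^2_s(\mathbb{R}^{3N})$, from which the stated bound (dropping the harmless $+1$) is immediate. To get this I would proceed by induction on $k$ following the ESY two-step scheme: (i) a one-particle-type estimate showing $H_N \gtrsim N(h_1 + \text{const})$ after localizing, using positivity of $V_N$ and the operator bound $|V_N(x_i-x_j)| \lesssim N^{-1}(1-\triangle_{x_i})$ coming from $V \in L^\infty$ together with $\beta<1$ (here one also uses that the harmonic term is positive and so only helps); and (ii) the inductive step $H_N^k \geq H_N^{k-2}\,(\text{one-particle factor for }x_k)\,\cdots$, where the commutators $[H_N, \prod_{j<k}(h_j+1)]$ generated by the interaction potential are controlled by $N^{-1}\|V\|_{W^{2,\infty}}$ times lower-order energy factors — this is where the condition $\beta<3/5$ enters, exactly as in \cite{E-S-Y2}. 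The presence of $\omega^2|x_j|^2$ only modifies the ``free'' propagation estimates in the ESY argument by replacing $-\triangle$ with the Hermite operator; since $\omega^2|x|^2 \geq 0$ all their positivity-based bounds survive verbatim, and the extra cross terms $[\,-\triangle_{x_i}, \omega^2|x_j|^2\,]=0$ vanish for $i\neq j$ while $[\,V_N(x_i-x_j),\omega^2|x_\ell|^2\,]$ for $\ell\in\{i,j\}$ is a first-order-in-momentum multiplication operator bounded by $\omega^2\|\nabla V_N\|_\infty \lesssim \omega^2 N^{\beta-1}\to 0$.

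The main obstacle I anticipate is bookkeeping rather than conceptual: one must check that every commutator term in the ESY induction that now additionally involves the factors $\omega^2|x_j|^2$ can still be absorbed by a fraction of the surviving energy, and that the localization arguments (cutoffs in the number of particles within interaction range) are unaffected by the trap. A clean way to sidestep most of this is to invoke the ESY theorem as a black box for the operator $\widetilde H_N := \sum_j(-\triangle_{x_j})/2 + \frac1N\sum_{i<j}V_N$ and then prove the comparison $H_N^k \geq c^k\,\bigl(\widetilde H_N + \frac12\sum_j\omega^2|x_j|^2\bigr)^k$ by a commutator expansion, estimating $\|[\,\widetilde H_N,\,\sum_j\omega^2|x_j|^2\,]\|$ against $N$ times a bounded-energy factor; since each commutator carries a gain of $N^{\beta-1}$ or is supported on the kinetic part only, the binomial expansion of the $k$-th power closes with a constant $C$ independent of $k$ and $N$ for $N\geq N_0(k)$. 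Once the operator inequality is in hand, sandwiching with $\varphi\in L^2_s$ and discarding the positive constant shift gives the claim.
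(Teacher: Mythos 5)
Your opening plan (rerun the Erd\H{o}s--Schlein--Yau induction with the trapped one-particle operator, using that $V_N\geqslant 0$ and that the trap is positive) is in the right spirit and is essentially what the paper does: it repeats Proposition 1 of \cite{E-E-S-Y1} verbatim with $\left(1-\triangle_{x_j}\right)$ replaced by $S_j^2=-\triangle_{x_j}+\omega^2\left\vert x_j\right\vert^2$, invoking the Hermite--Sobolev inequality $\left\Vert (1-\triangle)^{\alpha/2}f\right\Vert_{L^2}\leqslant C_\alpha\left\Vert (-\triangle+\omega^2|x|^2)^{\alpha/2}f\right\Vert_{L^2}$ (Thangavelu) wherever Sobolev is used. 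However, your proposed ``clean way to sidestep'' the bookkeeping has a genuine gap. The untrapped ESY bound $\widetilde H_N^k\gtrsim N^k\prod_j(1-\triangle_{x_j})$ contains no information whatsoever about $\left\vert x_j\right\vert^2$, so no comparison of the form $H_N^k\geqslant c^k\bigl(\widetilde H_N+\tfrac12\sum_j\omega^2|x_j|^2\bigr)^k$ (which is in fact an identity, since $H_N$ equals that sum) can produce the required right-hand side: expanding $\prod_{j=1}^k\bigl(-\triangle_{x_j}+\omega^2|x_j|^2\bigr)$ yields mixed terms such as $(-\triangle_{x_1})\,\omega^2|x_2|^2\cdots$, distributed over $k$ \emph{distinct} particles, and extracting these from the $k$-th power of the Hamiltonian is exactly the content of the symmetric ESY induction; a binomial/commutator expansion of $(\widetilde H_N+\text{trap})^k$ does not give it, because products of non-commuting positive operators are not positive and the black box never couples kinetic energy of one particle to the trap of another.

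Your commutator estimates are also off. $[V_N(x_i-x_j),\omega^2|x_\ell|^2]=0$ since both are multiplication operators, so there is nothing of size $\omega^2\|\nabla V_N\|_\infty N^{-1}$ there; and $[-\triangle_{x_j},\omega^2|x_j|^2]$ is an unbounded first-order operator with linearly growing coefficient, so ``estimating $\|[\widetilde H_N,\sum_j\omega^2|x_j|^2]\|$'' is not meaningful. The genuinely new error terms (relative to \cite{E-E-S-Y1}) arise when the operators $S_j^2$ land on $V_N$, producing $N^{3\beta}\,N^{\beta}(\nabla V)(N^\beta(x_1-x_2))$ paired against weights $\omega|x_2|$: these scale like $N^{4\beta-1}$ (not $N^{\beta-1}$), do not tend to zero in norm, and are unbounded because of the weight. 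The paper absorbs them into the main positive term $N^2\langle\varphi,S_1^2S_2^2\varphi\rangle$ by a weighted Cauchy--Schwarz (weight $\alpha=N^{-\beta}$) combined with the Sobolev bound $\|V(N^\beta\cdot)\|_{L^{3/2}}\lesssim N^{-2\beta}$ and the Hermite--Sobolev inequality; this absorption works for $\beta<2/3$, while the binding restriction $\beta<3/5$ in the lemma is inherited from the original ESY argument. So the correct route is your first one carried out in full, not the shortcut.
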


\begin{proof}
The proof basically follows Proposition 1 in \cite{E-E-S-Y1} step by step if
one replaces $\left( 1-\triangle _{x_{j}}\right) $ by $\left( -\triangle
_{x_{j}}+\omega ^{2}\left\vert x_{j}\right\vert ^{2}\right) $ and notices
that for $\omega >0$,%
\begin{equation*}
\left\Vert \left( 1-\triangle \right) ^{\frac{\alpha }{2}}f\right\Vert
_{L^{2}}\leqslant C_{\alpha }\left\Vert \left( -\triangle +\omega
^{2}\left\vert x\right\vert ^{2}\right) ^{\frac{\alpha }{2}}f\right\Vert
_{L^{2}}\text{ (\cite{Thangavelu})}
\end{equation*}%
when one uses Sobolev. There are some extra error terms which can be easily
handled. We illustrate the control of the extra error terms through the
following example. Write $S_{i}^{2}=\left( -\triangle _{x_{i}}+\omega
^{2}\left\vert x_{i}\right\vert ^{2}\right) $, consider%
\begin{eqnarray*}
&&N^{3\beta }\left( \left\langle \varphi ,S_{1}^{2}...S_{n+1}^{2}V\left(
N^{\beta }\left( x_{1}-x_{2}\right) \right) \varphi \right\rangle
+c.c.\right) \\
&=&N^{3\beta }\left( \left\langle S_{3}...S_{n+1}\varphi
,S_{1}^{2}S_{2}^{2}V\left( N^{\beta }\left( x_{1}-x_{2}\right) \right)
S_{3}...S_{n+1}\varphi \right\rangle +c.c.\right) ,
\end{eqnarray*}%
where c.c. denotes complex conjugates. Neglecting $S_{3}...S_{n+1}$, we have%
\begin{eqnarray*}
&&N^{3\beta }\left( \left\langle \varphi ,S_{1}^{2}S_{2}^{2}V\left( N^{\beta
}\left( x_{1}-x_{2}\right) \right) \varphi \right\rangle +c.c.\right) \\
&=&N^{3\beta }\left( \left\langle \varphi ,\left( -\triangle _{x_{1}}+\omega
^{2}\left\vert x_{1}\right\vert ^{2}\right) \left( -\triangle
_{x_{2}}+\omega ^{2}\left\vert x_{2}\right\vert ^{2}\right) V\left( N^{\beta
}\left( x_{1}-x_{2}\right) \right) \varphi \right\rangle +c.c.\right) \\
&=&N^{3\beta }\left( \left\langle \varphi ,\left( -\triangle _{x_{1}}\right)
\left( -\triangle _{x_{2}}\right) V\left( N^{\beta }\left(
x_{1}-x_{2}\right) \right) \varphi \right\rangle +c.c.\right) \\
&&+N^{3\beta }\left( \left\langle \varphi ,\left( -\triangle _{x_{1}}\right)
\left( \omega ^{2}\left\vert x_{2}\right\vert ^{2}\right) V\left( N^{\beta
}\left( x_{1}-x_{2}\right) \right) \varphi \right\rangle +c.c.\right) \\
&&+N^{3\beta }\left( \left\langle \varphi ,\left( \omega ^{2}\left\vert
x_{1}\right\vert ^{2}\right) \left( -\triangle _{x_{2}}\right) V\left(
N^{\beta }\left( x_{1}-x_{2}\right) \right) \varphi \right\rangle
+c.c.\right) \\
&&+N^{3\beta }\left( \left\langle \varphi ,\left( \omega ^{2}\left\vert
x_{1}\right\vert ^{2}\right) \left( \omega ^{2}\left\vert x_{2}\right\vert
^{2}\right) V\left( N^{\beta }\left( x_{1}-x_{2}\right) \right) \varphi
\right\rangle +c.c.\right) \\
&=&I+II+III+IV
\end{eqnarray*}%
Compared to \cite{E-E-S-Y1}, the extra error terms are $II,III,$ and $IV.$
Since $IV$ is positive, we only look at%
\begin{eqnarray*}
II &\geqslant &-2N^{4\beta }\left\vert \left\langle \nabla _{x_{1}}\omega
\left\vert x_{2}\right\vert \varphi ,\left\vert \left( \nabla V\right)
\left( N^{\beta }\left( x_{1}-x_{2}\right) \right) \right\vert \omega
\left\vert x_{2}\right\vert \varphi \right\rangle \right\vert \\
&\geqslant &-CN^{4\beta }\alpha \left\langle \nabla _{x_{1}}\omega
\left\vert x_{2}\right\vert \varphi ,\left\vert \left( \nabla V\right)
\left( N^{\beta }\left( x_{1}-x_{2}\right) \right) \right\vert \nabla
_{x_{1}}\omega \left\vert x_{2}\right\vert \varphi \right\rangle \\
&&-CN^{4\beta }\alpha ^{-1}\left\langle \omega \left\vert x_{2}\right\vert
\varphi ,\left\vert \left( \nabla V\right) \left( N^{\beta }\left(
x_{1}-x_{2}\right) \right) \right\vert \omega \left\vert x_{2}\right\vert
\varphi \right\rangle \\
&\geqslant &-CN^{4\beta }\left( \alpha \left\langle \varphi
,S_{1}^{2}S_{2}^{2}\varphi \right\rangle +\alpha ^{-1}N^{-2\beta
}\left\langle \omega \left\vert x_{2}\right\vert \varphi ,S_{1}^{2}\omega
\left\vert x_{2}\right\vert \varphi \right\rangle \right) \text{ (Sobolev at
the second term)} \\
&\geqslant &-CN^{3\beta }\left\langle \varphi ,S_{1}^{2}S_{2}^{2}\varphi
\right\rangle \text{ (}\alpha =N^{-\beta }\text{)} \\
&\geqslant &-CN^{3\beta -2}N^{2}\left\langle \varphi
,S_{1}^{2}S_{2}^{2}\varphi \right\rangle
\end{eqnarray*}%
As long as $\beta \in \left( 0,\frac{2}{3}\right) ,$ we can absorb the extra
error terms into the main term $N^{2}\left\langle \varphi
,S_{1}^{2}S_{2}^{2}\varphi \right\rangle $. The Sobolev we used is%
\begin{eqnarray*}
\int V\left( N^{\beta }\left( x_{1}-x_{2}\right) \right) \left\vert \varphi
\right\vert ^{2}dx_{1}dx_{2} &\leqslant &\left\Vert V\left( N^{\beta }\cdot
\right) \right\Vert _{L^{\frac{3}{2}}}\int \left( \int \left\vert \varphi
\right\vert ^{6}dx_{1}\right) ^{\frac{1}{3}}dx_{2} \\
&\leqslant &CN^{-2\beta }\left\Vert V\right\Vert _{L^{\frac{3}{2}}}\int
\left\vert \left( I-\triangle _{x_{1}}\right) ^{\frac{1}{2}}\varphi
\right\vert ^{2}dx_{1}dx_{2} \\
&\leqslant &CN^{-2\beta }\left\Vert V\right\Vert _{L^{\frac{3}{2}}}\int
\left\vert \left( -\triangle _{x_{1}}+\omega ^{2}\left\vert x_{1}\right\vert
^{2}\right) ^{\frac{1}{2}}\varphi \right\vert ^{2}dx_{1}dx_{2}.
\end{eqnarray*}
\end{proof}

\begin{lemma}
\label{Lemma:momentum between two sides of the lens transform}Let 
\begin{equation*}
P_{x}\left( t\right) =i\nabla _{x}\cos \omega t-\omega x\sin \omega t.
\end{equation*}%
If $u_{N}(\tau ,\mathbf{y}_{N})=M_{N}^{-1}\left( \psi _{N}\right) $, then%
\begin{eqnarray*}
\left\langle u_{N}(\tau ),\left( -\triangle _{y_{j}}\right) u_{N}(\tau
)\right\rangle &=&\left\langle P_{x_{j}}\left( t\right) \psi
_{N}(t),P_{x_{j}}\left( t\right) \psi _{N}(t)\right\rangle \\
&=&\left\langle \psi _{N}(t),P_{x_{j}}^{2}\left( t\right) \psi
_{N}(t)\right\rangle .
\end{eqnarray*}
\end{lemma}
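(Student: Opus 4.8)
The lemma asserts that, on each fixed time slice, the lens transform $M_N$ intertwines the free momentum $-i\nabla_{y_j}$ with (minus) the operator $P_{x_j}(t)=i\cos\omega t\,\nabla_{x_j}-\omega\sin\omega t\,x_j$. The plan is to treat this as a purely algebraic conjugation identity and then let unitarity of $M_N$ do all the work: first turn the quadratic form $\langle u_N,-\triangle_{y_j}u_N\rangle$ into the $L^2$-norm $\|\nabla_{y_j}u_N\|^2$, then differentiate the explicit formula for $u_N=M_N^{-1}\psi_N$ exactly once in $y_j$, recognize the outcome as $M_N^{-1}$ applied to $P_{x_j}(t)\psi_N$, apply unitarity once more, and close with self-adjointness of $P_{x_j}(t)$.

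In detail, I would fix $t\in[-T_0,T_0]$ with $T_0<\pi/(2\omega)$ and abbreviate $c=\cos\omega t$, $s=\sin\omega t$; the change of variables $\tau=\tan\omega t/\omega$, $\mathbf{y}_N=\mathbf{x}_N/c$ yields the elementary identities $1+\omega^2\tau^2=c^{-2}$, $\omega^2\tau/(1+\omega^2\tau^2)=\omega s c$, and $\arctan(\omega\tau)/\omega=t$, so that Definition~\ref{def:functionG-Lens} reads $u_N(\tau,\mathbf{y}_N)=c^{3N/2}e^{i\omega s c|\mathbf{y}_N|^2/2}\psi_N(t,c\mathbf{y}_N)$. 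Since $M_N$ (hence $M_N^{-1}$) is unitary, $\langle u_N(\tau),(-\triangle_{y_j})u_N(\tau)\rangle=\|\nabla_{y_j}u_N(\tau)\|_{L^2(d\mathbf{y}_N)}^2$. Applying $\nabla_{y_j}$ by the chain rule, the Gaussian phase contributes $i\omega s c\,y_j=i\omega s\,x_j$ and the dilation $\mathbf{x}_N=c\mathbf{y}_N$ contributes $c\,\nabla_{x_j}$, so $\nabla_{y_j}u_N=c^{3N/2}e^{i\omega s c|\mathbf{y}_N|^2/2}(i\omega s\,x_j+c\nabla_{x_j})\psi_N(t,\mathbf{x}_N)$; multiplying by $-i$ and factoring gives $-i\nabla_{y_j}u_N(\tau)=-c^{3N/2}e^{i\omega s c|\mathbf{y}_N|^2/2}(P_{x_j}(t)\psi_N(t))(\mathbf{x}_N)=-M_N^{-1}\big(P_{x_j}(t)\psi_N(t)\big)$, the prefactor being precisely $M_N^{-1}$ in the $(\tau,\mathbf{y}_N)$ variables. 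Taking $L^2$-norms and using unitarity of $M_N^{-1}$ again produces $\|\nabla_{y_j}u_N(\tau)\|_{L^2(d\mathbf{y}_N)}^2=\|P_{x_j}(t)\psi_N(t)\|_{L^2(d\mathbf{x}_N)}^2=\langle P_{x_j}(t)\psi_N(t),P_{x_j}(t)\psi_N(t)\rangle$, which is the first equality. For the second, note that for each real $t$ the operator $P_{x_j}(t)=i\cos\omega t\,\nabla_{x_j}-\omega\sin\omega t\,x_j$ is the sum of the symmetric operator $i\cos\omega t\,\nabla_{x_j}$ and multiplication by the real function $-\omega\sin\omega t\,x_j$, hence symmetric and essentially self-adjoint on $C_c^\infty$, so $\langle P_{x_j}(t)\psi_N,P_{x_j}(t)\psi_N\rangle=\langle\psi_N,P_{x_j}^2(t)\psi_N\rangle$.

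The content is entirely algebraic, so the only real obstacle is careful bookkeeping of the time-dependent change of variables and the companion prefactors; the one point demanding attention is that one must differentiate \emph{only once} before re-invoking unitarity, since a second differentiation would resurrect the factor $1/(1+\omega^2\tau^2)$ appearing in Proposition~\ref{Proposition:ChangeVariableEffectOfGLens} and spoil the clean identity. There are no genuine domain issues, because this lemma will be used only for the regularized (hence Schwartz-class) $N$-body data, for which $P_{x_j}(t)\psi_N(t)\in L^2$ and every manipulation above is legitimate; alternatively one may first establish the operator identity $M_N^{-1}P_{x_j}(t)M_N=i\nabla_{y_j}$ on a dense core. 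Conceptually the identity is expected: $P_{x_j}(t)$ is the Heisenberg evolution of $-i\nabla_{x_j}$ under $\tfrac12(-\triangle_{x_j}+\omega^2|x_j|^2)$, which is exactly the structure the lens transform is designed to untangle.
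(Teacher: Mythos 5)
Your proof is correct and is essentially the paper's own argument run in the opposite direction: the paper applies $P_{x_j}(t)$ to $\psi_N=M_Nu_N$ and reads off the prefactor times $i\nabla_{y_j}u_N$, while you differentiate $u_N=M_N^{-1}\psi_N$ in $y_j$ and recognize $M_N^{-1}\bigl(P_{x_j}(t)\psi_N\bigr)$; both then conclude by the same change of variables (unitarity of the lens transform), and your appeal to symmetry of $P_{x_j}(t)$ for the second equality matches what the paper leaves implicit.
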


\begin{proof}
We provide a proof through direct computation here. We remark that $%
P_{x}\left( t\right) $ is in fact the evolution of momentum. See \cite%
{ChenAnisotropic}. Without lose of generality, we may assume $N=1$, then%
\begin{eqnarray*}
P_{x}\left( t\right) \psi _{1}(t) &=&P_{x}\left( t\right) \left( \frac{e^{-i%
\frac{\omega \tan \omega t}{2}\left\vert x\right\vert ^{2}}}{\left( \cos
\omega t\right) ^{\frac{3}{2}}}u_{1}\left( \frac{\tan \omega t}{\omega },%
\frac{x}{\left( \cos \omega t\right) }\right) \right) \\
&=&\frac{e^{-i\frac{\omega \tan \omega t}{2}\left\vert x\right\vert ^{2}}}{%
\left( \cos \omega t\right) ^{\frac{3}{2}}}\left( i\nabla _{x}\cos \omega
t\right) u_{1}\left( \frac{\tan \omega t}{\omega },\frac{x}{\left( \cos
\omega t\right) }\right) \\
&=&\frac{e^{-i\frac{\omega \tan \omega t}{2}\left\vert x\right\vert ^{2}}}{%
\left( \cos \omega t\right) ^{\frac{3}{2}}}\left( i\nabla _{y}\right)
u_{1}\left( \tau ,y\right) .
\end{eqnarray*}%
Thus%
\begin{equation*}
\left\langle u_{1}(\tau ),\left( -\triangle _{y}\right) u_{1}(\tau
)\right\rangle =\int \left\vert \nabla _{y}u_{1}\left( \tau ,y\right)
\right\vert ^{2}dy=\int \frac{\left\vert \nabla _{y}u_{1}\left( \tau
,y\right) \right\vert ^{2}}{\left\vert \cos \omega t\right\vert ^{3}}%
dx=\left\langle P_{x}\left( t\right) \psi _{1}(t),P_{x}\left( t\right) \psi
_{1}(t)\right\rangle .
\end{equation*}
\end{proof}

\begin{proof}[Proof of Proposition \protect\ref%
{Proposition:GettingTheE-S-YEnergyBound}]
We first notice that%
\begin{equation*}
\left\langle u_{N}(\tau ),\dprod\limits_{j=1}^{k}\left( 1-\triangle
_{y_{j}}\right) u_{N}(\tau )\right\rangle =\left\langle \psi
_{N}(t),\dprod\limits_{j=1}^{k}\left( 1+P_{x_{j}}^{2}\left( t\right) \right)
\psi _{N}(t)\right\rangle \text{ }\left( \text{Lemma \ref{Lemma:momentum
between two sides of the lens transform}}\right) .
\end{equation*}%
Since%
\begin{eqnarray*}
&&\left\langle f,\left( 1+P_{x_{j}}^{2}\left( t\right) \right) f\right\rangle
\\
&=&\left\langle f,f\right\rangle +\cos ^{2}\omega t\left\langle \nabla
f,\nabla f\right\rangle +\omega ^{2}\sin ^{2}\omega t\left\langle
xf,xf\right\rangle -2\omega \sin \omega t\cos \omega t\func{Im}\left\langle
\nabla f,xf\right\rangle \\
&\leqslant &C\left\langle f,\left( -\triangle _{x}+\omega ^{2}\left\vert
x\right\vert ^{2}\right) f\right\rangle ,
\end{eqnarray*}%
we know%
\begin{eqnarray*}
\left\langle \psi _{N}(t),\dprod\limits_{j=1}^{k}\left(
1+P_{x_{j}}^{2}\left( t\right) \right) \psi _{N}(t)\right\rangle &\leqslant
&C^{k}\left\langle \psi _{N}(t),\dprod\limits_{j=1}^{k}\left( -\triangle
_{x_{j}}+\omega ^{2}\left\vert x_{j}\right\vert ^{2}\right) \psi
_{N}(t)\right\rangle \\
&\leqslant &\frac{C^{k}}{N^{k}}\left\langle \psi _{N}(t),H_{N}^{k}\psi
_{N}(t)\right\rangle .\text{ (Lemma \ref{Lemma:ESYLemma})}
\end{eqnarray*}%
From the energy condition on the initial datum $\psi _{N}(0)$, we then
deduce Proposition \ref{Proposition:GettingTheE-S-YEnergyBound} that is%
\begin{equation*}
\left\langle u_{N}(\tau ),\dprod\limits_{j=1}^{k}\left( 1-\triangle
_{y_{j}}\right) u_{N}(\tau )\right\rangle \leqslant C^{k}.
\end{equation*}
\end{proof}

\section{Proof of Theorem \protect\ref{Theorem:SmoothVersionofTheMainTheorem}
\label{Section:ProofOfSmooth3DBEC}}

We devote this section to establishing Theorem \ref%
{Theorem:SmoothVersionofTheMainTheorem}. The main idea is to first prove
that, in the time period $\tau \in \left[ 0,\frac{\tan \omega T_{0}}{\omega }%
\right] $ with $T_{0}<\frac{\pi }{2\omega }$, as $N\rightarrow \infty ,$ $%
\left\{ u_{N}^{(k)}=T_{k}^{-1}\gamma _{N}^{(k)}\right\} ,$ the lens
transform of the solution to the BBGKY hierarchy \ref%
{hierarchy:TheBBGKYHierarchyWithTraps}, converges to $\left\{
u^{(k)}=T_{k}^{-1}\gamma ^{(k)}\right\} $, the lens transform of the
solution to the Gross-Pitaevskii hierarchy \ref%
{hierarchy:TheGPHierarchyWithTraps} in the trace norm, then use Lemma \ref%
{Lemma:TraceNormPreservation} to conclude the convergence $\gamma
_{N}^{(k)}\rightarrow \gamma ^{(k)}$ in the trace norm as $N\rightarrow
\infty $ in the time period $t\in \left[ 0,T_{0}\right] $. Then the time
translation invariance of equation \ref{equation:N-Body Schrodinger with
switchable trap} proves Theorem \ref{Theorem:SmoothVersionofTheMainTheorem}
for all time.

\begin{itemize}
\item[Step I.] In Proposition \ref{Proposition:GettingTheE-S-YEnergyBound},
we have already established the energy estimate for $\left\{
u_{N}=M_{N}^{-1}\psi _{N}\right\} ,$%
\begin{equation*}
\sup_{\tau \in \left[ 0,\frac{\tan \omega T_{0}}{\omega }\right]
}\left\langle u_{N}(\tau ),\dprod\limits_{j=1}^{k}\left( 1-\triangle
_{y_{j}}\right) u_{N}(\tau )\right\rangle \leqslant C^{k}.
\end{equation*}%
which becomes 
\begin{equation}
\sup_{\tau \in \left[ 0,\frac{\tan \omega T_{0}}{\omega }\right] }\limfunc{Tr%
}\left( \dprod_{j=1}^{k}\left( 1-\triangle _{y_{j}}\right) \right)
u_{N}^{(k)}\leqslant C^{k},  \label{estimate:a-priori bound of BBGKY}
\end{equation}%
for $\left\{ u_{N}^{(k)}=T_{k}^{-1}\gamma _{N}^{(k)}\right\} $. Therefore we
can utilize the proof in Erd\"{o}s-Schlein-Yau \cite{E-S-Y2} or
Kirkpatrick-Schlein-Staffilani \cite{Kirpatrick} to show that the sequence $%
\left\{ u_{N}^{(k)}=T_{k}^{-1}\gamma _{N}^{(k)}\right\} $ is compact with
respect to the weak* topology on the trace class operators and every limit
point $\left\{ u^{(k)}\right\} $ solves hierarchy \ref{hierarchy:GLensGP}
which, we recall, is 
\begin{equation}
\left( i\partial _{\tau }+\frac{1}{2}\triangle _{\mathbf{y}_{k}}-\frac{1}{2}%
\triangle _{\mathbf{y}_{k}^{\prime }}\right) u^{(k)}=\sqrt{1+\omega ^{2}\tau
^{2}}b_{0}\sum_{j=1}^{k}B_{j,k+1}u^{(k+1)}.
\label{hierarchy:GPwithTimeFactor}
\end{equation}%
This is a fixed time argument. We omit the details here.
\end{itemize}

\begin{notation}
To make formulas shorter, let us write%
\begin{equation*}
g(\tau )=\left( 1+\omega ^{2}\tau ^{2}\right) ^{-\frac{1}{2}}
\end{equation*}%
from here on. The only property of $g(\tau )$ we are going to need is that $%
0<c\leqslant g(\tau )\leqslant C<\infty $ in any finite time period.
\end{notation}

\begin{itemize}
\item[Step II.] In this step, we use the a-priori estimate \ref%
{estimate:a-priori bound of BBGKY} to provide a space-time bound of $u^{(k)}$
so that we can employ Theorem \ref{Theorem:Uniqueness of GP} in Step III$.$
We transform estimate \ref{estimate:a-priori bound of BBGKY} into the
following theorem.
\end{itemize}

\begin{theorem}
\label{Theorem:Space-Time Bound for GlensBBGKY}(Main Auxiliary Theorem)
Assume $V\ $is a nonnegative $L^{1}(\mathbb{R}^{3})\cap H^{2}(\mathbb{R}%
^{3}) $ function. Let 
\begin{equation*}
\tilde{V}_{N,\tau }(y)=N^{3\beta }\tilde{V}_{\tau }(N^{\beta }y)=N^{3\beta
}g^{3}(\tau )V(g(\tau )N^{\beta }y)
\end{equation*}%
be the interaction potential with the interaction parameter $\beta \in
\left( 0,\frac{2}{7}\right] .$ Suppose that $u_{N}^{(k)}$ solves hierarchy %
\ref{hierarchy:GlensBBGKY} in $\left[ 0,T\right] \subset \left[ 0,\frac{\tan
\omega T_{0}}{\omega }\right] $, which, written in the integral form, is 
\begin{eqnarray}
u_{N}^{(k)} &=&U^{(k)}(\tau )u_{N,0}^{(k)}  \label{equation:Duhamel of BBGKY}
\\
&&-\frac{i}{N}\sum_{1\leqslant i<j\leqslant k}\int_{0}^{\tau }U^{(k)}(\tau
-s)\frac{\left( \tilde{V}_{N,s}(y_{i}-y_{j})-\tilde{V}_{N,s}(y_{i}^{\prime
}-y_{j}^{\prime })\right) }{g(s)}u_{N}^{(k)}(s,\mathbf{y}_{k};\mathbf{y}%
_{k}^{\prime })ds  \notag \\
&&-i\frac{N-k}{N}\sum_{j=1}^{k}\int_{0}^{\tau }U^{(k)}(\tau -s)\left( \frac{%
\tilde{B}_{N,j,k+1,s}}{g(s)}u_{N}^{(k+1)}\right) ds,  \notag
\end{eqnarray}%
subject to the condition that%
\begin{equation}
\sup_{\tau \in \left[ 0,T\right] }\limfunc{Tr}\left( \dprod_{j=1}^{k}\left(
1-\triangle _{y_{j}}\right) \right) u_{N}^{(k)}\leqslant C^{k},
\label{condition:EnergyBoundForBBGKY}
\end{equation}%
where $\tilde{B}_{N,j,k+1,\tau }=\tilde{B}_{N,j,k+1,\tau }^{1}-\tilde{B}%
_{N,j,k+1,\tau }^{2}$ with 
\begin{eqnarray*}
&&\left( \tilde{B}_{N,j,k+1,\tau }^{1}u_{N}^{(k+1)}\right) (\tau ,\mathbf{y}%
_{k};\mathbf{y}_{k}^{\prime })=\int \tilde{V}_{N,\tau
}(y_{j}-y_{k+1})u_{N}^{(k+1)}(\tau ,\mathbf{y}_{k},y_{k+1};\mathbf{y}%
_{k}^{\prime },y_{k+1})dy_{k+1}, \\
&&\left( \tilde{B}_{N,j,k+1,\tau }^{2}u_{N}^{(k+1)}\right) (\tau ,\mathbf{y}%
_{k};\mathbf{y}_{k}^{\prime })=\int \tilde{V}_{N,\tau }(y_{j}^{\prime
}-y_{k+1})u_{N}^{(k+1)}(\tau ,\mathbf{y}_{k},y_{k+1};\mathbf{y}_{k}^{\prime
},y_{k+1})dy_{k+1}.
\end{eqnarray*}%
and $U^{(k)}(\tau )$ is the solution operator to the free equation, that is%
\begin{equation*}
U^{(k)}(\tau )=e^{\frac{i\tau }{2}\triangle _{\mathbf{y}_{k}}}e^{-\frac{%
i\tau }{2}\triangle _{\mathbf{y}_{k}^{\prime }}}.
\end{equation*}%
Then there is a $C$ independent of $j,$ $k$ and $N$ such that%
\begin{equation*}
\int_{0}^{T}\left\Vert R^{(k)}\tilde{B}_{N,j,k+1,\tau
}u_{N}^{(k+1)}\right\Vert _{L^{2}}d\tau \leqslant C^{k}.
\end{equation*}%
where%
\begin{equation*}
R^{(k)}=\dprod_{j=1}^{k}\left( \left\vert \nabla _{y_{j}}\right\vert
\left\vert \nabla _{y_{j}^{\prime }}\right\vert \right) .
\end{equation*}
\end{theorem}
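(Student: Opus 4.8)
The plan is to transport the Chen--Pavlovi\'{c} treatment of the Klainerman--Machedon space-time bound to the lens-transformed hierarchy \ref{hierarchy:GlensBBGKY}. Three ingredients are needed: a Duhamel--Born expansion of the integral equation \ref{equation:Duhamel of BBGKY} carried to coupling level of order $\ln N$; the Klainerman--Machedon board game, which replaces the factorially many time-ordered summands by at most $C^{k+\ln N}$ equivalence classes; and a collapsing (multilinear Strichartz) estimate for the free propagator $U^{(k+1)}$ against the rescaled potential $\tilde V_{N,\tau}$, with a constant that is \emph{uniform in $N$} precisely when $\beta\in(0,2/7]$. The a priori bound \ref{condition:EnergyBoundForBBGKY} supplies the input that terminates the iteration --- at the initial time for the leading contributions and at the deepest level for the error term. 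By permutation symmetry it suffices to treat $j=1$, and the weights $1/g(s)=\sqrt{1+\omega^{2}s^{2}}$ contribute only bounded constants since $0<c\le g\le C$ on $[0,T]$.

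First I would iterate \ref{equation:Duhamel of BBGKY} a total of $\ell=\ell(N)\sim\ln N$ times, expressing $u_N^{(k+1)}$ as a sum of three kinds of terms: (i) ``free'' terms, in which every Duhamel substitution uses the branching term of \ref{equation:Duhamel of BBGKY} and which, after $0\le m\le\ell-1$ branchings, terminate on a free evolution of the initial datum $u_{N,0}^{(k+1+m)}$; (ii) ``self-interaction'' terms, which contain at least one insertion of the BBGKY collision term $\frac1N\sum_{i<j}(\tilde V_{N,s}(y_i-y_j)-\tilde V_{N,s}(y_i'-y_j'))$; and (iii) the fully branched remainder, with $\ell$ branchings, terminating on $u_N^{(k+1+\ell)}$. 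I then apply $R^{(k)}\tilde B_{N,1,k+1,\tau}$ and the $L^1_\tau([0,T];L^2_y)$ norm to each class.

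For the free terms, the board game first reduces the count to $\le C^{k+m}$ classes, and within each class the collapsing estimate is applied once for each of the $m+1$ branching operators present, the time-ordered simplex $\{0<s_m<\cdots<s_1<\tau\}$ producing a factor $(C_1T)^{m}/m!$; the outcome is a bound $C_0^{m+1}(C_1T)^{m}/m!\;\|R^{(k+1+m)}u_{N,0}^{(k+1+m)}\|_{L^2}\le C_0^{m+1}(C_1T)^{m}/m!\;C^{k+1+m}$, the last inequality being \ref{condition:EnergyBoundForBBGKY} at $\tau=0$ (the energy bound controls $\|R^{(k')}u_N^{(k')}(\tau)\|_{L^2}$ uniformly in $\tau$ because the density matrices are nonnegative). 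Summing over $m\ge0$ and over the $\le C^{k+m}$ classes gives a convergent exponential series, bounded by $C^{k}$ after enlarging $C$. The same computation bounds the remainder by $C^{k+\ell}C_0^{\ell}(C_1T)^{\ell}/\ell!\;C^{k+1+\ell}\le C^{2k+1}(C_2T)^{\ell}/\ell!$, and the choice $\ell\sim\ln N$ forces $(C_2T)^{\ln N}/(\ln N)!\to 0$, so the remainder vanishes as $N\to\infty$ and is in particular harmless. For the self-interaction terms, each collision insertion carries a factor $1/N$ and a sum over at most $(k+\ell)^{2}$ pairs, while the Sobolev estimate used in the proof of Lemma \ref{Lemma:ESYLemma} --- together with the $H^{2}$ and $W^{2,\infty}$ hypotheses on $V$, which absorb the derivatives that $R^{(k)}$ places on $\tilde V_{N,s}$ --- controls the norm of the collision operator in terms of $R^{(k+1)}u_N^{(k+1)}$ with a net power of $N$ that is negative for $\beta\le 2/7$; hence the total self-interaction contribution vanishes as $N\to\infty$. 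Combining the three estimates gives $\int_0^T\|R^{(k)}\tilde B_{N,1,k+1,\tau}u_N^{(k+1)}\|_{L^2}\,d\tau\le C^{k}$.

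The main obstacle is the collapsing estimate itself: establishing, with a constant independent of $N$,
\[
\left\|R^{(k)}\tilde B_{N,j,k+1,\tau}\,U^{(k+1)}(\tau)f^{(k+1)}\right\|_{L^{2}([0,T]\times\mathbb{R}^{6k})}\le C_0\left\|R^{(k+1)}f^{(k+1)}\right\|_{L^{2}},
\]
together with its inhomogeneous (Duhamel) analogue, for $\beta$ as large as $2/7$. This calls for a frequency-space analysis: one writes the kernel of $\tilde B_{N,j,k+1,\tau}U^{(k+1)}(\tau)$, exploits that $\widehat{\tilde V_{N,\tau}}(\xi)=\hat V(\xi/(g(\tau)N^{\beta}))$ is a bounded rescaling of $\hat V$ with $\|\widehat{\tilde V_{N,\tau}}\|_{L^{\infty}}=\|V\|_{L^{1}}$, and must beat a dimensional loss that worsens as $\beta$ grows. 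It is precisely here that the extra regularity $V\in H^{2}\cap W^{2,\infty}$ is spent, and the sharper book-keeping it permits is what extends the admissible range from the $(0,1/4)$ of Chen--Pavlovi\'{c} to $(0,2/7]$. Everything downstream of this estimate --- the board game, the nested-time integration that produces the $1/m!$, the $\ln N$ truncation, and the self-interaction error --- is then routine.
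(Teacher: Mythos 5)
Your overall skeleton (Duhamel expansion to coupling level $\sim\ln N$, the Klainerman--Machedon board game, an $N$-uniform collapsing estimate, the energy bound \ref{condition:EnergyBoundForBBGKY} terminating the free terms) does match the paper's architecture, but you have misplaced the two quantitative bottlenecks. First, the collapsing estimate is \emph{not} where $\beta\leqslant 2/7$ or the $H^{2}\cap W^{2,\infty}$ regularity is spent: the paper's Theorem \ref{Theorem:3*3d} holds for every $\beta>0$ with a constant depending only on $b_{0}=\int V$, because the frequency-space proof uses nothing beyond $\sup_{\tau,\xi,N}|\widehat{V_{N}^{\tau}}(\xi)|\leqslant \|V\|_{L^{1}}$; trying to extract the range $(0,2/7]$ from a sharper collapsing estimate is aiming at the wrong target. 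The actual source of the restriction is the self-interaction (potential) part, Proposition \ref{proposition:BoundingThePotentialTerms}: when $R^{(k)}$ is distributed by Leibniz, the worst term puts two derivatives on $\tilde{V}_{N,\tau}/N$, and $\|(\tilde{V}_{N,\tau})''/N\|_{L^{2}}\sim N^{\frac{7\beta}{2}-1}=N^{-\frac{7}{2}\varepsilon_{0}}$ with $\varepsilon_{0}=\frac{2}{7}-\beta$, which is bounded precisely when $\beta\leqslant 2/7$ (at the endpoint it is merely bounded, not vanishing as you claim, but boundedness is all that is needed). Your sketch of this step is in the right spirit, but the statement that the regularity of $V$ is ``spent'' in the collapsing estimate contradicts it and would leave the real bottleneck unaddressed.

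Second, your treatment of the fully branched remainder glosses over exactly the point that forces the $\ln N$ coupling. The terminal factor is $R^{(k+\ell)}\tilde{B}_{N,\mu,k+\ell+1,\tau}u_{N}^{(k+\ell+1)}(\tau)$ with $u_{N}^{(k+\ell+1)}$ evaluated at time $\tau$, not a free evolution of data, so the collapsing estimate does not apply there; and the energy bound alone does not control it without a loss in $N$. In the paper this term is bounded crudely by Cauchy--Schwarz and the trace theorem, costing a factor $\sim N^{5\beta/2}\|V\|_{H^{2}}$ (Section \ref{SubSection:InteractionPartOfBBGKY}), and it is precisely this loss that the choice $\ell\sim\ln N$ together with $T\leqslant e^{-(5\beta+2C)}$ must beat via $(T^{1/2})^{\ln N}N^{5\beta/2}N^{c}\leqslant 1$. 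Your claimed bound $C^{k+\ell}C_{0}^{\ell}(C_{1}T)^{\ell}/\ell!\cdot C^{k+1+\ell}$ contains no power of $N$ at all; if it were correct, a fixed large coupling level would suffice and the $\ln N$ truncation would be pointless, so the $1/\ell!$ bookkeeping cannot substitute for the missing $N$-dependent estimate of the terminal collapse. (Relatedly, the paper does not obtain the smallness from a simplex factorial but from iterated Cauchy--Schwarz in time, giving $(CT^{1/2})^{\ell}$, and it then removes the smallness restriction on $T$ by bootstrapping; this last point is routine but should be said.)
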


\begin{proof}
We prove our main auxiliary theorem in Section \ref{Section:Space-TimeBound}%
. This theorem establishes the Klainerman-Machedon space-time bound for $%
\beta \in \left( 0,\frac{2}{7}\right] $.
\end{proof}

Via the above theorem, we infer that every limit point $\left\{
u^{(k)}\right\} $ of $\left\{ u_{N}^{(k)}\right\} $ satisfies the space time
bound%
\begin{equation*}
\int_{0}^{\frac{\tan \omega T_{0}}{\omega }}\left\Vert
R^{(k)}B_{j,k+1}u^{(k+1)}(\tau ,\mathbf{\cdot };\mathbf{\cdot })\right\Vert
_{L^{2}\left( \mathbb{R}^{3k}\times \mathbb{R}^{3k}\right) }d\tau \leqslant
C^{k},
\end{equation*}%
for some $C>0$ and all $1\leqslant j\leqslant k.$

\begin{itemize}
\item[Step III.] Regarding the solution to the infinite hierarchy \ref%
{hierarchy:GPwithTimeFactor}, we have the following uniqueness theorem.
\end{itemize}

\begin{theorem}
\label{Theorem:Uniqueness of GP}Let $\left\{ u^{(k)}\right\} $ be a solution
of the infinite hierarchy \ref{hierarchy:GPwithTimeFactor} in $\left[ s,T%
\right] \subset \left[ 0,\frac{\tan \omega T_{0}}{\omega }\right] $ subject
to zero initial data that is 
\begin{equation*}
u^{(k)}(s,\mathbf{y}_{k};\mathbf{y}_{k}^{\prime })=0,\forall k,
\end{equation*}%
and the space time bound%
\begin{equation}
\int_{s}^{T}\left\Vert R^{(k)}B_{j,k+1}u^{(k+1)}(\tau ,\mathbf{\cdot };%
\mathbf{\cdot })\right\Vert _{L^{2}\left( \mathbb{R}^{3k}\times \mathbb{R}%
^{3k}\right) }d\tau \leqslant C^{k},  \label{formula:theSpace-TimeBound}
\end{equation}%
for some $C>0$ and all $1\leqslant j\leqslant k.$ Then $\forall k,\tau \in %
\left[ s,T\right] $, we have%
\begin{equation*}
\left\Vert R^{(k)}u^{(k)}(\tau ,\mathbf{\cdot };\mathbf{\cdot })\right\Vert
_{L^{2}\left( \mathbb{R}^{3k}\times \mathbb{R}^{3k}\right) }=0.
\end{equation*}
\end{theorem}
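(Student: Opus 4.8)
The plan is to follow the Klainerman--Machedon uniqueness scheme of \cite{KlainermanAndMachedon}, which also underlies the treatment in \cite{TChenAndNPSpace-Time}, the only new feature being the bounded time-dependent coefficient $g(\tau)^{-1}=\sqrt{1+\omega^{2}\tau^{2}}$ in hierarchy \ref{hierarchy:GPwithTimeFactor}. First I would iterate the Duhamel formula for \ref{hierarchy:GPwithTimeFactor}. Because the initial data vanish, the free (zeroth order) term drops out at every stage, so after $n$ iterations $u^{(k)}(\tau,\mathbf{y}_{k};\mathbf{y}_{k}^{\prime})$ is a finite sum of $k(k+1)\cdots(k+n-1)$ terms, each of the schematic shape
\begin{equation*}
\int_{s\leqslant t_{n}\leqslant\cdots\leqslant t_{1}\leqslant\tau}U^{(k)}(\tau-t_{1})\,\mu_{1}B_{j_{1},k+1}U^{(k+1)}(t_{1}-t_{2})\,\mu_{2}B_{j_{2},k+2}\cdots\mu_{n}B_{j_{n},k+n}\,u^{(k+n)}(t_{n})\,dt_{1}\cdots dt_{n},
\end{equation*}
where $U^{(m)}(t)=e^{\frac{it}{2}\triangle_{\mathbf{y}_{m}}}e^{-\frac{it}{2}\triangle_{\mathbf{y}_{m}^{\prime}}}$ and the coefficients $\mu_{\ell}=g(t_{\ell})^{-1}b_{0}$ satisfy $0<c\leqslant|\mu_{\ell}|\leqslant C<\infty$ uniformly on the finite interval $[s,T]\subset[0,\tan\omega T_{0}/\omega]$.

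Next I would apply the Klainerman--Machedon board game. As in \cite{KlainermanAndMachedon} (and in the form used in \cite{TChenAndNPSpace-Time}), the $k(k+1)\cdots(k+n-1)$ summands are regrouped into at most $C^{n}$ equivalence classes, and within each class the sum of the integrals over the ordered simplices collapses to a single integral, of one fixed reference integrand (with a fixed sequence of indices $j_{1},\dots,j_{n}$), over a measurable set $D\subseteq[s,\tau]^{n}$ contained, after reordering the integration variables, in a simplex. The board game is purely combinatorial and hence indifferent to the coefficients $\mu_{\ell}$; these contribute only an overall factor $\leqslant C^{n}$.

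Then I would estimate a single class. Peeling the $n-1$ outer layers one at a time by the Klainerman--Machedon collapsing estimate (the multilinear space-time estimate originating in the null-form paper \cite{KlainermanMachedonNullForm}, which on $\mathbb{R}^{3}$ with the weights $R^{(k)}=\prod_{j=1}^{k}|\nabla_{y_{j}}||\nabla_{y_{j}^{\prime}}|$ is exactly the estimate used throughout this circle of ideas), one bounds the $R^{(k)}$-weighted $L^{2}_{\mathbf{y}_{k},\mathbf{y}_{k}^{\prime}}$ norm of the class, taken in $L^{2}_{\tau\in[s,T]}$, by $C^{n}$ times $\int_{D}\|R^{(k+n)}B_{j_{n},k+n}u^{(k+n)}(t_{n})\|_{L^{2}}\,dt_{1}\cdots dt_{n}$. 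The $t_{1},\dots,t_{n-1}$ integrations over $D$ produce the simplex volume $\leqslant T^{\,n-1}/(n-1)!$, while the remaining $t_{n}$-integral is controlled, for every admissible index $j_{n}$, by the space-time bound hypothesis \ref{formula:theSpace-TimeBound}, giving $\leqslant C^{k+n}$. Summing over the $\leqslant C^{n}$ classes yields
\begin{equation*}
\big\|R^{(k)}u^{(k)}\big\|_{L^{2}([s,T]\times\mathbb{R}^{3k}\times\mathbb{R}^{3k})}\leqslant C^{n}\cdot C^{n}\cdot\frac{T^{\,n-1}}{(n-1)!}\cdot C^{k+n}=C^{k}\,\frac{(C^{\prime}T)^{\,n-1}C^{\prime}}{(n-1)!}\xrightarrow[n\to\infty]{}0.
\end{equation*}
Hence $R^{(k)}u^{(k)}(\tau,\mathbf{\cdot};\mathbf{\cdot})=0$ for a.e.\ $\tau\in[s,T]$, and by the continuity in $\tau$ that comes from the Duhamel representation, for every $\tau\in[s,T]$.

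I expect the only real work to be bookkeeping: making the board-game regrouping and the $n$-fold use of the collapsing estimate compatible with the weights $R^{(m)}$ and with the time-dependent coefficients $\mu_{\ell}$. Since $g^{-1}$ is bounded above and below on any finite $\tau$-interval, this introduces no genuine difficulty beyond what is already handled in \cite{KlainermanAndMachedon, TChenAndNPSpace-Time}; the genuinely delicate analytic input --- verifying that the space-time bound \ref{formula:theSpace-TimeBound} actually holds for the BBGKY hierarchy --- is the separate Theorem \ref{Theorem:Space-Time Bound for GlensBBGKY}, not this uniqueness statement.
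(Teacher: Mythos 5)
Your overall strategy is the same as the paper's (iterate Duhamel with zero data, absorb the bounded coefficient $g(\tau)^{-1}b_{0}$ into constants, regroup with the Klainerman--Machedon board game, peel layers with the collapsing estimate, and close with the hypothesis \ref{formula:theSpace-TimeBound}), but the quantitative bookkeeping at the heart of your argument is wrong, and it is precisely the step that decides whether one needs a small time interval. You claim that the time integrations over $D$ contribute a simplex volume $T^{n-1}/(n-1)!$, so that the series closes for the whole interval $[s,T]$ in one shot. Two things break here. First, after the board-game regrouping the domain $D$ is no longer a simplex: each equivalence class collects up to factorially many summands, and the regrouping rewrites their sum as one integrand over the \emph{union} of the correspondingly permuted simplices, so $|D|$ can be of order $(T-s)^{n}$ with no factorial gain; this is exactly why the board game is needed (to tame the $\sim n!$ count of terms) and exactly why the factorial is no longer available afterwards. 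Second, even setting that aside, the collapsing estimate (Lemma \ref{Theorem:CollapsingEstimateForGP}) is an $L^{2}$-in-time smoothing estimate; in 3D the quantity $\|R^{(k)}B_{j,k+1}U^{(k+1)}(t)f\|_{L^{2}}$ is not controlled pointwise in $t$ (this is the ``critical'' feature the paper emphasizes: no trace theorem is available), so each peeled layer must first be converted from $L^{1}_{t}$ to $L^{2}_{t}$ by Cauchy--Schwarz, which consumes that time variable and yields a factor $(T-s)^{1/2}$ per layer. You cannot simultaneously harvest the same time variables as a volume factor; that is a double count.

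The correct outcome of the peeling, as in the paper's proof, is a bound of the form $C\left(C(\tau_{1}-s_{0})\right)^{\frac{n-1}{2}}$ for $\left\Vert R^{(1)}u^{(1)}(\tau_{1},\cdot)\right\Vert_{L^{2}}$, where the last time integral is absorbed by hypothesis \ref{formula:theSpace-TimeBound}. This tends to zero as $n\rightarrow\infty$ only when $\tau_{1}-s_{0}$ is sufficiently small (depending on the constant in \ref{formula:theSpace-TimeBound} but not on anything else), and one then covers all of $[s,T]$ by repeating the argument on consecutive small subintervals, using that the solution has been shown to vanish up to the left endpoint of each new subinterval; the same bootstrapping handles general $k$. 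Your proposal omits this smallness-plus-bootstrap structure entirely because of the unjustified factorial, so as written the convergence claim $C^{k}(C'T)^{n-1}C'/(n-1)!\rightarrow 0$ does not follow from the estimates you actually have at your disposal. The gap is fixable by replacing the volume count with iterated Cauchy--Schwarz in time and adding the bootstrap; a minor additional point is that the paper's version estimates $\left\Vert R^{(1)}u^{(1)}(\tau_{1},\cdot)\right\Vert_{L^{2}}$ for each fixed $\tau_{1}$, so no ``almost every $\tau$ plus continuity'' step is needed.
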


\begin{proof}
See Section \ref{Section:Uniqueness}.
\end{proof}

Since we have shown the space-time bound \ref{formula:theSpace-TimeBound} in
Step III, we apply the above uniqueness theorem and deduce that%
\begin{equation}
u^{(k)}(\tau ,\mathbf{y}_{k};\mathbf{y}_{k}^{\prime })=\dprod_{j=1}^{k}%
\tilde{\phi}(\tau ,y_{j})\overline{\tilde{\phi}}(\tau ,y_{j}^{\prime }),
\label{formula:solnToGP}
\end{equation}%
where $\tilde{\phi}(\tau ,y)$ solves the 3D NLS%
\begin{eqnarray}
i\partial _{\tau }\tilde{\phi} &=&-\frac{1}{2}\triangle _{y}\tilde{\phi}+%
\frac{b_{0}\left\vert \tilde{\phi}\right\vert ^{2}\tilde{\phi}}{g(\tau )}%
\text{ in }\mathbb{R}^{3+1}
\label{equation:cubicNLS with time dependent coupling} \\
\tilde{\phi}(0,y) &=&\phi _{0}.  \notag
\end{eqnarray}

Hence the compact sequence $\left\{ u_{N}^{(k)}\right\} $ has only one limit
point. So%
\begin{equation*}
u_{N}^{(k)}\rightarrow \dprod_{j=1}^{k}\tilde{\phi}(\tau ,y_{j})\overline{%
\tilde{\phi}}(\tau ,y_{j}^{\prime })\text{ as }N\rightarrow \infty
\end{equation*}%
in the weak* topology. Since $u^{(k)}$ is an orthogonal projection, the
convergence in the weak* topology is then equivalent to the convergence in
the trace norm.

\begin{example}
At the suggestion of Professor Walter Strauss, we give a brief explanation
on why a factorized state like formula \ref{formula:solnToGP} is a solution
to the Gross-Pitaevskii hierarchy. Consider $k=1$, then plugging $\tilde{\phi%
}(\tau ,y_{1})\overline{\tilde{\phi}}(\tau ,y_{1}^{\prime })$ into the
infinite hierarchy yields 
\begin{eqnarray*}
&&\left( i\partial _{\tau }+\frac{1}{2}\triangle _{y_{1}}-\frac{1}{2}%
\triangle _{y_{1}^{\prime }}\right) \left( \tilde{\phi}(\tau ,y_{1})%
\overline{\tilde{\phi}}(\tau ,y_{1}^{\prime })\right) \\
&=&b_{0}\frac{\left( \left\vert \tilde{\phi}\right\vert ^{2}\tilde{\phi}%
\right) (\tau ,y_{1})\overline{\tilde{\phi}}(\tau ,y_{1}^{\prime })}{g(\tau )%
}-b_{0}\frac{\tilde{\phi}(\tau ,y_{1})\left( \left\vert \tilde{\phi}%
\right\vert ^{2}\overline{\tilde{\phi}}\right) (\tau ,y_{1}^{\prime })}{%
g(\tau )} \\
&=&\frac{b_{0}B_{1,2}^{1}\left( \tilde{\phi}(\tau ,y_{1})\tilde{\phi}(\tau
,y_{2})\overline{\tilde{\phi}}(\tau ,y_{1}^{\prime })\overline{\tilde{\phi}}%
(\tau ,y_{2}^{\prime })\right) -b_{0}B_{1,2}^{2}\left( \tilde{\phi}(\tau
,y_{1})\tilde{\phi}(\tau ,y_{2})\overline{\tilde{\phi}}(\tau ,y_{1}^{\prime
})\overline{\tilde{\phi}}(\tau ,y_{2}^{\prime })\right) }{g(\tau )},
\end{eqnarray*}%
which is%
\begin{equation*}
\left( i\partial _{\tau }+\frac{1}{2}\triangle _{y_{1}}-\frac{1}{2}\triangle
_{y_{1}^{\prime }}\right) u^{(1)}=\frac{1}{g(\tau )}b_{0}B_{1,2}u^{(2)}.
\end{equation*}
\end{example}

\begin{itemize}
\item[Step IV.] In Step III, we have concluded the convergence%
\begin{equation*}
\lim_{N\rightarrow \infty }\limfunc{Tr}\left\vert u_{N}^{(k)}(\tau ,\mathbf{y%
}_{k};\mathbf{y}_{k}^{\prime })-\dprod_{j=1}^{k}\tilde{\phi}(\tau ,y_{j})%
\overline{\tilde{\phi}}(\tau ,y_{j}^{\prime })\right\vert =0,\text{ }\forall
\tau \in \left[ 0,\frac{\tan \omega T_{0}}{\omega }\right] .
\end{equation*}%
Notice that $u_{N}^{(k)}=T_{k}^{-1}\gamma _{N}^{(k)}$ and the lens transform
of $u^{(k)}$ is%
\begin{equation*}
\gamma ^{(k)}(t,\mathbf{x}_{k};\mathbf{x}_{k}^{\prime
})=\dprod_{j=1}^{k}\phi (t,x_{j})\overline{\phi }(t,x_{j}^{\prime }),
\end{equation*}%
where $\phi (t,x)$ solves 
\begin{eqnarray*}
i\partial _{t}\phi &=&\left( -\frac{1}{2}\triangle _{x}+\omega ^{2}\frac{%
\left\vert x\right\vert ^{2}}{2}\right) \phi +b_{0}\left\vert \phi
\right\vert ^{2}\phi \text{ in }\mathbb{R}^{3+1} \\
\phi (0,x) &=&\phi _{0}(x)
\end{eqnarray*}%
i.e. equation \ref{equation:cubicNLSwithSwitch}. Thence we conclude that%
\begin{equation*}
\lim_{N\rightarrow \infty }\limfunc{Tr}\left\vert \gamma _{N}^{(k)}(t,%
\mathbf{x}_{k};\mathbf{x}_{k}^{\prime })-\dprod_{j=1}^{k}\phi (t,x_{j})%
\overline{\phi }(t,x_{j}^{\prime })\right\vert =0,\text{ }\forall t\in \left[
0,T_{0}\right] ,
\end{equation*}%
as a result of the fact that the lens transform preserves the trace norm
(Lemma \ref{Lemma:TraceNormPreservation}). Since equation \ref%
{equation:N-Body Schrodinger with switchable trap} is time translation
invariant, we have established Theorem \ref%
{Theorem:SmoothVersionofTheMainTheorem} and consequently Theorem \ref%
{Theorem:3D BEC}. The purpose of the rest of this paper is to prove Theorems %
\ref{Theorem:Uniqueness of GP} and \ref{Theorem:Space-Time Bound for
GlensBBGKY}.
\end{itemize}

\begin{remark}
Since $\tilde{\phi}=M_{1}^{-1}\phi $, the global $H^{1}$ well-posedness of
equation \ref{equation:cubicNLS with time dependent coupling} is implied by
the global well-posedness in the scattering space $\sum $ of equation \ref%
{equation:cubicNLSwithSwitch} which comes from the Strichartz estimates.
Through the lens transform, we always have a $L^{2}$ solution to equation %
\ref{equation:cubicNLS with time dependent coupling}. Lemma \ref%
{Lemma:momentum between two sides of the lens transform} then shows $\nabla 
\tilde{\phi}\in L^{2}.$
\end{remark}

\section{The Uniqueness of Hierarchy \protect\ref{hierarchy:GPwithTimeFactor}
(Proof of Theorem \protect\ref{Theorem:Uniqueness of GP}) \label%
{Section:Uniqueness}}

In this section, we produce Theorem \ref{Theorem:Uniqueness of GP} with
Lemmas \ref{Theorem:CollapsingEstimateForGP} and \ref%
{lemma:Klainerman-MachedonBoardGameForGP}. For convenience, we set the
coupling constant $b_{0}$ in the infinite hierarchy \ref%
{hierarchy:GPwithTimeFactor}$\ $to be $1$.

\begin{lemma}
\label{Theorem:CollapsingEstimateForGP}\cite{KlainermanAndMachedon} Assume $%
u^{(k+1)}$ verifies%
\begin{equation*}
\left( i\partial _{\tau }+\frac{1}{2}\triangle _{\mathbf{y}_{k+1}}-\frac{1}{2%
}\triangle _{\mathbf{y}_{k+1}^{\prime }}\right) u^{(k+1)}=0,
\end{equation*}%
then there is a $C>0,$ independent of $j,$ $k,$ and $u^{(k+1)}$ s.t. 
\begin{eqnarray*}
&&\left\Vert R^{(k)}\left( B_{j,k+1}u^{(k+1)}\right) (\tau ,\mathbf{y}_{k};%
\mathbf{y}_{k}^{\prime })\right\Vert _{L^{2}(\mathbb{R}\times \mathbb{R}%
^{3k}\times \mathbb{R}^{3k})} \\
&\leqslant &C\left\Vert R^{(k+1)}u^{(k+1)}(0,\mathbf{y}_{k+1};\mathbf{y}%
_{k+1}^{\prime })\right\Vert _{L^{2}(\mathbb{R}^{3(k+1)}\times \mathbb{R}%
^{3(k+1)})}.
\end{eqnarray*}
\end{lemma}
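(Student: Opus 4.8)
The plan is to reproduce the collapsing estimate of Klainerman and Machedon \cite{KlainermanAndMachedon}, whose proof splits into three parts: a reduction to $k=1$, a reformulation on the characteristic variety of the free flow, and a uniform estimate for a singular integral.

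\emph{Reduction to $k=1$.} For $i\neq j$ the operator $B_{j,k+1}$ acts as the identity in the variables $y_i$ and $y_i'$, and it acts as the identity in every primed variable except $y_{k+1}'$; hence, after taking the spatial Fourier transform in the $2(k-1)$ spectator pairs, the claimed inequality decouples over them and it suffices to establish it for $k=1$, $j=1$. Writing $B_{1,2}=B_{1,2}^1-B_{1,2}^2$ and noting that the two pieces are interchanged by swapping the primed and unprimed frequencies, I would then treat only
\[
\left\| |\nabla_{y_1}|\,|\nabla_{y_1'}|\,\bigl(B_{1,2}^1 u^{(2)}\bigr)\right\|_{L^2(\mathbb R\times\mathbb R^3\times\mathbb R^3)}\leq C\left\|R^{(2)}u^{(2)}(0)\right\|_{L^2(\mathbb R^{12})}
\]
for a free solution $u^{(2)}$ of $\left(i\partial _{\tau }+\tfrac12\triangle _{\mathbf{y}_{2}}-\tfrac12\triangle _{\mathbf{y}_{2}^{\prime }}\right)u^{(2)}=0$.

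\emph{Passage to the characteristic variety.} Taking the full space-time Fourier transform and using that $u^{(2)}$ is a free solution (so that $\widehat{u^{(2)}}$ is a Dirac mass on the characteristic hypersurface of the free flow), a direct computation gives
\[
\widehat{B_{1,2}^1 u^{(2)}}(\sigma,\xi;\xi')=\int_{\mathbb R^6}\delta\!\left(\sigma-\tfrac12\Phi\right)\widehat{u_0^{(2)}}(\xi-\eta+\eta',\eta;\xi',\eta')\,d\eta\,d\eta',\qquad \Phi=|\xi-\eta+\eta'|^2+|\eta|^2-|\xi'|^2-|\eta'|^2.
\]
Setting $F=|\xi_1||\xi_2||\xi_1'||\xi_2'|\,\widehat{u_0^{(2)}}$, so that $\|F\|_{L^2}=\|R^{(2)}u^{(2)}(0)\|_{L^2}$, one substitutes $\widehat{u_0^{(2)}}=F/(|\xi_1||\xi_2||\xi_1'||\xi_2'|)$ and observes that the output multiplier $|\xi'|$ cancels the denominator $|\xi'|$ coming from the argument $\xi_1'=\xi'$, leaving $\frac{|\xi|}{|\xi-\eta+\eta'|\,|\eta|\,|\eta'|}$ as the effective kernel. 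Squaring, applying Cauchy--Schwarz in $(\eta,\eta')$ against the measure $\delta(\sigma-\tfrac12\Phi)\,d\eta\,d\eta'$, integrating in $(\sigma,\xi,\xi')$, and performing the linear unit-Jacobian change of variables $(\xi,\xi',\eta,\eta')\mapsto(\xi-\eta+\eta',\eta,\xi',\eta')$ in the factor carrying $F$ (which reconstitutes $\|F\|_{L^2}^2$), the whole estimate reduces to the uniform bound
\[
\sup_{\xi,\xi'\in\mathbb R^3}\ \sup_{\sigma\in\mathbb R}\ \int_{\mathbb R^6}\frac{|\xi|^2}{|\xi-\eta+\eta'|^2\,|\eta|^2\,|\eta'|^2}\,\delta\!\left(\sigma-\tfrac12\Phi(\xi,\xi';\eta,\eta')\right)d\eta\,d\eta'<\infty.
\]

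\emph{The key integral estimate, and the main obstacle.} This last display is the technical heart, and the step I expect to be hardest. One uses the $\delta$-function to eliminate one variable, being left with an explicit integral over a $5$-dimensional quadric; the decisive structural fact is that, for $\eta$ held fixed, $\Phi$ is \emph{affine} in $\eta'$ with $\nabla_{\eta'}\Phi=2(\xi-\eta)$, so the quadric fibers over a family of affine $2$-planes and the inner integration reduces to a two-dimensional Coulomb/Hardy--Littlewood--Sobolev-type integral that can be estimated explicitly, after which the remaining $\eta$-integration is controlled by scaling. The delicate point is that these planar integrals are only logarithmically away from divergent, so the inequality is genuinely sharp and must be handled with the care of the ``key integral lemma'' of \cite{KlainermanAndMachedon}; this is exactly the mechanism that, once the Dirac masses are replaced by the concentrating potential $N^{3\beta }V(N^{\beta }\cdot )$, dictates the admissible range of $\beta$ in Theorem \ref{Theorem:Space-Time Bound for GlensBBGKY}. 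Putting the three parts together, and adding the identical bound for $B_{1,2}^2$, yields the lemma.
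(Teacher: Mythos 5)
First, a point of comparison: the paper does not actually prove this lemma — its proof is the citation ``This is Theorem 1.3 of \cite{KlainermanAndMachedon}'' — and the closest in-paper analogue is the proof of Theorem \ref{Theorem:3*3d} in Appendix II, where the Dirac mass is replaced by $V_{N}$ but the engine is the same. Your reduction to $k=1$, the space-time Fourier representation with the delta measure on the characteristic set, the cancellation of the $|\xi'|$ factor, the Cauchy--Schwarz against $\delta\!\left(\sigma-\tfrac12\Phi\right)d\eta\,d\eta'$, and the unit-Jacobian change of variables that reconstitutes $\left\Vert R^{(2)}u^{(2)}(0)\right\Vert_{L^{2}}$ are exactly the Klainerman--Machedon steps and are correct.

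The gap is in the step you yourself identify as the heart. The plan ``fix $\eta$, use that $\Phi$ is affine in $\eta'$ with gradient $2(\xi-\eta)$ so the delta confines $\eta'$ to a plane, estimate the resulting two-dimensional Coulomb integral explicitly, then control the $\eta$-integral by scaling'' does not close as stated: the planar integral $\int_{P}|\xi-\eta+\eta'|^{-2}|\eta'|^{-2}dS(\eta')$ has both exponents equal to the dimension of $P$ and is logarithmically divergent whenever either singular point lies on (or near) the plane, and Lemma \ref{Lemma:MateiLemmaForIntegrals}(1) is only available for exponents $a,b<2$. What makes the argument work in \cite{KlainermanAndMachedon} (and in Appendix II here) is a case decomposition according to the relative sizes of $|\xi-\eta+\eta'|$, $|\eta|$, $|\eta'|$: in each region one transfers a small power $\varepsilon$ from the variable carried by the hypersurface to an outer variable before applying the lemma, and in one regime one integrates the delta in the variable in which $\Phi$ is quadratic, landing on spheres and invoking part (2) of Lemma \ref{Lemma:MateiLemmaForIntegrals} with its extra $\left|\tfrac{\xi}{2}-\eta\right|$ weight; only after this does the remaining outer integral become a convergent, scale-invariant convolution of powers. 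You flag the log-criticality, but the proposal neither performs this decomposition nor explains which variable is put on the surface in which regime, so the key uniform bound is left unproved. A minor side issue: attributing the restriction $\beta\leqslant 2/7$ to this mechanism is inaccurate for the present paper — the collapsing estimate with $V_{N}$ (Theorem \ref{Theorem:3*3d}) is uniform in $\beta$, only $b_{0}=\int V$ entering, and the constraint $\beta\in\left(0,\tfrac{2}{7}\right]$ comes solely from the potential-part estimate, Proposition \ref{proposition:BoundingThePotentialTerms}.
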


\begin{proof}
This is Theorem 1.3 of \cite{KlainermanAndMachedon}. For some other
estimates of this type, see \cite{ChenDie,ChenAnisotropic,GM,Kirpatrick}.
\end{proof}

\begin{lemma}
\label{lemma:Klainerman-MachedonBoardGameForGP}Assuming zero initial data
i.e. 
\begin{equation*}
u^{(k)}(s,\mathbf{y}_{k};\mathbf{y}_{k}^{\prime })=0,\forall k,
\end{equation*}%
then one can express $u^{(1)}(\tau _{1},\mathbf{\cdot };\mathbf{\cdot })$ in
the Gross-Pitaevskii hierarchy \ref{hierarchy:GPwithTimeFactor} as a sum of
at most $4^{n}$ terms of the form 
\begin{equation*}
\int_{D}J(\underline{\tau }_{n+1},\mu _{m})\left( u^{(n+1)}(\tau
_{n+1})\right) d\underline{\tau }_{n+1},
\end{equation*}%
or in other words, 
\begin{equation}
u^{(1)}(\tau _{1},\mathbf{\cdot };\mathbf{\cdot })=\sum_{m}\int_{D}J(%
\underline{\tau }_{n+1},\mu _{m})\left( u^{(n+1)}(\tau _{n+1})\right) d%
\underline{\tau }_{n+1}.  \label{formula:MateiKlainermanFormula}
\end{equation}%
Here $\underline{\tau }_{n+1}=(\tau _{2},\tau _{3},...,\tau _{n+1})$, $%
D\subset \lbrack s,\tau _{1}]^{n}$, $\mu _{m}$ are a set of maps from $%
\{2,...,n+1\}$ to $\{1,...,n\}$ satisfying $\mu _{m}(2)=1$ and $\mu
_{m}(j)<j $ for all $j,$ and%
\begin{eqnarray*}
J(\underline{\tau }_{n+1},\mu _{m})\left( u^{(n+1)}(\tau _{n+1})\right)
&=&\left( \dprod_{j=1}^{n}g\left( \tau _{j+1}\right) \right)
^{-1}U^{(1)}(\tau _{1}-\tau _{2})B_{1,2}U^{(2)}(\tau _{2}-\tau _{3})B_{\mu
_{m}(3),2}... \\
&&U^{(n)}(\tau _{n}-\tau _{n+1})B_{\mu _{m}(n+1),n+1}(u^{(n+1)}(\tau _{n+1},%
\mathbf{\cdot };\mathbf{\cdot })).
\end{eqnarray*}
\end{lemma}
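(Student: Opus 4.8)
The plan is to iterate the Duhamel formula for hierarchy \ref{hierarchy:GPwithTimeFactor} and then run the Klainerman--Machedon board game of \cite{KlainermanAndMachedon}, the main point being that the time-dependent coupling $1/g(\tau)$ is inert under the combinatorial reshuffling. First I would set $b_0=1$ and write hierarchy \ref{hierarchy:GPwithTimeFactor} in Duhamel form with zero data at time $s$:
\[
u^{(k)}(\tau_k)=-i\int_s^{\tau_k}U^{(k)}(\tau_k-\tau_{k+1})\,\frac{1}{g(\tau_{k+1})}\sum_{j=1}^{k}B_{j,k+1}u^{(k+1)}(\tau_{k+1})\,d\tau_{k+1}.
\]
Iterating this $n$ times starting from $k=1$ --- where the sum is the single term $j=1$, forcing $\mu(2)=1$ --- expresses $u^{(1)}(\tau_1)$ as a sum of $n!$ terms, one per map $\mu\colon\{2,\dots,n+1\}\to\{1,\dots,n\}$ with $\mu(2)=1$ and $\mu(j)<j$, each of the form $\int_{\Delta}J(\underline{\tau}_{n+1},\mu)\big(u^{(n+1)}(\tau_{n+1})\big)\,d\underline{\tau}_{n+1}$ over the simplex $\Delta=\{s\le\tau_{n+1}\le\cdots\le\tau_2\le\tau_1\}$. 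The prefactor $\big(\prod_{j=1}^{n}g(\tau_{j+1})\big)^{-1}$ absorbed into $J$ is exactly the accumulated product of the $n$ factors $1/g(\tau_{j+1})$ picked up at the successive Duhamel steps.

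Next I would invoke the board game to collapse the $n!$ terms to at most $4^n$. Following \cite{KlainermanAndMachedon,Kirpatrick}, call two maps equivalent when one passes to the other by a sequence of \emph{acceptable moves}: transposing two adjacent factors $B_{\mu(\ell),\ell}$ and $B_{\mu(\ell+1),\ell+1}$ (allowed when $\mu(\ell+1)\ne\ell$, so that the two contractions touch disjoint index slots) together with the corresponding swap $\tau_\ell\leftrightarrow\tau_{\ell+1}$ of integration variables. The key lemma I would prove is that an acceptable move preserves the value of a term: the free propagators $U^{(k)}$ intertwine with the relabeling, the $B$'s on disjoint slots commute in the appropriate sense, and the leftover permutation is absorbed by the symmetry of $u^{(n+1)}$ in its unprimed (resp. primed) variables --- which holds because $u^{(n+1)}=T_{n+1}^{-1}\gamma^{(n+1)}$ inherits the permutation symmetry of the marginal $\gamma^{(n+1)}$. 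The only trap-specific point is the prefactor $\big(\prod_{j=1}^{n}g(\tau_{j+1})\big)^{-1}$, but this is a symmetric function of $\tau_2,\dots,\tau_{n+1}$, so the time transposition leaves it untouched and the move is value-preserving exactly as in the untrapped case. In each class I would then select the canonical ``upper echelon'' representative $\mu_m$ --- still of the form $\mu_m(2)=1$, $\mu_m(j)<j$ --- and recombine, by Fubini, the simplices of the identical terms in that class into a single region $D\subset[s,\tau_1]^n$.

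Finally I would quote the Klainerman--Machedon combinatorial count \cite{KlainermanAndMachedon}, which bounds the number of upper echelon classes by $4^n$; collecting the $\le 4^n$ representatives gives formula \ref{formula:MateiKlainermanFormula}. The hard part is really just the value-preservation lemma for acceptable moves, where the symmetry of $u^{(n+1)}$ is essential; the quadratic trap adds nothing to that difficulty, since the sole new check --- that the $1/g(\tau_{j+1})$ factors survive each move unchanged --- is immediate from their symmetry in the time variables. So I anticipate no genuinely new obstacle beyond careful bookkeeping relative to the argument of \cite{KlainermanAndMachedon}.
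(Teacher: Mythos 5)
Your proposal is correct and follows essentially the same route as the paper: iterate Duhamel with zero data and apply the Klainerman--Machedon board game, with the sole trap-specific observation that the scalar factors $1/g(\tau_{j+1})$, depending only on the time variables, pass unchanged through the combinatorial moves (the paper phrases this as the $g$-factors commuting with $U^{(k)}$ and $B_{j,k+1}$). The extra detail you give on acceptable moves and upper echelon classes is just an expansion of the citation to Theorem 3.4 of \cite{KlainermanAndMachedon} that the paper relies on.
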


\begin{proof}
The RHS of formula \ref{formula:MateiKlainermanFormula} is in fact an
application of Duhamel's principle involving only the inhomogeneous terms
since we have zero initial data. The parameter $n$ is the coupling level we
take. This lemma follows from the proof of Theorem 3.4 in \cite%
{KlainermanAndMachedon}. One needs only notice that factors depending solely
on $\tau ,$ e.g. 
\begin{equation*}
\frac{1}{g\left( \tau _{j+1}\right) }
\end{equation*}%
commutes with $U^{(k)}$ and $B_{j,k+1}$ $\forall j,k$.
\end{proof}

With Lemmas \ref{Theorem:CollapsingEstimateForGP} and \ref%
{lemma:Klainerman-MachedonBoardGameForGP}, we prove Theorem \ref%
{Theorem:Uniqueness of GP}. Let $D_{\tau _{2}}=\left\{ \left( \tau
_{3},...,\tau _{n+1}\right) |\left( \tau _{2},\tau _{3},...,\tau
_{n+1}\right) \in D\right\} $ where $D$ is as in Lemma \ref%
{lemma:Klainerman-MachedonBoardGameForGP}. Given that we have already
checked that 
\begin{equation*}
\left\Vert R^{(1)}u^{(1)}(s_{0},\cdot )\right\Vert _{L^{2}(\mathbb{R}%
^{3}\times \mathbb{R}^{3})}=0,
\end{equation*}%
applying Lemma \ref{lemma:Klainerman-MachedonBoardGameForGP} to $[s_{0},\tau
_{1}]\subset \lbrack s,T]\subset \lbrack 0,\frac{\tan \omega T_{0}}{\omega }%
] $, we have

\begin{eqnarray*}
&&\left\Vert R^{(1)}u^{(1)}(\tau _{1},\cdot )\right\Vert _{L^{2}(\mathbb{R}%
^{3}\times \mathbb{R}^{3})} \\
&\leqslant &\sum_{m}\left\Vert R^{(1)}\int_{D}J(\underline{\tau }_{n+1},\mu
_{m})\left( u^{(n+1)}(\tau _{n+1})\right) d\underline{\tau }%
_{n+1}\right\Vert _{L^{2}(\mathbb{R}^{3}\times \mathbb{R}^{3})} \\
&\leqslant &\sum_{m}\int_{[s_{0},\tau _{1}]^{n}}\left\Vert R^{(1)}J(%
\underline{\tau }_{n+1},\mu _{m})\left( u^{(n+1)}(\tau _{n+1})\right)
\right\Vert _{L^{2}(\mathbb{R}^{3}\times \mathbb{R}^{3})}d\underline{\tau }%
_{n+1} \\
&\leqslant &\sum_{m}\left( \inf\limits_{\tau \in \lbrack 0,\frac{\tan \omega
T_{0}}{\omega }]}g(\tau )\right) ^{-n}\int_{[s_{0},\tau _{1}]^{n}}\left\Vert
R^{(1)}U^{(1)}(\tau _{1}-\tau _{2})B_{1,2}U^{(2)}(\tau _{2}-\tau _{3})B_{\mu
_{m}(3),2}...\right\Vert _{L^{2}(\mathbb{R}^{3}\times \mathbb{R}^{3})}d%
\underline{\tau }_{n+1} \\
&=&\sum_{m}C^{n}\int_{[s_{0},\tau _{1}]^{n}}\left\Vert
R^{(1)}B_{1,2}U^{(2)}(\tau _{2}-\tau _{3})B_{\mu _{m}(3),2}...\right\Vert
_{L^{2}(\mathbb{R}^{3}\times \mathbb{R}^{3})}d\underline{\tau }_{n+1} \\
&\leqslant &\sum_{m}C^{n}(\tau _{1}-s_{0})^{\frac{1}{2}}\int_{[s_{0},\tau
_{1}]^{n-1}}\left( \int \left\Vert R^{(1)}B_{1,2}U^{(2)}(\tau _{2}-\tau
_{3})B_{\mu _{m}(3),2}...\right\Vert _{L^{2}(\mathbb{R}^{3}\times \mathbb{R}%
^{3})}^{2}d\tau _{2}\right) ^{\frac{1}{2}}d\tau _{3}...d\tau _{n+1} \\
&\leqslant &\sum_{m}C^{n}C(\tau _{1}-s_{0})^{\frac{1}{2}}\int_{[s_{0},\tau
_{1}]^{n-1}}\left\Vert R^{(2)}B_{\mu _{m}(3),2}U^{(3)}(\tau _{3}-\tau
_{4})...\right\Vert _{L^{2}(\mathbb{R}^{3}\times \mathbb{R}^{3})}d\tau
_{3}...d\tau _{n+1}\text{ (Lemma \ref{Theorem:CollapsingEstimateForGP})} \\
&&(\text{Iterate }n-2\text{ times}) \\
&&... \\
&\leqslant &\sum_{m}C^{n}\left( C(\tau _{1}-s_{0})\right) ^{\frac{n-1}{2}%
}\int_{s_{0}}^{\tau _{1}}\left\Vert R^{(n)}B_{\mu
_{m}(n+1),n+1}u^{(n+1)}(\tau _{n+1},\mathbf{\cdot };\mathbf{\cdot }%
)\right\Vert _{L^{2}(\mathbb{R}^{3}\times \mathbb{R}^{3})}d\tau _{n+1} \\
&\leqslant &C\left( C(\tau _{1}-s_{0})\right) ^{\frac{n-1}{2}}.
\end{eqnarray*}%
Let $(\tau _{1}-s_{0})$ be sufficiently small, and $n\rightarrow \infty $,
we infer that 
\begin{equation*}
\left\Vert R^{(1)}u^{(1)}(\tau _{1},\cdot )\right\Vert _{L^{2}(\mathbb{R}%
^{3}\times \mathbb{R}^{3})}=0\text{ in }\left[ s_{0},\tau _{1}\right] .
\end{equation*}%
Such a choice of $(\tau _{1}-s_{0})$ works for all of $\left[ s_{0},T\right]
.\ $Accordingly, we have $\left\Vert R^{(k)}u^{(k)}(\tau ,\cdot )\right\Vert
_{L^{2}(\mathbb{R}^{3}\times \mathbb{R}^{3})}=0$, $\forall k,\tau \in \left[
s_{0},T\right] $ by iterating the above argument$.$ Thence we have attained
Theorem \ref{Theorem:Uniqueness of GP}.

\section{The Space-Time Bound of the BBGKY Hierarchy for $\protect\beta \in
\left( 0,2/7\right] $ (Proof of Theorem \protect\ref{Theorem:Space-Time
Bound for GlensBBGKY})\label{Section:Space-TimeBound}}

We establish Theorem \ref{Theorem:Space-Time Bound for GlensBBGKY} in this
section. This section also serves as a simplification and an extension of
Chen-Pavlovic \cite{TChenAndNPSpace-Time}. Without loss of generality, we
may assume $k=1$ that is%
\begin{equation}
\int_{0}^{T}\left\Vert R^{(1)}\tilde{B}_{N,1,2,\tau }u_{N}^{(2)}\right\Vert
_{L^{2}}d\tau \leqslant C.
\label{estimate:MainEstimateOfSpace-TimeBoundSection}
\end{equation}%
We are going to prove estimate \ref%
{estimate:MainEstimateOfSpace-TimeBoundSection}\ for a sufficiently small
time $T$ determined by the controlling constant in condition \ref%
{condition:EnergyBoundForBBGKY} and independent of $N$, then the
bootstrapping argument in Section \ref{Section:Uniqueness} (Proof of Theorem %
\ref{Theorem:Uniqueness of GP}) and condition \ref%
{condition:EnergyBoundForBBGKY} provide the bound for every finite time $%
T\in \lbrack 0,\frac{\tan \omega T_{0}}{\omega }]$. Since we work with $%
L^{2} $ norms here, we transform condition \ref%
{condition:EnergyBoundForBBGKY} into the $H^{1}$ energy bound:%
\begin{equation*}
\int \left\vert \left( \dprod\limits_{j=1}^{k}\left( 1-\triangle
_{y_{j}}\right) ^{\frac{1}{2}}\left( 1-\triangle _{y_{j}^{\prime }}\right) ^{%
\frac{1}{2}}\right) u_{N}^{(k)}\right\vert ^{2}d\mathbf{y}_{k}d\mathbf{y}%
_{k}^{\prime }\leqslant \left( \limfunc{Tr}\left( \dprod_{j=1}^{k}\left(
1-\triangle _{y_{j}}\right) \right) u_{N}^{(k)}\right) ^{2}.
\end{equation*}%
To obtain the above estimate, one notices%
\begin{eqnarray*}
&&\int \left\vert \left( 1-\triangle _{y}\right) ^{\frac{1}{2}}\left(
1-\triangle _{y^{\prime }}\right) ^{\frac{1}{2}}\int \phi \left( y,r\right) 
\overline{\phi \left( y^{\prime },r\right) }dr\right\vert ^{2}dydy^{\prime }
\\
&=&\int \left\vert \int \left( 1-\triangle _{y}\right) ^{\frac{1}{2}}\phi
\left( y,r\right) \overline{\left( 1-\triangle _{y^{\prime }}\right) ^{\frac{%
1}{2}}\phi \left( y^{\prime },r\right) }dr\right\vert ^{2}dydy^{\prime } \\
&\leqslant &\int \left( \int \left( 1-\triangle _{y}\right) ^{\frac{1}{2}%
}\phi \left( y,r\right) \overline{\left( 1-\triangle _{y}\right) ^{\frac{1}{2%
}}\phi \left( y,r\right) }dr\right) \left( \int \left( 1-\triangle
_{y^{\prime }}\right) ^{\frac{1}{2}}\phi \left( y^{\prime },r\right) 
\overline{\left( 1-\triangle _{y^{\prime }}\right) ^{\frac{1}{2}}\phi \left(
y^{\prime },r\right) }dr\right) dydy^{\prime } \\
&=&\left( \int \phi \left( y,r\right) \overline{\left( 1-\triangle
_{y}\right) \phi \left( y,r\right) }dydr\right) ^{2},
\end{eqnarray*}%
the energy bound then follows from the definition of $u_{N}^{(k)}$. The
analysis of Theorem \ref{Theorem:Space-Time Bound for GlensBBGKY} also
involves $\tilde{B}_{N,j,k+1,\tau }$ which approximates $B_{j,k+1}$ for
every $\tau ,$ so we generalize Lemma \ref{Theorem:CollapsingEstimateForGP}
to the following collapsing estimate.

\begin{theorem}
\label{Theorem:3*3d}Suppose $u(\tau ,y_{1},y_{2},y_{2}^{\prime })$ solves
the Schr\"{o}dinger equation 
\begin{eqnarray}
iu_{\tau }+\frac{1}{2}\triangle _{y_{1}}u+\frac{1}{2}\triangle _{y_{2}}u-%
\frac{1}{2}\triangle _{y_{2}^{\prime }}u &=&0\text{ in }\mathbb{R}^{9+1}
\label{eqn:3*3d} \\
u(0,y_{1},y_{2},y_{2}^{\prime }) &=&f(y_{1},y_{2},y_{2}^{\prime }),  \notag
\end{eqnarray}%
and $g(\tau )\geqslant c_{0}>0$, then there is a $C$ independent of $N$ and $%
u$ such that%
\begin{eqnarray*}
&&\int_{\mathbb{R}^{3+1}}\left\vert \left\vert \nabla _{y}\right\vert \left(
\int \left( g^{3}(\tau )V_{N}(g(\tau )\left( y-y_{2}\right) )\right) \delta
(y_{2}-y_{2}^{\prime })u(\tau ,y,y_{2};y_{2}^{\prime })dy_{2}dy_{2}^{\prime
}\right) \right\vert ^{2}dyd\tau \\
&\leqslant &Cb_{0}^{2}\left\Vert \left\vert \nabla _{y_{1}}\right\vert
\left\vert \nabla _{y_{2}}\right\vert \left\vert \nabla _{y_{2}^{\prime
}}\right\vert f\right\Vert _{2}^{2}.
\end{eqnarray*}%
where $b_{0}=\int V_{N}dy.$
\end{theorem}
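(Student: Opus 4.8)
The plan is to prove Theorem \ref{Theorem:3*3d} by reducing it to the Klainerman--Machedon collapsing estimate (Lemma \ref{Theorem:CollapsingEstimateForGP}), the only genuinely new ingredient being the time-dependent dilation $g(\tau)$. First I would rewrite the operator inside the norm on the left-hand side so as to display it as a superposition of \emph{pure} ($\delta$-type) collapses of \emph{translated} free solutions. Performing the $y_2'$-integration against $\delta(y_2-y_2')$ and then the substitutions $w=y-y_2$, $z=N^{\beta}g(\tau)w$ in the remaining $y_2$-integral, one gets
\[
\int g^{3}(\tau)\,V_{N}\!\big(g(\tau)(y-y_2)\big)\,u(\tau ,y,y_2;y_2)\,dy_2
=\int V(z)\,u\!\big(\tau ,y,\,y-\lambda(\tau)z,\,y-\lambda(\tau)z\big)\,dz,\qquad \lambda(\tau)=\tfrac{1}{N^{\beta}g(\tau)},
\]
where $\int V_{N}=\int V=b_{0}$ has been used. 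For fixed $z$ and fixed $\tau$, the function $(y_{1},y_{2},y_{2}')\mapsto u(\tau ,y_{1},y_{2}-\lambda(\tau)z,y_{2}'-\lambda(\tau)z)$ is, up to a translation in $(y_{2},y_{2}')$, the time-$\tau$ value of a solution of \eqref{eqn:3*3d}, and translations commute with $|\nabla_{y_1}|$, $|\nabla_{y_2}|$, $|\nabla_{y_2'}|$; hence, \emph{if $g$ were constant}, Minkowski's integral inequality in $L^{2}(dy\,d\tau)$ followed by the collapsing estimate (Lemma \ref{Theorem:CollapsingEstimateForGP}, applied to the translated solution — its $B^{2}$-part and its primed first variable being inactive here) would give the bound at once, with constant $C\|V\|_{L^{1}}=Cb_{0}$ and, crucially, \emph{uniformly in} $N$, since $\lambda$ enters only through a translation of the data.

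To remove the constancy assumption I would localize in time. Using $0<c_{0}\le g(\tau)\le C_{0}$ on the relevant finite interval, partition it into $L$ subintervals $I_{1},\dots ,I_{L}$ on each of which $g$ varies by at most a factor $2$; the number $L$ depends only on $C_{0}/c_{0}$, hence only on $\omega$ and $T_{0}$, and not on $N$ or $u$. On $I_{\ell}$ I would fix a reference $\tau_{\ell}\in I_{\ell}$ and substitute $z=(g(\tau)/g(\tau_{\ell}))\,\zeta$, so that $\lambda(\tau)z=\lambda(\tau_{\ell})\zeta$ becomes a \emph{$\tau$-independent} shift, while the Jacobian and the argument of $V$ only acquire the bounded, $y$-independent factor $g(\tau)/g(\tau_{\ell})\in[\tfrac12,2]$. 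Minkowski's inequality in $L^{2}(I_{\ell}\times\mathbb{R}^{3})$, the translation-invariance of $\|\,|\nabla_{y_1}||\nabla_{y_2}||\nabla_{y_2'}|\cdot\|_{L^2}$, and Lemma \ref{Theorem:CollapsingEstimateForGP} (applied after enlarging the $\tau$-interval from $I_{\ell}$ to $\mathbb{R}$) then bound $\|\,|\nabla_{y}|(\cdots)\|_{L^{2}(I_{\ell}\times\mathbb{R}^{3})}$ by $C\big(\int\sup_{1/2\le s\le 2}|V(s\zeta)|\,d\zeta\big)\,\|\,|\nabla_{y_{1}}||\nabla_{y_{2}}||\nabla_{y_{2}'}|f\|_{L^{2}}$; squaring, summing over $\ell$, and discarding the measure-zero overlaps gives Theorem \ref{Theorem:3*3d} with $C$ of the form $L\cdot C'\cdot\big(\int\sup_{1/2\le s\le 2}|V(s\cdot)|\big)^{2}$.

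The main difficulty is precisely the $\tau$-dependence of $g$: because of it the collapsing operator no longer intertwines with the free evolution \eqref{eqn:3*3d} the way the sharp $\delta$-collapse does, so Lemma \ref{Theorem:CollapsingEstimateForGP} cannot be quoted verbatim, and the time-localization above — or, alternatively, a direct space-time Fourier computation in the spirit of \cite{KlainermanAndMachedon}, in which the $\tau$-dependence merely ``smears'' the dispersion $\delta$-constraint by a fixed amount controlled by the decay of $\widehat V$ — is unavoidable. The remaining loose end is the elementary bound $\int\sup_{1/2\le s\le 2}|V(s\zeta)|\,d\zeta\lesssim b_{0}$ (or, less sharply, $\le C(V)$): from $\sup_{1/2\le s\le 2}|V(s\zeta)|\le|V(\zeta)|+\int_{1/2}^{2}|\zeta|\,|\nabla V(s\zeta)|\,ds$ and a radial change of variables this reduces to the finiteness of $\|V\|_{L^{1}}+\||x|\nabla V\|_{L^{1}}$, which is where the spherical symmetry and the $H^{2}\cap W^{2,\infty}$ regularity of $V$ enter. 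Compared with \cite{TChenAndNPSpace-Time}, this route is a simplification in that the parameter $N$ disappears already after the first substitution and the whole statement is reduced to the previously established collapsing estimate.
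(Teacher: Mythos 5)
Your route is genuinely different from the paper's: the paper proves Theorem \ref{Theorem:3*3d} by duality and a space--time Fourier computation in the spirit of Klainerman--Machedon and of Theorem 2 of \cite{ChenAnisotropic}, in which the only property of the potential ever used is $\sup_{\tau ,\xi ,N}\vert \widehat{V_{N}^{\tau }}(\xi )\vert \leqslant \int V_{N}=b_{0}$; the $\tau$-dependent factor $\widehat{V_{N}^{\tau }}(\xi _{2}+\xi _{2}^{\prime })$ is simply absorbed into an auxiliary function $G(\tau )$ whose time Fourier transform is controlled by Plancherel, so the time dependence of $g$ is completely harmless and no upper bound on $g$, no control of its variation, and no decay of $V$ beyond $L^{1}$ are needed. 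Your idea of writing the smeared collapse as a superposition of translated pure collapses and quoting the sharp collapsing estimate is appealing, but the step where you freeze $g$ is exactly where the argument breaks against the stated hypotheses.

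Concretely: after the substitution $z=(g(\tau )/g(\tau _{\ell }))\zeta $ you must pull the $\tau$-dependent weight $V\left( \tfrac{g(\tau )}{g(\tau _{\ell })}\zeta \right) \left( \tfrac{g(\tau )}{g(\tau _{\ell })}\right) ^{3}$ out of the $L^{2}(I_{\ell }\times \mathbb{R}^{3})$ norm, and you do so via the majorant $\sup_{1/2\leqslant s\leqslant 2}\vert V(s\zeta )\vert $. The resulting constant $\int \sup_{1/2\leqslant s\leqslant 2}\vert V(s\zeta )\vert \,d\zeta $ is not bounded by $b_{0}=\Vert V\Vert _{L^{1}}$, and it need not even be finite for $V$ admissible in this paper: a nonnegative, spherically symmetric $V\in L^{1}\cap H^{2}\cap W^{2,\infty }$ built from radial bumps of height $2^{-2n}n^{-2}$ and width $1$ located at radii $2^{n}$ satisfies all the hypotheses, yet $\int \sup_{1/2\leqslant s\leqslant 2}\vert V(s\zeta )\vert \,d\zeta \gtrsim \sum_{n}2^{n}n^{-2}=\infty $, and likewise $\Vert \,\vert x\vert \nabla V\Vert _{L^{1}}=\infty $, so the ``elementary loose end'' you propose (which requires $\Vert V\Vert _{L^{1}}+\Vert \,\vert x\vert \nabla V\Vert _{L^{1}}<\infty $) is not implied by $L^{1}\cap H^{2}\cap W^{2,\infty }$. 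Hence your argument does not produce the asserted bound $Cb_{0}^{2}$ and, for some admissible $V$, produces no finite bound at all; this is the genuine gap. Two secondary points: the number $L$ of time intervals on which $g$ varies by at most a factor $2$ is governed by the total variation of $\log g$, not by $C_{0}/c_{0}$ alone (fine for the concrete $g(\tau )=(1+\omega ^{2}\tau ^{2})^{-1/2}$ on a finite interval, but not a consequence of the hypothesis $g\geqslant c_{0}$, while the time integral in the theorem runs over all of $\mathbb{R}$); and Lemma \ref{Theorem:CollapsingEstimateForGP} as stated acts on kernels in $(\mathbf{y}_{2},\mathbf{y}_{2}^{\prime })$, so the nine-variable diagonal-trace estimate you invoke requires a (routine, e.g.\ tensoring with a fixed free solution in the inactive primed variable) reduction rather than a verbatim citation.
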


\begin{proof}
Theorem \ref{Theorem:3*3d} follows from a slightly modified version of the
proof of Theorem 2 of \cite{ChenAnisotropic}. We include it in Appendix II
for completeness.
\end{proof}

We now present the proof of estimate \ref%
{estimate:MainEstimateOfSpace-TimeBoundSection}. In order to more
conveniently apply the Klainerman-Machedon board game, let us start by
rewriting hierarchy \ref{equation:Duhamel of BBGKY} as%
\begin{eqnarray}
u_{N}^{(k)}(\tau _{k}) &=&U^{(k)}(\tau _{k})u_{N,0}^{(k)}+\int_{0}^{\tau
_{k}}U^{(k)}(\tau _{k}-\tau _{k+1})\frac{\tilde{V}_{N,\tau
_{k+1}}^{(k)}u_{N}^{(k)}(\tau _{k+1})}{g\left( \tau _{k+1}\right) }d\tau
_{k+1}  \label{equation:short duhamel of BBGKY} \\
&&+\frac{N-k}{N}\int_{0}^{\tau _{k}}U^{(k)}(\tau _{k}-\tau _{k+1})\frac{%
\tilde{B}_{N,\tau _{k+1}}^{(k+1)}u_{N}^{(k+1)}(\tau _{k+1})}{g\left( \tau
_{k+1}\right) }d\tau _{k+1}  \notag
\end{eqnarray}%
where 
\begin{eqnarray*}
\tilde{V}_{N,\tau }^{(k)}u_{N}^{(k)}(\tau ,\mathbf{y}_{k};\mathbf{y}%
_{k}^{\prime }) &=&\frac{1}{N}\sum_{1\leqslant i<j\leqslant k}\left( \tilde{V%
}_{N,\tau }(y_{i}-y_{j})-\tilde{V}_{N,\tau }(y_{i}^{\prime }-y_{j}^{\prime
})\right) u_{N}^{(k)}(\tau ,\mathbf{y}_{k};\mathbf{y}_{k}^{\prime }) \\
\tilde{B}_{N,\tau }^{(k+1)}u_{N}^{(k+1)} &=&\sum_{j=1}^{k}\tilde{B}%
_{N,j,k+1,\tau }u_{N}^{(k+1)}=\sum_{j=1}^{k}\left( \tilde{B}_{N,j,k+1,\tau
}^{1}-\tilde{B}_{N,j,k+1,\tau }^{2}\right) u_{N}^{(k+1)}.
\end{eqnarray*}%
We omit the imaginary unit in front of the potential term and the
interaction term so that we do not need to keep track of its exact power.

Iterate Duhamel's principle (equation \ref{equation:short duhamel of BBGKY}) 
$k$ times, we have%
\begin{eqnarray*}
&&u_{N}^{(2)}(\tau _{2}) \\
&=&U^{(2)}(\tau _{2})u_{N,0}^{(2)}+\int_{0}^{\tau _{2}}U^{(2)}(\tau
_{2}-\tau _{3})\frac{\tilde{V}_{N,\tau _{3}}^{(2)}u_{N}^{(2)}(\tau _{3})}{%
g\left( \tau _{3}\right) }d\tau _{3}+\frac{N-2}{N}\int_{0}^{\tau
_{2}}U^{(2)}(\tau _{2}-\tau _{3})\frac{\tilde{B}_{N,\tau
_{3}}^{(3)}u_{N}^{(3)}(\tau _{3})}{g\left( \tau _{3}\right) }d\tau _{3} \\
&=&U^{(2)}(\tau _{2})u_{N,0}^{(2)}+\frac{N-2}{N}\int_{0}^{\tau
_{2}}U^{(2)}(\tau _{2}-\tau _{3})\frac{\tilde{B}_{N,\tau
_{3}}^{(3)}U^{(3)}(\tau _{3})u_{N,0}^{(3)}}{g\left( \tau _{3}\right) }d\tau
_{3}+\int_{0}^{\tau _{2}}U^{(2)}(\tau _{2}-\tau _{3})\frac{\tilde{V}_{N,\tau
_{3}}^{(2)}u_{N}^{(2)}(\tau _{3})}{g\left( \tau _{3}\right) }d\tau _{3} \\
&&+\frac{N-2}{N}\int_{0}^{\tau _{2}}U^{(2)}(\tau _{2}-\tau _{3})\frac{\tilde{%
B}_{N,\tau _{3}}^{(3)}}{g\left( \tau _{3}\right) }\int_{0}^{\tau
_{3}}U^{(3)}(\tau _{3}-\tau _{4})\frac{\tilde{V}_{N,\tau
_{4}}^{(3)}u_{N}^{(3)}(\tau _{4})}{g\left( \tau _{4}\right) }d\tau _{4}d\tau
_{3} \\
&&+\frac{N-2}{N}\frac{N-3}{N}\int_{0}^{\tau _{2}}U^{(2)}(\tau _{2}-\tau _{3})%
\frac{\tilde{B}_{N,\tau _{3}}^{(3)}}{g\left( \tau _{3}\right) }%
\int_{0}^{\tau _{3}}U^{(3)}(\tau _{3}-\tau _{4})\frac{\tilde{B}_{N,\tau
_{4}}^{(4)}u_{N}^{(4)}(\tau _{4})}{g\left( \tau _{4}\right) }d\tau _{4}d\tau
_{3} \\
&&... \\
&=&FreePart^{(k)}+PotentialPart^{(k)}+InteractionPart^{(k)}
\end{eqnarray*}%
where%
\begin{eqnarray*}
&&FreePart^{(k)} \\
&=&U^{(2)}(\tau _{2})u_{N,0}^{(2)}+\sum_{j=3}^{k}\left( \dprod_{l=3}^{j}%
\frac{N+1-l}{N}\right) \\
&&\times \int_{0}^{\tau _{2}}...\int_{0}^{\tau _{j-1}}\left(
\dprod_{l=2}^{j-1}g\left( \tau _{l+1}\right) \right) ^{-1}U^{(2)}(\tau
_{2}-\tau _{3})\tilde{B}_{N,\tau _{3}}^{(3)}...U^{(j-1)}(\tau _{j-1}-\tau
_{j})\tilde{B}_{N,\tau _{j}}^{(j)} \\
&&\times \left( U^{(j)}(\tau _{j})u_{N,0}^{(j)}\right) d\tau _{3}...d\tau
_{j},
\end{eqnarray*}%
\begin{eqnarray*}
&&PotentialPart^{(k)} \\
&=&\int_{0}^{\tau _{2}}U^{(2)}(\tau _{2}-\tau _{3})\frac{\tilde{V}_{N,\tau
_{3}}^{(2)}u_{N}^{(2)}(\tau _{3})}{g\left( \tau _{3}\right) }d\tau
_{3}+\sum_{j=3}^{k}\left( \dprod_{l=3}^{j}\frac{N+1-l}{N}\right) \\
&&\times \int_{0}^{\tau _{2}}...\int_{0}^{\tau _{j-1}}\left(
\dprod_{l=2}^{j-1}g\left( \tau _{l+1}\right) \right) ^{-1}U^{(2)}(\tau
_{2}-\tau _{3})\tilde{B}_{N,\tau _{3}}^{(3)}...U^{(j-1)}(\tau _{j-1}-\tau
_{j})\tilde{B}_{N,\tau _{j}}^{(j)} \\
&&\times \left( \int_{0}^{\tau _{j}}U^{(j)}(\tau _{j}-\tau _{j+1})\frac{%
\tilde{V}_{N,\tau _{j+1}}^{(j)}u_{N}^{(j)}(\tau _{j+1})}{g\left( \tau
_{j+1}\right) }d\tau _{j+1}\right) d\tau _{3}...d\tau _{j},
\end{eqnarray*}%
\begin{eqnarray*}
&&InteractionPart^{(k)} \\
&=&\left( \dprod_{l=3}^{k+1}\frac{N+1-l}{N}\right) \\
&&\times \int_{0}^{\tau _{2}}...\int_{0}^{\tau _{k}}\left(
\dprod_{l=2}^{k}g\left( \tau _{l+1}\right) \right) ^{-1}U^{(2)}(\tau
_{2}-\tau _{3})\tilde{B}_{N,\tau _{3}}^{(3)}...U^{(k)}(\tau _{k}-\tau _{k+1})%
\tilde{B}_{N,\tau _{k+1}}^{(k+1)} \\
&&\times \left( u_{N}^{(k+1)}(\tau _{k+1})\right) d\tau _{3}...d\tau _{k+1}
\end{eqnarray*}%
From here on out, the $k$'s in the formulas are the number of Duhamel
iterations we take to prove estimate \ref%
{estimate:MainEstimateOfSpace-TimeBoundSection}. We will call it the
coupling level for short. It is distinct from the $k$ in the statement of
Theorem \ref{Theorem:Space-Time Bound for GlensBBGKY}.

We are going to argue%
\begin{eqnarray}
\int_{0}^{T}\left\Vert R^{(1)}\tilde{B}_{N,1,2,\tau _{2}}FreePart^{(k)}(\tau
_{2})\right\Vert _{L^{2}}d\tau _{2} &\leqslant &C
\label{estimate:FreePartofBBGKY} \\
\int_{0}^{T}\left\Vert R^{(1)}\tilde{B}_{N,1,2,\tau
_{2}}PotentialPart^{(k)}(\tau _{2})\right\Vert _{L^{2}}d\tau _{2} &\leqslant
&C  \label{estimate:PotentialPartofBBGKY} \\
\int_{0}^{T}\left\Vert R^{(1)}\tilde{B}_{N,1,2,\tau
_{2}}InteractionPart^{(k)}(\tau _{2})\right\Vert _{L^{2}}d\tau _{2}
&\leqslant &C  \label{estimate:InteractionPartofBBGKY}
\end{eqnarray}%
for some $C$ and a sufficiently small $T$ determined by the controlling
constant in condition \ref{condition:EnergyBoundForBBGKY} and independent of 
$N.$ We observe that $\tilde{B}_{N,\tau _{j}}^{(j)}$ has $2j$ terms inside
so that each summand of $u_{N}^{(2)}(\tau _{2})$ contains factorially many
terms $\left( \sim k!\right) $. So we use the Klainerman-Machedon board game
to reduce the number of terms. Define%
\begin{equation*}
J_{N}(\underline{\tau }_{j+1})(f)=\left( \dprod_{l=2}^{j}g\left( \tau
_{l+1}\right) \right) ^{-1}U^{(2)}(\tau _{2}-\tau _{3})\tilde{B}_{N,\tau
_{3}}^{(3)}...U^{(j)}(\tau _{j}-\tau _{j+1})\tilde{B}_{N,\tau
_{j+1}}^{(j+1)}f,
\end{equation*}%
where $\underline{\tau }_{j+1}$ means $\left( \tau _{3},...,\tau
_{j+1}\right) ,$ then the Klainerman-Machedon board game implies the lemma.

\begin{lemma}
\label{lemma:Klainerman-MachedonBoardGameForBBGKY}\cite%
{KlainermanAndMachedon} One can express 
\begin{equation*}
\int_{0}^{\tau _{2}}...\int_{0}^{\tau _{j}}J_{N}(\underline{\tau }_{j+1})(f)d%
\underline{\tau }_{j+1}
\end{equation*}%
as a sum of at most $4^{j-1}$ terms of the form 
\begin{equation*}
\int_{D}J_{N}(\underline{\tau }_{j+1},\mu _{m})(f)d\underline{\tau }_{j+1},
\end{equation*}%
or in other words, 
\begin{equation*}
\int_{0}^{\tau _{2}}...\int_{0}^{\tau _{j}}J_{N}(\underline{\tau }_{j+1})(f)d%
\underline{\tau }_{j+1}=\sum_{m}\int_{D}J_{N}(\underline{\tau }_{j+1},\mu
_{m})(f)d\underline{\tau }_{j+1}.
\end{equation*}%
Here $D\subset \lbrack 0,\tau _{2}]^{j-1}$, $\mu _{m}$ are a set of maps
from $\{3,...,j+1\}$ to $\{2,...,j\}$ satisfying $\mu _{m}(3)=2$ and $\mu
_{m}(l)<l$ for all $l,$ and%
\begin{eqnarray*}
J_{N}(\underline{\tau }_{j+1},\mu _{m})(f) &=&\left( \dprod_{l=2}^{j}g\left(
\tau _{l+1}\right) \right) ^{-1}U^{(2)}(\tau _{2}-\tau _{3})\tilde{B}%
_{N,2,3,\tau _{3}}U^{(3)}(\tau _{3}-\tau _{4})\tilde{B}_{N,\mu
_{m}(4),4,\tau _{4}}... \\
&&U^{(j)}(\tau _{j}-\tau _{j+1})\tilde{B}_{N,\mu _{m}(j+1),j+1,\tau
_{j+1}}(f).
\end{eqnarray*}
\end{lemma}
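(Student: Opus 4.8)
The plan is to reproduce the Klainerman--Machedon board game argument of \cite{KlainermanAndMachedon}, exactly as in the proof of Lemma \ref{lemma:Klainerman-MachedonBoardGameForGP} above, and then to check that the two features distinguishing the present hierarchy from the trap-free one --- namely the time-only factors $g(\tau_{l+1})^{-1}$ and the $\tau$-dependence of the potentials $\tilde V_{N,\tau}$ (hence of the operators $\tilde B_{N,p,l,\tau}$) --- do not obstruct it. First I would expand each $\tilde B_{N,\tau_l}^{(l)}=\sum_{p=1}^{l-1}\bigl(\tilde B_{N,p,l,\tau_l}^{1}-\tilde B_{N,p,l,\tau_l}^{2}\bigr)$ for $l=3,\dots,j+1$; this rewrites $\int_{0}^{\tau_2}\cdots\int_{0}^{\tau_j}J_N(\underline{\tau}_{j+1})(f)\,d\underline{\tau}_{j+1}$ as a sum, over all contraction maps $\sigma\colon\{3,\dots,j+1\}\to\mathbb{Z}$ with $\sigma(l)\in\{1,\dots,l-1\}$ together with sign choices, of integrals over the simplex $\{\tau_2\geq\tau_3\geq\cdots\geq\tau_{j+1}\geq0\}$ of the corresponding ordered products of $U^{(\cdot)}$'s and $\tilde B$'s times $\prod_{l=2}^{j}g(\tau_{l+1})^{-1}$ --- a number of terms growing factorially in $j$.

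The core step is the equivalence-class reduction. Following \cite{KlainermanAndMachedon}, I would declare two of these ``coloured simplices'' equivalent when one is obtained from the other by a permutation of the time labels $\{\tau_3,\dots,\tau_{j+1}\}$ corresponding to commuting adjacent operators past one another; the crucial fact is that $U^{(a)}$ and $\tilde B_{N,p,l,\tau_l}$, as well as two distinct $\tilde B$'s, commute whenever they act on disjoint sets of variables, while each factor $g(\tau_{l+1})^{-1}$, depending on time alone, commutes with every $U^{(\cdot)}$ and every $\tilde B$. After changing variables to undo the relabeling, all members of an equivalence class become one and the same operator applied to $f$, their simplices are mutually disjoint, and their union is a measurable set $D\subset[0,\tau_2]^{j-1}$; summing the class therefore collapses to a single term $\int_D J_N(\underline{\tau}_{j+1},\mu_m)(f)\,d\underline{\tau}_{j+1}$, where $\mu_m$ is the canonical representative of that class, characterized by $\mu_m(3)=2$ and $\mu_m(l)<l$. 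That each class has exactly one such representative, and that the number of these representatives is at most $4^{j-1}$, is the Catalan-type count carried out in \cite{KlainermanAndMachedon}; the $\tau$-dependence of the potentials is invisible to it, since $\tilde B_{N,p,l,\tau_l}$ is tied to its own integration variable $\tau_l$, so relabeling times carries each potential along consistently.

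The main obstacle is purely combinatorial bookkeeping: verifying that within an equivalence class the operators really do commute past each other in the order dictated by the permutation, and that the associated simplices tile the cube $[0,\tau_2]^{j-1}$ --- this is the heart of the board game. Since this part is identical to the trap-free situation already invoked in Lemma \ref{lemma:Klainerman-MachedonBoardGameForGP}, I would only point out the necessary modifications (the insertion of the inert factors $g(\tau_{l+1})^{-1}$ and the use of the $\tau_l$-indexed potentials $\tilde B_{N,p,l,\tau_l}$ in place of $B_{p,l}$) and refer to \cite{KlainermanAndMachedon} for the remaining details.
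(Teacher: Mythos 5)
Your proposal is correct and follows essentially the same route as the paper: the paper simply invokes the Klainerman--Machedon board game (Theorem 3.4 of \cite{KlainermanAndMachedon}), observing that the time-only factors $g(\tau_{l+1})^{-1}$ commute with every $U^{(\cdot)}$ and every collision operator, and that the $\tau_l$-dependence of $\tilde{B}_{N,\mu_m(l),l,\tau_l}$ travels with its own integration variable under the relabelings, which is exactly the content of your argument.
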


\begin{remark}
There is no difference between Lemma \ref%
{lemma:Klainerman-MachedonBoardGameForBBGKY} and the one we used for the
uniqueness of hierarchy \ref{hierarchy:GPwithTimeFactor} (Lemma \ref%
{lemma:Klainerman-MachedonBoardGameForGP}). We have restated it to remind
the reader of its exact form since we start from $u_{N}^{(2)}$ here.
\end{remark}

With the above lemma and the collapsing estimate (Theorem \ref{Theorem:3*3d}%
), we have the following relation, which is essentially part of the proof of
Theorem \ref{Theorem:Uniqueness of GP}, to help establishing estimates \ref%
{estimate:FreePartofBBGKY}, \ref{estimate:PotentialPartofBBGKY}, and \ref%
{estimate:InteractionPartofBBGKY}.%
\begin{eqnarray}
&&\int_{0}^{T}\left\Vert R^{(1)}\tilde{B}_{N,1,2,\tau _{2}}\int_{D}J_{N}(%
\underline{\tau }_{j+1},\mu _{m})(f)d\underline{\tau }_{j+1}\right\Vert
_{L^{2}}d\tau _{2}  \label{Relation:IteratngCollapsingEstimate} \\
&=&\int_{0}^{T}\left\Vert \int_{D}\left( \dprod_{l=2}^{j}g\left( \tau
_{l+1}\right) \right) ^{-1}R^{(1)}\tilde{B}_{N,1,2,\tau _{2}}U^{(2)}(\tau
_{2}-\tau _{3})\tilde{B}_{N,2,3,\tau _{3}}...d\tau _{3}...d\tau
_{j+1}\right\Vert _{L^{2}}d\tau _{2}  \notag \\
&\leqslant &\left( \inf\limits_{\tau \in \lbrack 0,\frac{\tan \omega T_{0}}{%
\omega }]}g(\tau )\right) ^{-j+1}\int_{\left[ 0,T\right] ^{j}}\left\Vert
R^{(1)}\tilde{B}_{N,1,2,\tau _{2}}U^{(2)}(\tau _{2}-\tau _{3})\tilde{B}%
_{N,2,3,\tau _{3}}...\right\Vert _{L^{2}}d\tau _{2}d\tau _{3}...d\tau _{j+1}
\notag \\
&\leqslant &C^{j-1}T^{\frac{1}{2}}\int_{\left[ 0,T\right] ^{j-1}}\left( \int
\left\Vert R^{(1)}\tilde{B}_{N,1,2,\tau _{2}}U^{(2)}(\tau _{2}-\tau _{3})%
\tilde{B}_{N,2,3,\tau _{3}}...\right\Vert _{L^{2}}^{2}d\tau _{2}\right) ^{%
\frac{1}{2}}d\tau _{3}...d\tau _{j+1}  \notag \\
&&\left( \text{Cauchy-Schwarz}\right)  \notag \\
&\leqslant &C^{j-1}\left( CT^{\frac{1}{2}}\right) \int_{\left[ 0,T\right]
^{j-1}}\left\Vert R^{(2)}\tilde{B}_{N,2,3,\tau _{3}}U^{(3)}(\tau _{3}-\tau
_{4})...\right\Vert d\tau _{3}...d\tau _{j+1}\text{ }\left( Theorem\text{ }%
\ref{Theorem:3*3d}\right)  \notag \\
&&(\text{Iterate }j-2\text{ times})  \notag \\
&&...  \notag \\
&\leqslant &(CT^{\frac{1}{2}})^{j-1}\int_{0}^{T}\left\Vert R^{(j)}\tilde{B}%
_{N,\mu _{m}(j+1),j+1,\tau _{j+1}}f\right\Vert _{L^{2}}d\tau _{j+1}.  \notag
\end{eqnarray}

We show estimate \ref{estimate:FreePartofBBGKY} in Section \ref%
{SubSection:FreePartOfBBGKY}. Assuming Proposition \ref%
{proposition:BoundingThePotentialTerms}, whose proof is postponed to Section %
\ref{SubSection:Proof of the potential terms bound}, we derive estimate \ref%
{estimate:PotentialPartofBBGKY} in Section \ref%
{SubSection:PotentialPartOfBBGKY}. Finally, by taking the coupling level $k$
to be $\ln N$, we check estimate \ref{estimate:InteractionPartofBBGKY} in
Section \ref{SubSection:InteractionPartOfBBGKY}. We remark that the proof of
estimates \ref{estimate:FreePartofBBGKY} and \ref%
{estimate:PotentialPartofBBGKY} is independent of the choice of the coupling
level $k.$ Taking the coupling level $k$ to be $\ln N$ in the estimate of
the interaction part is exactly the place where we follow the original idea
of Chen and Pavlovi\'{c} \cite{TChenAndNPSpace-Time}. Moreover, the proof of
estimate \ref{estimate:PotentialPartofBBGKY} (Section \ref%
{SubSection:PotentialPartOfBBGKY}) is the only place which relies on $\beta
\in \left( 0,\frac{2}{7}\right] $ in this paper.

\subsubsection{Estimate of the Free Part of $u_{N}^{(2)}\label%
{SubSection:FreePartOfBBGKY}$}

Applying Lemma \ref{lemma:Klainerman-MachedonBoardGameForBBGKY} and relation %
\ref{Relation:IteratngCollapsingEstimate} to the free part of $u_{N}^{(2)}$,
we obtain%
\begin{eqnarray*}
&&\int_{0}^{T}\left\Vert R^{(1)}\tilde{B}_{N,1,2,\tau
_{2}}FreePart^{(k)}(\tau _{2})\right\Vert _{L^{2}}d\tau _{2} \\
&\leqslant &CT^{\frac{1}{2}}\left\Vert R^{(2)}u_{N,0}^{(2)}\right\Vert
_{L^{2}}+\sum_{j=3}^{k}\sum_{m}\int_{0}^{T}\left\Vert R^{(1)}\tilde{B}%
_{N,1,2,\tau _{2}}\int_{D}J_{N}(\underline{\tau }_{j},\mu _{m})(U^{(j)}(\tau
_{j})u_{N,0}^{(j)})d\underline{\tau }_{j}\right\Vert _{L^{2}}d\tau _{2} \\
&\leqslant &CT^{\frac{1}{2}}\left\Vert R^{(2)}u_{N,0}^{(2)}\right\Vert
_{L^{2}}+\sum_{j=3}^{k}\sum_{m}C(CT^{\frac{1}{2}})^{j-2}\int_{0}^{T}\left%
\Vert R^{(j-1)}\tilde{B}_{N,\mu _{m}(j),j,\tau _{j}}U^{(j)}(\tau
_{j})u_{N,0}^{(j)}\right\Vert _{L^{2}}d\tau _{j}. \\
&\leqslant &CT^{\frac{1}{2}}\left\Vert R^{(2)}u_{N,0}^{(2)}\right\Vert
_{L^{2}}+C\sum_{j=3}^{k}4^{j-2}(CT^{\frac{1}{2}})^{j-1}\left\Vert
R^{(j)}u_{N,0}^{(j)}\right\Vert _{L^{2}} \\
&\leqslant &CT^{\frac{1}{2}}\left\Vert R^{(2)}u_{N,0}^{(2)}\right\Vert
_{L^{2}}+C\sum_{j=3}^{\infty }(CT^{\frac{1}{2}})^{j-1}C^{j}\text{ }(\text{%
Condition \ref{condition:EnergyBoundForBBGKY}}) \\
&\leqslant &C<\infty \text{ for }T\text{ small enough.}
\end{eqnarray*}%
Whence, we have shown estimate \ref{estimate:FreePartofBBGKY}.

\subsubsection{Estimate of the Potential Part of $u_{N}^{(2)}$\label%
{SubSection:PotentialPartOfBBGKY}}

We have%
\begin{eqnarray*}
&&\int_{0}^{T}\left\Vert R^{(1)}\tilde{B}_{N,1,2,\tau
_{2}}PotentialPart^{(k)}(\tau _{2})\right\Vert _{L^{2}}d\tau _{2} \\
&\leqslant &\int_{0}^{T}\left\Vert \int_{0}^{\tau _{2}}R^{(1)}\tilde{B}%
_{N,1,2,\tau _{2}}U^{(2)}(\tau _{2}-\tau _{3})\frac{\tilde{V}_{N,\tau
_{3}}^{(2)}u_{N}^{(2)}(\tau _{3})}{g\left( \tau _{3}\right) }d\tau
_{3}\right\Vert _{L^{2}}d\tau _{2} \\
&&+\sum_{j=3}^{k}\sum_{m}\int_{0}^{T}\left\Vert R^{(1)}\tilde{B}_{N,1,2,\tau
_{2}}\int_{D}J_{N}(\underline{\tau }_{j},\mu _{m})(\int_{0}^{\tau
_{j}}U^{(j)}(t_{j}-t_{j+1})\frac{\tilde{V}_{N,\tau
_{j+1}}^{(j)}u_{N}^{(j)}(\tau _{j+1})}{g\left( \tau _{j+1}\right) }d\tau
_{j+1})d\underline{\tau }_{j}\right\Vert _{L^{2}}d\tau _{2},
\end{eqnarray*}%
thus same procedure in Section \ref{SubSection:FreePartOfBBGKY} deduces,%
\begin{eqnarray*}
&\leqslant &CT^{\frac{1}{2}}\int_{0}^{T}\left\Vert R^{(2)}\tilde{V}_{N,\tau
_{3}}^{(2)}u_{N}^{(2)}(\tau _{3})\right\Vert _{L^{2}}d\tau _{3} \\
&&+\sum_{j=3}^{k}\sum_{m}C(CT^{\frac{1}{2}})^{j-2}\int_{0}^{T}\left\Vert
R^{(j-1)}\tilde{B}_{N,\mu _{m}(j),j,\tau _{j}}\int_{0}^{\tau
_{j}}U^{(j)}(\tau _{j}-\tau _{j+1})\frac{\tilde{V}_{N,\tau
_{j+1}}^{(j)}u_{N}^{(j)}(\tau _{j+1})}{g\left( \tau _{j+1}\right) }d\tau
_{j+1}\right\Vert _{L^{2}}d\tau _{j} \\
&\leqslant &CT^{\frac{1}{2}}\int_{0}^{T}\left\Vert R^{(2)}\tilde{V}_{N,\tau
_{3}}^{(2)}u_{N}^{(2)}(\tau _{3})\right\Vert _{L^{2}}d\tau
_{3}+C\sum_{j=3}^{k}4^{j-2}(CT^{\frac{1}{2}})^{j-1}\left(
\int_{0}^{T}\left\Vert R^{(j)}\tilde{V}_{N,\tau
_{j+1}}^{(j)}u_{N}^{(j)}(\tau _{j+1})\right\Vert _{L^{2}}d\tau _{j+1}\right)
.
\end{eqnarray*}%
Assume for the moment that we have the estimate%
\begin{equation*}
\int_{0}^{T}\left\Vert R^{(k)}\tilde{V}_{N,\tau }^{(k)}u_{N}^{(k)}(\tau
)\right\Vert _{L^{2}}d\tau \leqslant C_{0}C^{k}T
\end{equation*}%
where $C$ and $C_{0}$ are independent of $T$, $k$ and $N$, then%
\begin{equation*}
\int_{0}^{T}\left\Vert R^{(1)}\tilde{B}_{N,1,2,\tau
_{2}}PotentialPart^{(k)}(\tau _{2})\right\Vert _{L^{2}}d\tau _{2}\leqslant
C<\infty .
\end{equation*}%
for a sufficiently small $T$ and a $C$ independent of $k$ and $N$. As a
result, we complete the proof of estimate \ref{estimate:PotentialPartofBBGKY}
with the following proposition.

\begin{proposition}
\label{proposition:BoundingThePotentialTerms}Assume $\beta \in \left( 0,%
\frac{2}{7}\right] $ and $V\ $is a nonnegative $L^{1}(\mathbb{R}^{3})\cap
H^{2}(\mathbb{R}^{3})$ function. Then, given $T\in \left[ 0,\frac{\tan
\omega T_{0}}{\omega }\right] $, there are $C$ and $C_{0}$ independent of $T$%
, $k$ and $N$ such that 
\begin{equation*}
\int_{0}^{T}\left\Vert R^{(k)}\tilde{V}_{N,\tau }^{(k)}u_{N}^{(k)}(\tau
)\right\Vert _{L^{2}}d\tau \leqslant C_{0}C^{k}T.
\end{equation*}
\end{proposition}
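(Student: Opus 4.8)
The plan is to prove the estimate pointwise in $\tau$ and then integrate trivially, i.e.\ to establish $\sup_{\tau\in[0,T]}\|R^{(k)}\tilde V_{N,\tau}^{(k)}u_N^{(k)}\|_{L^2}\le C_0C^k$, so that the factor $T$ on the right is just the length of the interval and the dependence of $u_N^{(k)}$ on $\tau$ enters only through its a priori energy bound. Recall $\tilde V_{N,\tau}^{(k)}u_N^{(k)}=\frac1N\sum_{1\le i<j\le k}\bigl(\tilde V_{N,\tau}(y_i-y_j)-\tilde V_{N,\tau}(y_i'-y_j')\bigr)u_N^{(k)}$, a sum of $2\binom k2\sim k^2$ terms; by symmetry it suffices to treat one, say $\tilde V_{N,\tau}(y_1-y_2)u_N^{(k)}$. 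In $R^{(k)}=\prod_{j=1}^{k}|\nabla_{y_j}||\nabla_{y_j'}|$ every factor except $|\nabla_{y_1}|$ and $|\nabla_{y_2}|$ commutes with multiplication by $\tilde V_{N,\tau}(y_1-y_2)$, hence
\[
\|R^{(k)}\tilde V_{N,\tau}(y_1-y_2)u_N^{(k)}\|_{L^2}=\bigl\||\nabla_{y_1}||\nabla_{y_2}|\bigl(\tilde V_{N,\tau}(y_1-y_2)\,w\bigr)\bigr\|_{L^2},\qquad w:=\Bigl(\prod_{j=3}^{k}|\nabla_{y_j}|\Bigr)\Bigl(\prod_{j=1}^{k}|\nabla_{y_j'}|\Bigr)u_N^{(k)},
\]
where $w$ carries one derivative in each of the remaining $2k-2$ slots, so $\|\langle\nabla_{y_1}\rangle\langle\nabla_{y_2}\rangle w\|_{L^2}\le\bigl\|\prod_{j=1}^{k}\langle\nabla_{y_j}\rangle\langle\nabla_{y_j'}\rangle\,u_N^{(k)}\bigr\|_{L^2}\le C^k$ by the $H^1$ form of condition \ref{condition:EnergyBoundForBBGKY} recorded at the start of this section. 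Everything thus reduces to a fixed-time trilinear estimate.

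The core claim is that, uniformly for $\tau$ in a finite interval (using only $0<c\le g(\tau)\le C$),
\[
\bigl\||\nabla_{y_1}||\nabla_{y_2}|\bigl(\tilde V_{N,\tau}(y_1-y_2)f\bigr)\bigr\|_{L^2}\le C\,N^{7\beta/2}\,\bigl\|\langle\nabla_{y_1}\rangle\langle\nabla_{y_2}\rangle f\bigr\|_{L^2}.
\]
Writing $|\nabla_{y_1}||\nabla_{y_2}|$ componentwise and applying Leibniz splits the left side into four pieces according to how many of the two derivatives land on $\tilde V_{N,\tau}(y_1-y_2)$. The pieces with at most one derivative on the potential are handled by H\"older in $y_1$ (splitting $L^2=L^3\cdot L^6$) together with the one-variable Sobolev embedding $\dot H^1_{y_1}\hookrightarrow L^6_{y_1}$ and the scalings $\|\tilde V_{N,\tau}\|_{L^\infty}\sim N^{3\beta}$, $\|\nabla\tilde V_{N,\tau}\|_{L^3}\sim N^{3\beta}$ (here $V\in H^2(\mathbb R^3)\hookrightarrow L^\infty$ and $\nabla V\in L^2\cap L^6\subset L^3$); these cost only $N^{3\beta}\le N^{7\beta/2}$. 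The binding piece is $\|(\nabla^2\tilde V_{N,\tau})(y_1-y_2)f\|_{L^2}$, for which the naive bound $\|\nabla^2\tilde V_{N,\tau}\|_{L^3}\,\|f\|_{L^6_{y_1}}\sim N^{4\beta}(\cdots)$ is exactly what confines Chen--Pavlovi\'c \cite{TChenAndNPSpace-Time} to $\beta<1/4$. To reach $N^{7\beta/2}$ I would instead exploit that $\tilde V_{N,\tau}$, hence $\nabla^2\tilde V_{N,\tau}$, is concentrated at scale $N^{-\beta}$ about $y_1=y_2$: write $f(y_1,\mathbf{\cdot})=f(y_2,y_2,\mathbf{\cdot})+(y_1-y_2)\cdot\int_0^1(\nabla_{y_1}f)(y_2+s(y_1-y_2),\mathbf{\cdot})\,ds$. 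For the leading term the $y_1$-integration decouples, giving $\|\nabla^2\tilde V_{N,\tau}\|_{L^2}\,\|f|_{\mathrm{diag}}\|_{L^2}$, and $\|\nabla^2\tilde V_{N,\tau}\|_{L^2}\sim N^{7\beta/2}$ -- this is the one point where $V\in H^2$, i.e.\ $\nabla^2V\in L^2$, is essential -- while the diagonal restriction obeys the Sobolev trace bound $\|f(y_2,y_2,\mathbf{\cdot})\|_{L^2}\lesssim\|\langle\nabla_{y_1}\rangle\langle\nabla_{y_2}\rangle f\|_{L^2}$ (convolving two copies of $\langle\cdot\rangle^{-2}$ in $\mathbb R^3$). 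The Taylor remainder carries an extra factor $|y_1-y_2|\sim N^{-\beta}$ against the potential, hence a strictly smaller power of $N$, and is absorbed by a variant of the H\"older/Sobolev scheme used for the easy pieces.

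Reassembling: summing over the $\sim k^2$ pairs and the two sign choices, reinstating the prefactor $\tfrac1N$, taking $\sup_\tau$, and integrating over $[0,T]$ yields $\int_0^T\|R^{(k)}\tilde V_{N,\tau}^{(k)}u_N^{(k)}\|_{L^2}\,d\tau\le C\,\dfrac{k^2}{N}\,N^{7\beta/2}\,C^k\,T$; since $\beta\le\tfrac27$ we have $N^{7\beta/2-1}\le1$, so the polynomial $k^2$ is absorbed into an enlarged geometric constant and the bound $C_0C^kT$ follows. The main obstacle is precisely the sharp trilinear estimate above -- concretely, pushing the exponent in the $\|(\nabla^2\tilde V_{N,\tau})(y_1-y_2)f\|_{L^2}$ term down from $4\beta$ to $7\beta/2$, which is what upgrades the admissible range from $\beta<1/4$ to $\beta\le2/7$ and which forces genuine use of the $H^2$-regularity of $V$ and of the near-diagonal (trace) structure rather than bare $L^p$ bounds on the rescaled potential.
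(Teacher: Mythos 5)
Your overall skeleton matches the paper's proof: estimate pointwise in $\tau$ and integrate to get the factor $T$, reduce to a single pair $(y_1,y_2)$, Leibniz-split according to how many derivatives fall on the rescaled potential, control the pieces with at most one derivative on $\tilde V_{N,\tau}$ by H\"older/Sobolev at cost $N^{3\beta-1}$, and recognize that the decisive piece is the one carrying $\nabla^2\tilde V_{N,\tau}$, whose $L^2$ norm scales like $N^{7\beta/2-1}$ --- bounded exactly when $\beta\leqslant 2/7$. Your "leading term'' treatment (decoupling the $y_1$-integration and invoking the diagonal trace bound via the convolution of two copies of $\langle\cdot\rangle^{-2}$ in $\mathbb{R}^3$) is in substance the paper's Lemma \ref{Lemma:ESYIntergral} (Lemma A.3 of \cite{E-S-Y2}), which states directly that $\int\left\vert W(y_1-y_2)\right\vert^2\left\vert f\right\vert^2\leqslant C\left\Vert W\right\Vert_{L^2}^2\left\Vert \left(1-\triangle_{y_1}\right)^{1/2}\left(1-\triangle_{y_2}\right)^{1/2}f\right\Vert_{L^2}^2$.

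The genuine gap is your Taylor expansion about the diagonal, which is both unnecessary and not closable as described. First, the claim that the remainder "carries an extra factor $\left\vert y_1-y_2\right\vert\sim N^{-\beta}$ against the potential'' implicitly uses a weighted bound such as $\left\Vert \left\vert y\right\vert \nabla^2V\right\Vert_{L^2}<\infty$ (or compact support of $V$); this is not implied by the hypothesis $V\in L^1\cap H^2$, so the asserted gain of a power of $N$ is unjustified. Second, even granting such decay, the remainder integrand is $(\nabla_{y_1}f)$ composed with the shifted diagonal, and "the H\"older/Sobolev scheme used for the easy pieces'' does not apply to it: the $L^6_{y_1}$ Sobolev step (or, alternatively, a trace onto the shifted diagonal, which needs total extra regularity strictly greater than $3/2$) would require a second derivative in $y_1$ on $f$, whereas condition \ref{condition:EnergyBoundForBBGKY} supplies only one derivative per variable. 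The fix is the paper's route: skip the expansion entirely and apply Lemma \ref{Lemma:ESYIntergral} to the full worst piece with $W=N^{-1}\nabla^2\tilde V_{N,\tau}$, so that the cost is $\left\Vert N^{-1}\nabla^2\tilde V_{N,\tau}\right\Vert_{L^2}\sim N^{7\beta/2-1}\left\Vert V''\right\Vert_{L^2}$ while the remaining factor only needs the mixed $\left(1-\triangle_{y_1}\right)^{1/2}\left(1-\triangle_{y_2}\right)^{1/2}$ derivatives that the energy bound provides; this alone yields the range $\beta\in\left(0,2/7\right]$, with no weighted norms of $V$ and no trace of $\nabla_{y_1}f$ needed. (Your attribution of the $\beta<1/4$ threshold of \cite{TChenAndNPSpace-Time} to the $L^3$--$L^6$ H\"older bound is a side remark and does not affect the argument, but note it is the $L^2$-pairing with the two-particle Sobolev estimate, not a diagonal Taylor expansion, that upgrades the exponent here.)
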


\begin{proof}
The proof is elementary and we relegate it to Section \ref{SubSection:Proof
of the potential terms bound}.
\end{proof}

\begin{remark}
This proposition is exactly the reason we restrict $\beta \in \left( 0,\frac{%
2}{7}\right] $ in this paper.
\end{remark}

\subsubsection{Estimate of the Interaction Part of $u_{N}^{(2)}$\label%
{SubSection:InteractionPartOfBBGKY}}

We proceed like Sections \ref{SubSection:FreePartOfBBGKY} and \ref%
{SubSection:PotentialPartOfBBGKY}.%
\begin{eqnarray*}
&&\int_{0}^{T}\left\Vert R^{(1)}\tilde{B}_{N,1,2,\tau
_{2}}InteractionPart^{(k)}(\tau _{2})\right\Vert _{L^{2}}d\tau _{2} \\
&\leqslant &\sum_{m}\int_{0}^{T}\left\Vert R^{(1)}\tilde{B}_{N,1,2,\tau
_{2}}\int_{D}J_{N}(\underline{\tau }_{k+1},\mu _{m})(u_{N}^{(k+1)}(\tau
_{k+1}))d\underline{\tau }_{k+1}\right\Vert _{L^{2}}d\tau _{2} \\
&\leqslant &\sum_{m}C(CT^{\frac{1}{2}})^{k-1}\int_{0}^{T}\left\Vert R^{(k)}%
\tilde{B}_{N,\mu _{m}(k+1),k+1,\tau _{k+1}}u_{N}^{(k+1)}(\tau
_{k+1})\right\Vert _{L^{2}}d\tau _{k+1}.
\end{eqnarray*}%
Then the next step is to investigate%
\begin{equation*}
\int_{0}^{T}\left\Vert R^{(k)}\tilde{B}_{N,\mu _{m}(k+1),k+1,\tau
_{k+1}}u_{N}^{(k+1)}(\tau _{k+1})\right\Vert _{L^{2}}d\tau _{k+1}.
\end{equation*}%
Without loss of generality, set $\mu _{m}(k+1)=1$ and look at $\tilde{B}%
_{N,1,k+1,\tau _{k+1}}^{1}$, we have%
\begin{eqnarray*}
&&\int \left\vert R^{(k)}\tilde{B}_{N,1,k+1,\tau
_{k+1}}^{1}u_{N}^{(k+1)}(\tau _{k+1})\right\vert ^{2}d\mathbf{y}_{k}d\mathbf{%
y}_{k}^{\prime } \\
&=&\int \left\vert R^{(k)}\int \tilde{V}_{N,\tau
_{k+1}}(y_{1}-y_{k+1})u_{N}^{(k+1)}(\tau _{k+1},\mathbf{y}_{k},y_{k+1};%
\mathbf{y}_{k}^{\prime },y_{k+1})dy_{k+1}\right\vert ^{2}d\mathbf{y}_{k}d%
\mathbf{y}_{k}^{\prime } \\
&\leqslant &C\int \left\vert \int \left( \tilde{V}_{N,\tau _{k+1}}\right)
^{\prime }(y_{1}-y_{k+1})\left( \dprod_{j=2}^{k}\left\vert \nabla
_{y_{j}}\right\vert \right) \left( \dprod_{j=1}^{k}\left\vert \nabla
_{y_{j}^{\prime }}\right\vert \right) u_{N}^{(k+1)}(\tau _{k+1},\mathbf{y}%
_{k},y_{k+1};\mathbf{y}_{k}^{\prime },y_{k+1})dy_{k+1}\right\vert ^{2}d%
\mathbf{y}_{k}d\mathbf{y}_{k}^{\prime } \\
&&+C\int \left\vert \int \tilde{V}_{N,\tau
_{k+1}}(y_{1}-y_{k+1})R^{(k)}u_{N}^{(k+1)}(\tau _{k+1},\mathbf{y}%
_{k},y_{k+1};\mathbf{y}_{k}^{\prime },y_{k+1})dy_{k+1}\right\vert ^{2}d%
\mathbf{y}_{k}d\mathbf{y}_{k}^{\prime } \\
&=&C(I+II).
\end{eqnarray*}%
Noticing that $\left\Vert \tilde{V}_{\tau }\right\Vert _{H^{2}}\leqslant
C\left\Vert V\right\Vert _{H^{2}}$ uniformly for $\tau \in \left[ 0,\frac{%
\tan \omega T_{0}}{\omega }\right] $, we can then estimate%
\begin{eqnarray*}
&&I \\
&=&\int \left\vert \int \left( \tilde{V}_{N,\tau _{k+1}}\right) ^{\prime
}(y_{1}-y_{k+1})\left( \dprod_{j=2}^{k}\left\vert \nabla _{y_{j}}\right\vert
\right) \left( \dprod_{j=1}^{k}\left\vert \nabla _{y_{j}^{\prime
}}\right\vert \right) u_{N}^{(k+1)}(\tau _{k+1},\mathbf{y}_{k},y_{k+1};%
\mathbf{y}_{k}^{\prime },y_{k+1})dy_{k+1}\right\vert ^{2}d\mathbf{y}_{k}d%
\mathbf{y}_{k}^{\prime } \\
&\leqslant &\int d\mathbf{y}_{k}d\mathbf{y}_{k}^{\prime }\left( \int
\left\vert \left( \tilde{V}_{N,\tau _{k+1}}\right) ^{\prime
}(y_{1}-y_{k+1})\right\vert ^{2}dy_{k+1}\right) \\
&&\times \left( \int \left\vert \left( \dprod_{j=2}^{k}\left\vert \nabla
_{y_{j}}\right\vert \right) \left( \dprod_{j=1}^{k}\left\vert \nabla
_{y_{j}^{\prime }}\right\vert \right) u_{N}^{(k+1)}(\tau _{k+1},\mathbf{y}%
_{k},y_{k+1};\mathbf{y}_{k}^{\prime },y_{k+1})\right\vert
^{2}dy_{k+1}\right) \text{ }(\text{Cauchy-Schwarz}) \\
&\leqslant &CN^{5\beta }\left\Vert V^{\prime }\right\Vert _{L^{2}}^{2}\int
\left( \int \left\vert \left( \dprod_{j=2}^{k}\left\vert \nabla
_{y_{j}}\right\vert \right) \left( \dprod_{j=1}^{k}\left\vert \nabla
_{y_{j}^{\prime }}\right\vert \right) u_{N}^{(k+1)}(\tau _{k+1},\mathbf{y}%
_{k},y_{k+1};\mathbf{y}_{k}^{\prime },y_{k+1})\right\vert
^{2}dy_{k+1}\right) d\mathbf{y}_{k}d\mathbf{y}_{k}^{\prime } \\
&\leqslant &CN^{5\beta }\left\Vert V^{\prime }\right\Vert _{L^{2}}^{2}\int d%
\mathbf{y}_{k}d\mathbf{y}_{k}^{\prime } \\
&&\times \left( \int \left\vert \left( 1-\triangle _{y_{k+1}}\right) ^{\frac{%
1}{2}}\left( 1-\triangle _{y_{k+1}^{\prime }}\right) ^{\frac{1}{2}}\left(
\dprod_{j=2}^{k}\left\vert \nabla _{y_{j}}\right\vert \right) \left(
\dprod_{j=1}^{k}\left\vert \nabla _{y_{j}^{\prime }}\right\vert \right)
u_{N}^{(k+1)}(\tau _{k+1},\mathbf{y}_{k},y_{k+1};\mathbf{y}_{k}^{\prime
},y_{k+1}^{\prime })\right\vert ^{2}dy_{k+1}dy_{k+1}^{\prime }\right) \\
&&(\text{Trace Theorem}) \\
&\leqslant &CN^{5\beta }\left\Vert V^{\prime }\right\Vert _{L^{2}}^{2}C^{k+1}%
\text{ }(\text{Condition \ref{condition:EnergyBoundForBBGKY}})
\end{eqnarray*}%
and%
\begin{eqnarray*}
II &=&\int \left\vert \int \tilde{V}_{N,\tau
_{k+1}}(y_{1}-y_{k+1})R^{(k)}u_{N}^{(k+1)}(\tau _{k+1},\mathbf{y}%
_{k},y_{k+1};\mathbf{y}_{k}^{\prime },y_{k+1})dy_{k+1}\right\vert ^{2}d%
\mathbf{y}_{k}d\mathbf{y}_{k}^{\prime } \\
&\leqslant &CN^{3\beta }\left\Vert V\right\Vert _{L^{2}}^{2}C^{k+1}\text{ }(%
\text{Same method as }I).
\end{eqnarray*}%
Accordingly,%
\begin{equation*}
\int \left\vert R^{(k)}\tilde{B}_{N,1,k+1,\tau _{k+1}}^{1}u_{N}^{(k+1)}(\tau
_{k+1})\right\vert ^{2}d\mathbf{y}_{k}d\mathbf{y}_{k}^{\prime }\leqslant
CN^{5\beta }\left\Vert V\right\Vert _{H^{2}}^{2}C^{k+1}.
\end{equation*}

\begin{remark}
The estimates of $I$ and $II$ may not be optimal. But they are good enough
for proving estimate \ref{estimate:InteractionPartofBBGKY} for arbitrary $%
\beta >0$.
\end{remark}

Thence%
\begin{eqnarray*}
&&\int_{0}^{T}\left\Vert R^{(1)}\tilde{B}_{N,1,2.\tau
_{2}}InteractionPart^{(k)}(\tau _{2})\right\Vert _{L^{2}}d\tau _{2} \\
&\leqslant &\sum_{m}C(CT^{\frac{1}{2}})^{k-1}\int_{0}^{T}\left\Vert R^{(k)}%
\tilde{B}_{N,\mu _{m}(k+1),k+1,\tau _{k+1}}u_{N}^{(k+1)}(\tau
_{k+1})\right\Vert _{L^{2}}d\tau _{k+1}. \\
&\leqslant &4^{k-1}C(CT^{\frac{1}{2}})^{k-1}T\left( CN^{\frac{5\beta }{2}%
}\left\Vert V\right\Vert _{H^{2}}C^{k+1}\right) \\
&\leqslant &C\left\Vert V\right\Vert _{H^{2}}(T^{\frac{1}{2}})^{k+2}N^{\frac{%
5\beta }{2}}C^{k}.
\end{eqnarray*}%
Take the coupling level $k=\ln N$, we have%
\begin{equation*}
\int_{0}^{T}\left\Vert R^{(1)}\tilde{B}_{N,1,2.\tau
_{2}}InteractionPart^{(k)}(\tau _{2})\right\Vert _{L^{2}}d\tau _{2}\leqslant
C\left\Vert V\right\Vert _{H^{2}}(T^{\frac{1}{2}})^{2+\ln N}N^{\frac{5\beta 
}{2}}N^{c}.
\end{equation*}%
Selecting $T$ such that 
\begin{equation*}
T\leqslant e^{-(5\beta +2C)}
\end{equation*}%
ensures that%
\begin{equation*}
(T^{\frac{1}{2}})^{\ln N}N^{\frac{5\beta }{2}}N^{c}\leqslant 1
\end{equation*}%
and thence%
\begin{equation*}
\int_{0}^{T}\left\Vert R^{(1)}\tilde{B}_{N,1,2.\tau
_{2}}InteractionPart^{(k)}(\tau _{2})\right\Vert _{L^{2}}d\tau _{2}\leqslant
C
\end{equation*}%
where $C$ is independent of $N.$

We remind the reader that, at this point, we have obtained estimates \ref%
{estimate:FreePartofBBGKY}, \ref{estimate:PotentialPartofBBGKY}, and \ref%
{estimate:InteractionPartofBBGKY} for a sufficiently small $T$ determined by
the controlling constant in condition \ref{condition:EnergyBoundForBBGKY}
and independent of $N.$ Thus one can repeat the argument to acquire the
estimates for any finite time $T\in \left[ 0,\frac{\tan \omega T_{0}}{\omega 
}\right] $ through bootstrapping and condition \ref%
{condition:EnergyBoundForBBGKY}. Whence, we have earned estimate \ref%
{estimate:MainEstimateOfSpace-TimeBoundSection} and established Theorem \ref%
{Theorem:Space-Time Bound for GlensBBGKY}. The rest of this section is the
proof of Proposition \ref{proposition:BoundingThePotentialTerms}.

\subsubsection{Proof of Proposition \protect\ref%
{proposition:BoundingThePotentialTerms} \label{SubSection:Proof of the
potential terms bound}}

We will utilize the lemma.

\begin{lemma}
\label{Lemma:ESYIntergral}\cite{E-S-Y2}%
\begin{equation*}
\int V(x_{1}-x_{2})\left\vert f(x_{1},x_{2})\right\vert
^{2}dx_{1}dx_{2}\leqslant C\left\Vert V\right\Vert _{L^{1}}\int \left\vert
\left( 1-\triangle _{x_{1}}\right) ^{\frac{1}{2}}\left( 1-\triangle
_{x_{2}}\right) ^{\frac{1}{2}}f(x_{1},x_{2})\right\vert ^{2}dx_{1}dx_{2}.
\end{equation*}%
In particular,%
\begin{equation*}
\int \left\vert V(x_{1}-x_{2})\right\vert ^{2}\left\vert
f(x_{1},x_{2})\right\vert ^{2}dx_{1}dx_{2}\leqslant C\left\Vert V\right\Vert
_{L^{2}}^{2}\int \left\vert \left( 1-\triangle _{x_{1}}\right) ^{\frac{1}{2}%
}\left( 1-\triangle _{x_{2}}\right) ^{\frac{1}{2}}f(x_{1},x_{2})\right\vert
^{2}dx_{1}dx_{2}.
\end{equation*}
\end{lemma}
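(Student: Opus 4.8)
The plan is to establish the first inequality; the second follows at once by applying it with the nonnegative $L^{1}$ function $\left\vert V\right\vert ^{2}$ in place of $V$, since $\left\Vert \left\vert V\right\vert ^{2}\right\Vert _{L^{1}}=\left\Vert V\right\Vert _{L^{2}}^{2}$. (This is the Erd\"{o}s-Schlein-Yau potential estimate from \cite{E-S-Y2}; we only sketch the argument.) The mechanism is to trade the difference structure $x_{1}-x_{2}$ against a Plancherel identity in a complementary variable, so that no Sobolev trace theorem is ever invoked.

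First I would change variables $\left( x_{1},x_{2}\right) \mapsto \left( u,v\right) =\left( x_{1}-x_{2},x_{2}\right) $, which has unit Jacobian, so that
\begin{equation*}
\int V(x_{1}-x_{2})\left\vert f(x_{1},x_{2})\right\vert ^{2}dx_{1}dx_{2}=\int V(u)F(u)\,du,\qquad F(u):=\int \left\vert f(u+v,v)\right\vert ^{2}dv.
\end{equation*}
The left-hand side is then at most $\left\Vert V\right\Vert _{L^{1}}\sup_{u}F(u)$, so it suffices to prove the uniform bound
\begin{equation*}
\sup_{u}F(u)\leqslant C\left\Vert \left( 1-\triangle _{x_{1}}\right) ^{1/2}\left( 1-\triangle _{x_{2}}\right) ^{1/2}f\right\Vert _{L^{2}}^{2}
\end{equation*}
with $C$ an absolute constant.

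To obtain this, I would take the Fourier transform of $v\mapsto f(u+v,v)$ and use Plancherel in $v$: writing $\widehat{f}(\xi _{1},\xi _{2})$ for the full Fourier transform of $f$ and substituting $\eta =\xi _{1}+\xi _{2}$,
\begin{equation*}
F(u)=c\int \left\vert \int \widehat{f}(\xi _{1},\eta -\xi _{1})\,e^{i\xi _{1}\cdot u}\,d\xi _{1}\right\vert ^{2}d\eta .
\end{equation*}
For fixed $\eta$ the inner integral is at most $\int \left\vert \widehat{f}(\xi _{1},\eta -\xi _{1})\right\vert d\xi _{1}$; splitting the integrand by the weight $\left\langle \xi _{1}\right\rangle \left\langle \eta -\xi _{1}\right\rangle $ (where $\left\langle \xi \right\rangle =(1+\left\vert \xi \right\vert ^{2})^{1/2}$) and applying Cauchy-Schwarz produces the factor $\int \left\langle \xi _{1}\right\rangle ^{-2}\left\langle \eta -\xi _{1}\right\rangle ^{-2}d\xi _{1}$, which is bounded uniformly in $\eta$ because in $\mathbb{R}^{3}$ the convolution $\left\langle \cdot \right\rangle ^{-2}\ast \left\langle \cdot \right\rangle ^{-2}$ is a bounded function. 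Integrating in $\eta$ and undoing the substitution $\xi _{2}=\eta -\xi _{1}$, the surviving term is $\int \int \left\langle \xi _{1}\right\rangle ^{2}\left\langle \xi _{2}\right\rangle ^{2}\left\vert \widehat{f}(\xi _{1},\xi _{2})\right\vert ^{2}d\xi _{1}d\xi _{2}$, which does not depend on $u$ and, by Plancherel, equals $\left\Vert \left( 1-\triangle _{x_{1}}\right) ^{1/2}\left( 1-\triangle _{x_{2}}\right) ^{1/2}f\right\Vert _{L^{2}}^{2}$. This gives the required bound on $\sup_{u}F(u)$, and hence the lemma.

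The step to watch is precisely the estimate of $\sup_{u}F(u)$. A naive physical-space attempt would try to bound $F(u)$ by restricting a function on $\mathbb{R}^{3}\times \mathbb{R}^{3}$ to a fixed value of $x_{1}-x_{2}$ --- a codimension-$3$ restriction, which would demand more than one derivative per variable and therefore fails in the mixed $H^{1}_{x_{1}}H^{1}_{x_{2}}$ regularity available here. Moving to Fourier variables first removes the difficulty, since there the restriction degenerates to an integration in $\xi _{1}$ along the affine line $\xi _{1}+\xi _{2}=\eta$, which is tamed by the elementary $\mathbb{R}^{3}$ convolution bound for $\left\langle \cdot \right\rangle ^{-2}$.
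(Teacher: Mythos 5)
Your proof is correct. Note that the paper does not actually prove this lemma --- its ``proof'' is the single line ``This is Lemma A.3 in \cite{E-S-Y2}'' --- so what you have written is a self-contained substitute for the omitted argument, and it is essentially the standard proof of that cited estimate: reduce to $\left\Vert V\right\Vert _{L^{1}}\sup_{u}\int \left\vert f(u+v,v)\right\vert ^{2}dv$ by the change of variables $u=x_{1}-x_{2}$, $v=x_{2}$ (this also covers signed $V$, since the nonnegativity of $F(u)$ lets you replace $V$ by $\left\vert V\right\vert$), then control the ``diagonal restriction'' in Fourier variables via Plancherel in $v$ and a Cauchy--Schwarz with the weight $\left\langle \xi _{1}\right\rangle \left\langle \eta -\xi _{1}\right\rangle $, using that $\sup_{\eta }\int_{\mathbb{R}^{3}}\left\langle \xi _{1}\right\rangle ^{-2}\left\langle \eta -\xi _{1}\right\rangle ^{-2}d\xi _{1}<\infty $ because $\left\langle \cdot \right\rangle ^{-2}\in L^{2}(\mathbb{R}^{3})$. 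All steps check out, the resulting bound $\sup_{u}F(u)\leqslant C\left\Vert (1-\triangle _{x_{1}})^{1/2}(1-\triangle _{x_{2}})^{1/2}f\right\Vert _{L^{2}}^{2}$ is exactly what is needed, and the deduction of the second inequality by substituting $\left\vert V\right\vert ^{2}$ for $V$ with $\left\Vert \left\vert V\right\vert ^{2}\right\Vert _{L^{1}}=\left\Vert V\right\Vert _{L^{2}}^{2}$ is the same reduction used in \cite{E-S-Y2}. Your closing remark about why a physical-space restriction argument would fail at this regularity is also accurate and explains why the Fourier route is the right one.
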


\begin{proof}
This is Lemma A.3 in \cite{E-S-Y2}.
\end{proof}

Without loss of generality, we show Proposition \ref%
{proposition:BoundingThePotentialTerms} for $k=2$ which corresponds to%
\begin{equation}
\int_{0}^{T}\left\Vert \left( \left\vert \nabla _{y_{1}}\right\vert
\left\vert \nabla _{y_{2}}\right\vert \left\vert \nabla _{y_{1}^{\prime
}}\right\vert \left\vert \nabla _{y_{2}^{^{\prime }}}\right\vert \right)
\left( \frac{\tilde{V}_{N,\tau }(y_{1}-y_{2})}{N}u_{N}^{(2)}(\tau ,\mathbf{y}%
_{2};\mathbf{y}_{2}^{\prime })\right) \right\Vert _{L^{2}}d\tau \leqslant
C_{0}C^{2}T  \label{estimate:one term in potential part}
\end{equation}%
and%
\begin{equation*}
\int_{0}^{T}\left\Vert \left( \left\vert \nabla _{y_{1}}\right\vert
\left\vert \nabla _{y_{2}}\right\vert \left\vert \nabla _{y_{1}^{\prime
}}\right\vert \left\vert \nabla _{y_{2}^{^{\prime }}}\right\vert \right)
\left( \frac{\tilde{V}_{N,\tau }(y_{1}^{\prime }-y_{2}^{\prime })}{N}%
u_{N}^{(2)}(\tau ,\mathbf{y}_{2};\mathbf{y}_{2}^{\prime })\right)
\right\Vert _{L^{2}}d\tau \leqslant C_{0}C^{2}T.
\end{equation*}%
By similarity we only prove estimate \ref{estimate:one term in potential
part}. For a general $k,$ there are $2k^{2}$ terms in $\tilde{V}_{N,\tau
}^{(k)},$ whence $R_{N}^{(k)}\tilde{V}_{N,\tau }^{(k)}\gamma _{N}^{(k)}$ has
about $8k^{2}$ terms by Leibniz's rule. Since $8k^{2}$ can be absorbed into $%
C^{k},$ the method here applies.

First of all, $\beta \in \left( 0,\frac{2}{7}\right] $ implies the following
properties of $\tilde{V}_{N,\tau }/N:$%
\begin{eqnarray*}
\left\Vert \tilde{V}_{N,\tau }/N\right\Vert _{L^{\infty }} &=&N^{3\beta
-1}\left\Vert \tilde{V}_{\tau }\right\Vert _{L^{\infty }}<N^{-\frac{1}{7}%
}\left\Vert \tilde{V}_{\tau }\right\Vert _{L^{\infty }}, \\
\left\Vert \left( \tilde{V}_{N,\tau }\right) ^{\prime }/N\right\Vert
_{L^{p}} &=&N^{4\beta -1-\frac{3\beta }{p}}\left\Vert \left( \tilde{V}_{\tau
}\right) ^{\prime }\right\Vert _{L^{p}}=N^{-\left( \frac{6}{7p}-\frac{1}{7}%
+\left( 4-\frac{3}{p}\right) \varepsilon _{0}\right) }\left\Vert \left( 
\tilde{V}_{\tau }\right) ^{\prime }\right\Vert _{L^{p}},\text{ decays up to }%
p=6, \\
\left\Vert \left( \tilde{V}_{N,\tau }\right) ^{\prime \prime }/N\right\Vert
_{L^{2}} &\leqslant &N^{-\frac{7}{2}\varepsilon _{0}}\left\Vert \left( 
\tilde{V}_{\tau }\right) ^{\prime \prime }\right\Vert _{L^{2}}.
\end{eqnarray*}%
where $\varepsilon _{0}$ is $\left( \frac{2}{7}-\beta \right) \geqslant 0$.
On the one hand, $\left\Vert \tilde{V}_{\tau }\right\Vert _{H^{2}}\leqslant
C\left\Vert V\right\Vert _{H^{2}}$ uniformly for $\tau \in \left[ 0,\frac{%
\tan \omega T_{0}}{\omega }\right] $. On the other hand we assume $V\ $is a
nonnegative $L^{1}(\mathbb{R}^{3})\cap H^{2}(\mathbb{R}^{3})$ function. Thus
we know $\tilde{V}_{\tau }\in L^{\infty }$ and $\left( \tilde{V}_{\tau
}\right) ^{\prime }\in L^{6}.$

We compute%
\begin{eqnarray*}
&&\left( \left\vert \nabla _{y_{1}}\right\vert \left\vert \nabla
_{y_{2}}\right\vert \left\vert \nabla _{y_{1}^{\prime }}\right\vert
\left\vert \nabla _{y_{2}^{^{\prime }}}\right\vert \right) \left( N^{-1}%
\tilde{V}_{N,\tau }(y_{1}-y_{2})u_{N}^{(2)}(\tau ,\mathbf{y}_{2};\mathbf{y}%
_{2}^{\prime })\right) \\
&=&\left( \left\vert \nabla _{y_{1}}\right\vert \left\vert \nabla
_{y_{2}}\right\vert N^{-1}\tilde{V}_{N,\tau }(y_{1}-y_{2})\right) \left(
\left\vert \nabla _{y_{1}^{\prime }}\right\vert \left\vert \nabla
_{y_{2}^{^{\prime }}}\right\vert u_{N}^{(2)}(\tau ,\mathbf{y}_{2};\mathbf{y}%
_{2}^{\prime })\right) \\
&&+\left( \left\vert \nabla _{y_{1}}\right\vert N^{-1}\tilde{V}_{N,\tau
}(y_{1}-y_{2})\right) \left( \left\vert \nabla _{y_{2}}\right\vert
\left\vert \nabla _{y_{1}^{\prime }}\right\vert \left\vert \nabla
_{y_{2}^{^{\prime }}}\right\vert u_{N}^{(2)}(\tau ,\mathbf{y}_{2};\mathbf{y}%
_{2}^{\prime })\right) \\
&&+\left( \left\vert \nabla _{y_{2}}\right\vert N^{-1}\tilde{V}_{N,\tau
}(y_{1}-y_{2})\right) \left( \left\vert \nabla _{y_{1}}\right\vert
\left\vert \nabla _{y_{1}^{\prime }}\right\vert \left\vert \nabla
_{y_{2}^{^{\prime }}}\right\vert u_{N}^{(2)}(\tau ,\mathbf{y}_{2};\mathbf{y}%
_{2}^{\prime })\right) \\
&&+N^{-1}\tilde{V}_{N,\tau }(y_{1}-y_{2})\left( \left\vert \nabla
_{y_{1}}\right\vert \left\vert \nabla _{y_{2}}\right\vert \left\vert \nabla
_{y_{1}^{\prime }}\right\vert \left\vert \nabla _{y_{2}^{^{\prime
}}}\right\vert \right) u_{N}^{(2)}(\tau ,\mathbf{y}_{2};\mathbf{y}%
_{2}^{\prime }).
\end{eqnarray*}%
But%
\begin{eqnarray*}
&&\sup_{\tau }\left\Vert \left( \left\vert \nabla _{y_{1}}\right\vert
\left\vert \nabla _{y_{2}}\right\vert N^{-1}\tilde{V}_{N,\tau
}(y_{1}-y_{2})\right) \left( \left\vert \nabla _{y_{1}^{\prime }}\right\vert
\left\vert \nabla _{y_{2}^{^{\prime }}}\right\vert u_{N}^{(2)}(\tau ,\mathbf{%
y}_{2};\mathbf{y}_{2}^{\prime })\right) \right\Vert _{L^{2}}^{2} \\
&=&\sup_{\tau }\int \left\vert N^{-1}\left( \tilde{V}_{N,\tau }\right)
^{\prime \prime }(y_{1}-y_{2})\right\vert ^{2}\left\vert \left\vert \nabla
_{y_{1}^{\prime }}\right\vert \left\vert \nabla _{y_{2}^{^{\prime
}}}\right\vert u_{N}^{(2)}(\tau ,\mathbf{y}_{2};\mathbf{y}_{2}^{\prime
})\right\vert ^{2}d\mathbf{y}_{2}d\mathbf{y}_{2}^{\prime } \\
&\leqslant &\sup_{\tau }\int \left( \int \left\vert N^{-1}\left( \tilde{V}%
_{N,\tau }\right) ^{\prime \prime }(y)\right\vert ^{2}dy\right) \left( \int
\left\vert \left( 1-\triangle _{y_{1}}\right) ^{\frac{1}{2}}\left(
1-\triangle _{y_{2}}\right) ^{\frac{1}{2}}\left\vert \nabla _{y_{1}^{\prime
}}\right\vert \left\vert \nabla _{y_{2}^{^{\prime }}}\right\vert
u_{N}^{(2)}(\tau ,\mathbf{y}_{2};\mathbf{y}_{2}^{\prime })\right\vert ^{2}d%
\mathbf{y}_{2}\right) d\mathbf{y}_{2}^{\prime } \\
&&(\text{Lemma }\ref{Lemma:ESYIntergral}) \\
&\leqslant &CN^{-7\varepsilon _{0}}\left\Vert V^{\prime \prime }\right\Vert
_{L^{2}}^{2}\sup_{\tau }\int \left\vert \left( 1-\triangle _{y_{1}}\right) ^{%
\frac{1}{2}}\left( 1-\triangle _{y_{2}}\right) ^{\frac{1}{2}}\left(
1-\triangle _{y_{1}^{\prime }}\right) ^{\frac{1}{2}}\left( 1-\triangle
_{y_{2}^{\prime }}\right) ^{\frac{1}{2}}u_{N}^{(2)}(\tau ,\mathbf{y}_{2};%
\mathbf{y}_{2}^{\prime })\right\vert ^{2}d\mathbf{y}_{2}d\mathbf{y}%
_{2}^{\prime } \\
&\leqslant &N^{-7\varepsilon _{0}}\left\Vert V^{\prime \prime }\right\Vert
_{L^{2}}^{2}C^{2}\text{ }(\text{Condition \ref{condition:EnergyBoundForBBGKY}%
})
\end{eqnarray*}%
and%
\begin{eqnarray*}
&&\sup_{\tau }\left\Vert N^{-1}\tilde{V}_{N,\tau }(y_{1}-y_{2})\left(
\left\vert \nabla _{y_{1}}\right\vert \left\vert \nabla _{y_{2}}\right\vert
\left\vert \nabla _{y_{1}^{\prime }}\right\vert \left\vert \nabla
_{y_{2}^{^{\prime }}}\right\vert \right) u_{N}^{(2)}(\tau ,\mathbf{y}_{2};%
\mathbf{y}_{2}^{\prime })\right\Vert _{L^{2}}^{2} \\
&=&\sup_{\tau }\int \left\vert N^{-1}\tilde{V}_{N,\tau
}(y_{1}-y_{2})\right\vert ^{2}\left\vert \left( \left\vert \nabla
_{y_{1}}\right\vert \left\vert \nabla _{y_{2}}\right\vert \left\vert \nabla
_{y_{1}^{\prime }}\right\vert \left\vert \nabla _{y_{2}^{^{\prime
}}}\right\vert \right) u_{N}^{(2)}(\tau ,\mathbf{y}_{2};\mathbf{y}%
_{2}^{\prime })\right\vert ^{2}d\mathbf{y}_{2}d\mathbf{y}_{2}^{\prime } \\
&\leqslant &\sup_{\tau }\int \left\Vert N^{-1}\tilde{V}_{N,\tau }\right\Vert
_{L^{\infty }}^{2}\left\vert \left( 1-\triangle _{y_{1}}\right) ^{\frac{1}{2}%
}\left( 1-\triangle _{y_{2}}\right) ^{\frac{1}{2}}\left( 1-\triangle
_{y_{1}^{\prime }}\right) ^{\frac{1}{2}}\left( 1-\triangle _{y_{2}^{\prime
}}\right) ^{\frac{1}{2}}u_{N}^{(2)}(\tau ,\mathbf{y}_{2};\mathbf{y}%
_{2}^{\prime })\right\vert ^{2}d\mathbf{y}_{2}d\mathbf{y}_{2}^{\prime } \\
&\leqslant &N^{-\frac{2}{7}}\left\Vert V\right\Vert _{L^{\infty
}}^{2}\sup_{\tau }\int \left\vert \left( 1-\triangle _{y_{1}}\right) ^{\frac{%
1}{2}}\left( 1-\triangle _{y_{2}}\right) ^{\frac{1}{2}}\left( 1-\triangle
_{y_{1}^{\prime }}\right) ^{\frac{1}{2}}\left( 1-\triangle _{y_{2}^{\prime
}}\right) ^{\frac{1}{2}}u_{N}^{(2)}(\tau ,\mathbf{y}_{2};\mathbf{y}%
_{2}^{\prime })\right\vert ^{2}d\mathbf{y}_{2}d\mathbf{y}_{2}^{\prime } \\
&\leqslant &CN^{-\frac{2}{7}}\left\Vert V\right\Vert _{H^{2}}^{2}C^{2}\text{ 
}(\text{Sobolev and Condition \ref{condition:EnergyBoundForBBGKY}}).
\end{eqnarray*}%
Since the same method applies to the middle terms, we have obtained estimate %
\ref{estimate:one term in potential part} and hence Proposition \ref%
{proposition:BoundingThePotentialTerms}.

\section{Conclusion}

In this paper, we have rigorously derived the 3D cubic nonlinear Schr\"{o}%
dinger equation with a quadratic trap from the $N$-body linear Schr\"{o}%
dinger equation. The main novelty is that we allow a quadratic trap in our
analysis and the main technical improvements are the simplified proof of the
Klainerman-Machedon space-time bound as compared to non-trap case in Chen
and Pavlovi\'{c} \cite{TChenAndNPSpace-Time}, and the extension of the range
of $\beta $ from $\left( 0,1/4\right) $ in Chen and Pavlovi\'{c} \cite%
{TChenAndNPSpace-Time} to $\left( 0,2/7\right] .$ Compared to the 2D work 
\cite{ChenAnisotropic} which is also by the author, the 3D problem in this
paper is of critical regularity. To explain what we mean by critical: in 2D
one easily obtains the $\left\vert \nabla \right\vert ^{\frac{1}{2}}$%
-space-time bound needed for the uniqueness theorem by a trace theorem; in
3D the only way to obtain the space-time bound \ref%
{formula:theSpace-TimeBound} is through smoothing estimates since one does
not have enough regularity to apply a trace theorem. Thence the key
arguments in 3D are more involved and totally different from the 2D case
which is a subcritical problem. Moreover, we have established the trace norm
convergence in the main theorem which is a stronger result than the
Hilbert-Schmidt norm convergence.

\section{Acknowledgements}

The author would like to first thank his advisors, Professor Matei Machedon
and Professor Manoussos G. Grillakis, for the discussion related to this
work and their guidance on the author's PhD dissertation which consists of 
\cite{Chen2ndOrder, ChenAnisotropic}. It is due to their guidance, the
author became keenly interested in the subject and wrote this paper.

The author is greatly indebted to Professor Thomas Chen and Professor Nata%
\v{s}a Pavlovi\'{c} for telling the author about their important work \cite%
{TChenAndNPSpace-Time} and giving the author many helpful suggestions on
finding his first job during his visit to Austin, and for their very
detailed and helpful comments on this paper.

The author's thanks go to Professor Sergiu Klainerman for the discussion
related to this work, and to Professor Cl\'{e}ment Mouhot for sharing the
physical meaning of the trace norm in this setting with the author.

The author also would like to thank the anonymous referee for many
insightful comments and helpful suggestions.

\section{Appendix I: Proof of Corollary \protect\ref{Corollary:general data
convergence}\label{Section:Appendix-0}}

For the purpose of this Appendix I, we may assume $\omega =0$. Or in other
words, we skip Steps I and IV of the proof of Theorem \ref%
{Theorem:SmoothVersionofTheMainTheorem} here. When the desired limit is an
orthogonal projection, one does not need this proof. This is a functional
analysis argument and all operators mentioned in this Appendix I acts on $%
L^{2}\left( \mathbb{R}^{3k}\right) .$ We prove Corollary \ref%
{Corollary:general data convergence} by verifying the hypothesis of the
following lemma.

\begin{lemma}
\label{Theorem:Simons'}\cite{Simon} Assume the operator sequence $\left\{
A_{n}\right\} $ satisfies that, as bounded operators, $A_{n}\rightharpoonup
A $, $A_{n}^{\ast }\rightharpoonup A^{\ast }$ and $\left\vert
A_{n}\right\vert \rightharpoonup \left\vert A\right\vert $ in the weak
sense. If 
\begin{equation*}
\lim_{N\rightarrow \infty }\limfunc{Tr}\left\vert A_{n}\right\vert =\limfunc{%
Tr}\left\vert A\right\vert ,
\end{equation*}%
then 
\begin{equation*}
\lim_{N\rightarrow \infty }\limfunc{Tr}\left\vert A_{n}-A\right\vert =0.
\end{equation*}
\end{lemma}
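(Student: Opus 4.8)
The plan is to reduce the statement to the case of nonnegative operators and to handle that case by a finite-rank truncation. The heart of the matter is the following \emph{core claim}: if $B_{n},B\in \mathcal{L}^{1}$ are nonnegative, $B_{n}\rightharpoonup B$ in the weak operator topology, and $\limfunc{Tr}B_{n}\rightarrow \limfunc{Tr}B$, then $\limfunc{Tr}\left\vert B_{n}-B\right\vert \rightarrow 0$. This is already all that is needed in the application, since the operators occurring in Corollary~\ref{Corollary:general data convergence} are the marginal densities $\gamma _{N}^{(k)}(t)$ and $\gamma ^{(k)}(t)$, which are nonnegative; the full statement of the lemma is then recovered from the core claim as explained at the end.

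To prove the core claim I would fix $\delta >0$ and choose a finite-rank orthogonal projection $P$ with $\limfunc{Tr}((1-P)B)<\delta $, which is possible since $B\in \mathcal{L}^{1}$. Weak operator convergence gives $\limfunc{Tr}(FB_{n})\rightarrow \limfunc{Tr}(FB)$ for every finite-rank $F$ (write $F$ as a finite sum of rank-one operators), so in particular $\limfunc{Tr}(PB_{n})\rightarrow \limfunc{Tr}(PB)$; combining this with $\limfunc{Tr}B_{n}\rightarrow \limfunc{Tr}B$ yields $\limfunc{Tr}((1-P)B_{n})\rightarrow \limfunc{Tr}((1-P)B)<\delta $, hence $\limfunc{Tr}((1-P)B_{n})<2\delta $ for all large $n$. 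I would then split
\[
B_{n}-B=P(B_{n}-B)P+(1-P)(B_{n}-B)(1-P)+P(B_{n}-B)(1-P)+(1-P)(B_{n}-B)P
\]
and estimate the four pieces separately. The first acts on the finite-dimensional range of $P$, where weak convergence is norm convergence, so its trace norm tends to $0$. The second has trace norm at most $\limfunc{Tr}((1-P)B_{n})+\limfunc{Tr}((1-P)B)<3\delta $ for large $n$, using positivity of $B_{n}$ and $B$. For each off-diagonal piece, factoring $X=X^{1/2}X^{1/2}$ with $X\in \{B_{n},B\}$ and using the H\"older inequality for Schatten norms gives
\[
\left\Vert PX(1-P)\right\Vert _{\mathcal{L}^{1}}\leqslant \left\Vert PX^{1/2}\right\Vert _{\mathcal{L}^{2}}\left\Vert X^{1/2}(1-P)\right\Vert _{\mathcal{L}^{2}}\leqslant \left( \limfunc{Tr}X\right) ^{1/2}\left( \limfunc{Tr}((1-P)X)\right) ^{1/2},
\]
which is $O(\delta ^{1/2})$ for large $n$. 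Adding the four contributions, $\limsup_{n}\left\Vert B_{n}-B\right\Vert _{\mathcal{L}^{1}}\leqslant C\delta ^{1/2}$ for every $\delta >0$, so the limit is zero.

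To recover the full statement I would first apply the core claim to $B_{n}=\left\vert A_{n}\right\vert $, $B=\left\vert A\right\vert $ — legitimate by the hypotheses $\left\vert A_{n}\right\vert \rightharpoonup \left\vert A\right\vert $ and $\limfunc{Tr}\left\vert A_{n}\right\vert \rightarrow \limfunc{Tr}\left\vert A\right\vert $ — which gives $\left\Vert \left\vert A_{n}\right\vert -\left\vert A\right\vert \right\Vert _{\mathcal{L}^{1}}\rightarrow 0$. For the operators themselves I would use the positive $2\times 2$ block operator attached to the polar decomposition $A_{n}=U_{n}\left\vert A_{n}\right\vert $,
\[
\widehat{C}_{n}=\begin{pmatrix} \left\vert A_{n}\right\vert & A_{n}^{\ast } \\ A_{n} & \left\vert A_{n}^{\ast }\right\vert \end{pmatrix}=\begin{pmatrix} I \\ U_{n} \end{pmatrix}\left\vert A_{n}\right\vert \begin{pmatrix} I & U_{n}^{\ast } \end{pmatrix}\geqslant 0 ,
\]
which has $\limfunc{Tr}\widehat{C}_{n}=2\limfunc{Tr}\left\vert A_{n}\right\vert $ and lower-left block $A_{n}$. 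Passing to a subsequence along which the bounded sequence $\left\vert A_{n}^{\ast }\right\vert $ converges weakly, say to $M\geqslant 0$, the hypotheses $A_{n}\rightharpoonup A$, $A_{n}^{\ast }\rightharpoonup A^{\ast }$, $\left\vert A_{n}\right\vert \rightharpoonup \left\vert A\right\vert $ give $\widehat{C}_{n}\rightharpoonup \widehat{C}$ with $\widehat{C}=\left(\begin{smallmatrix} \left\vert A\right\vert & A^{\ast } \\ A & M \end{smallmatrix}\right)\geqslant 0$; a short argument with the operator Cauchy--Schwarz inequality for positive block matrices, together with weak lower semicontinuity of the trace, then shows $\limfunc{Tr}M=\limfunc{Tr}\left\vert A\right\vert $, so $\limfunc{Tr}\widehat{C}=2\limfunc{Tr}\left\vert A\right\vert =\lim \limfunc{Tr}\widehat{C}_{n}$. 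The core claim applied to $\widehat{C}_{n}$ then yields $\left\Vert \widehat{C}_{n}-\widehat{C}\right\Vert _{\mathcal{L}^{1}}\rightarrow 0$ along the subsequence, whence $\left\Vert A_{n}-A\right\Vert _{\mathcal{L}^{1}}\rightarrow 0$ along it, and a standard subsequence argument removes the restriction. I expect the step $\limfunc{Tr}M=\limfunc{Tr}\left\vert A\right\vert $ — showing that no mass is lost in the weak limit of the ``phase'' part $\left\vert A_{n}^{\ast }\right\vert $ — to be the main obstacle; it is precisely what forces all three weak-convergence hypotheses to be present, and I would take the details from Simon \cite{Simon}. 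As noted above, for the nonnegative marginal densities of Corollary~\ref{Corollary:general data convergence} this step is not needed.
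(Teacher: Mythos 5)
Your argument is essentially correct, but it is genuinely different from what the paper does: the paper offers no proof at all, simply citing the statement as Theorem 2.20 of Simon's \emph{Trace Ideals} (noting it implies Gr\"{u}mm's theorem), whereas you reconstruct the result from scratch. Your core claim for nonnegative operators (weak operator convergence plus convergence of traces implies trace-norm convergence) is proved correctly by the finite-rank truncation: the diagonal block $P(B_{n}-B)P$ dies by finite dimensionality, the tail block is controlled by positivity and the trace leak $\limfunc{Tr}((1-P)B_{n})\rightarrow \limfunc{Tr}((1-P)B)<\delta $, and the off-diagonal blocks by the Schatten H\"{o}lder bound $\Vert PX(1-P)\Vert _{\mathcal{L}^{1}}\leqslant (\limfunc{Tr}X)^{1/2}(\limfunc{Tr}((1-P)X))^{1/2}$; and, as you observe, this positive case is all that the application in Appendix I actually uses, since there $\left\vert A_{n}\right\vert =A_{n}$ and $\left\vert A\right\vert =A$. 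Your reduction of the general case via the positive block operator built from the polar decomposition is also sound, and the one step you defer to Simon, $\limfunc{Tr}M=\limfunc{Tr}\left\vert A\right\vert $, closes exactly as you indicate: lower semicontinuity of the trace under weak convergence of positive operators gives $\limfunc{Tr}M\leqslant \lim \limfunc{Tr}\left\vert A_{n}^{\ast }\right\vert =\limfunc{Tr}\left\vert A\right\vert $, while the form Cauchy--Schwarz inequality $\left\vert \left\langle v,Au\right\rangle \right\vert ^{2}\leqslant \left\langle u,\left\vert A\right\vert u\right\rangle \left\langle v,Mv\right\rangle $ for the positive block matrix, applied with $u=e_{i}$, $v=Ue_{i}$ for an orthonormal basis $\{e_{i}\}$ of $\overline{\limfunc{ran}}\left\vert A\right\vert $ and the polar isometry $U$ of $A$, yields $\limfunc{Tr}\left\vert A\right\vert =\sum_{i}\left\langle Ue_{i},Ae_{i}\right\rangle \leqslant (\limfunc{Tr}\left\vert A\right\vert )^{1/2}(\limfunc{Tr}M)^{1/2}$, hence the reverse inequality. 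What the two approaches buy is clear: the paper's citation keeps the exposition short and leans on a standard reference, while your argument is self-contained and elementary, and it isolates the only fact needed for the marginal densities (the positive case), making transparent why the three weak-convergence hypotheses enter only when one wants the statement for non-self-adjoint or non-positive $A_{n}$.
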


\begin{proof}
This is Theorem 2.20 in \cite{Simon}. It implies the Gr\"{u}mm's convergence
theorem (Theorem 2.19 of \cite{Simon}) used in \cite{E-S-Y3}.
\end{proof}

We first observe that condition \ref{condition:high energy bound} implies
the a-priori estimate 
\begin{equation*}
\sup_{t\in \lbrack 0,T]}\limfunc{Tr}\left( \dprod_{j=1}^{k}\left(
1-\triangle _{x_{j}}\right) \right) \gamma _{N}^{(k)}\leqslant C^{k}.
\end{equation*}%
Thus we have the compactness argument and the uniqueness argument to
conclude that, as trace class operator kernels,%
\begin{equation}
\gamma _{N}^{(k)}(t,\mathbf{x}_{k};\mathbf{x}_{k}^{\prime })\rightharpoonup
\gamma ^{(k)}(t,\mathbf{x}_{k};\mathbf{x}_{k}^{\prime })\text{ (weak*)}.
\label{formula:weak*}
\end{equation}

\begin{remark}
Because we assume $\omega =0$, both of the Erd\"{o}s-Schlein-Yau uniqueness
theorem \cite{E-S-Y2} and the Klainerman-Machedon uniqueness theorem \cite%
{KlainermanAndMachedon} apply here. For the general case, one has to use the
main argument in this paper.
\end{remark}

Let $\mathcal{H}_{k}$ be the Hilbert-Schmidt operators on $L^{2}\left( 
\mathbb{R}^{3k}\right) .$ Recall that the test functions for weak*
convergence in $\mathcal{L}_{k}^{1}$ come from $\mathcal{K}_{k}$ and the
test functions for weak* convergence in $\mathcal{H}_{k}$ come from $%
\mathcal{H}_{k}.$ Thus the weak* convergence \ref{formula:weak*} as trace
class operator kernels infers that as Hilbert-Schmidt kernels,%
\begin{equation*}
\gamma _{N}^{(k)}(t,\mathbf{x}_{k};\mathbf{x}_{k}^{\prime })\rightharpoonup
\gamma ^{(k)}(t,\mathbf{x}_{k};\mathbf{x}_{k}^{\prime })\text{ (weak*),}
\end{equation*}%
because $\mathcal{H}_{k}\subset \mathcal{K}_{k}$ i.e. there are fewer test
functions. Since $\mathcal{H}_{k}$ is reflexive, the weak* convergence is no
different from the weak convergence. Thus we know that as Hilbert-Schmidt
kernels and hence as bounded operator kernels,%
\begin{equation*}
\gamma _{N}^{(k)}(t,\mathbf{x}_{k};\mathbf{x}_{k}^{\prime })\rightharpoonup
\gamma ^{(k)}(t,\mathbf{x}_{k};\mathbf{x}_{k}^{\prime })\text{ (weak).}
\end{equation*}%
At this point, we have verified that, as bounded operators, $%
A_{n}\rightharpoonup A$ and $A_{n}^{\ast }\rightharpoonup A^{\ast }$ in the
weak sense since $\gamma _{N}^{(k)}$ and $\gamma ^{(k)}$ are self adjoint.
We now check $\left\vert A_{n}\right\vert \rightharpoonup \left\vert
A\right\vert .$

To check $\left\vert A_{n}\right\vert \rightharpoonup \left\vert
A\right\vert $, one first notices that $\gamma _{N}^{(k)}$ and $\gamma
^{(k)} $ has only real eigenvalues since they are self adjoint. Moreover, $%
\gamma _{N}^{(k)}(t,\mathbf{x}_{k};\mathbf{x}_{k}^{\prime })$ has no
negative eigenvalues by definition \ref{def:marginal density}, in fact, 
\begin{equation*}
\int \left( \int \left( \int \phi (x,z)\bar{\phi}(y,z)dz\right)
f(y)dy\right) \bar{f}(x)dx=\int dz\left\vert \int \phi (x,z)\bar{f}%
(x)dx\right\vert ^{2}\geqslant 0.
\end{equation*}%
Since $f(\mathbf{x}_{k})\bar{f}(\mathbf{x}_{k}^{\prime })$ in the above
estimate is also a Hilbert-Schmidt kernel, the definition of weak
convergence in $L^{2}(d\mathbf{x}_{k}d\mathbf{x}_{k}^{\prime })$ gives%
\begin{equation*}
\int \gamma ^{(k)}(t,\mathbf{x}_{k};\mathbf{x}_{k}^{\prime })\overline{f(%
\mathbf{x}_{k})\bar{f}(\mathbf{x}_{k}^{\prime })}d\mathbf{x}_{k}d\mathbf{x}%
_{k}^{\prime }=\lim_{N\rightarrow \infty }\int \gamma _{N}^{(k)}(t,\mathbf{x}%
_{k};\mathbf{x}_{k}^{\prime })\overline{f(\mathbf{x}_{k})\bar{f}(\mathbf{x}%
_{k}^{\prime })}d\mathbf{x}_{k}d\mathbf{x}_{k}^{\prime }\geqslant 0,
\end{equation*}%
as $\gamma _{N}^{(k)}(t,\mathbf{x}_{k};\mathbf{x}_{k}^{\prime
})\rightharpoonup \gamma ^{(k)}(t,\mathbf{x}_{k};\mathbf{x}_{k}^{\prime })$
weakly in $L^{2}(d\mathbf{x}_{k}d\mathbf{x}_{k}^{\prime }).$ So we have
checked $\left\vert A_{n}\right\vert \rightharpoonup \left\vert A\right\vert 
$ because $\left\vert A_{n}\right\vert =A_{n}$ and $\left\vert A\right\vert
=A.$

To prove Corollary \ref{Corollary:general data convergence}, by Lemma \ref%
{Theorem:Simons'}, it remains to argue%
\begin{equation*}
\lim_{N\rightarrow \infty }\limfunc{Tr}\left\vert \gamma _{N}^{(k)}(t,%
\mathbf{x}_{k};\mathbf{x}_{k}^{\prime })\right\vert =\limfunc{Tr}\left\vert
\gamma ^{(k)}(t,\mathbf{x}_{k};\mathbf{x}_{k}^{\prime })\right\vert .
\end{equation*}%
Notice that we have the conservation of trace%
\begin{eqnarray*}
\int \gamma _{N}^{(k)}(t,\mathbf{x}_{k};\mathbf{x}_{k})d\mathbf{x}_{k}
&=&\int \gamma _{N}^{(k)}(0,\mathbf{x}_{k};\mathbf{x}_{k})d\mathbf{x}_{k}%
\text{ (By definition \ref{def:marginal density})} \\
\int \gamma ^{(k)}(t,\mathbf{x}_{k};\mathbf{x}_{k})d\mathbf{x}_{k} &=&\int
\gamma ^{(k)}(0,\mathbf{x}_{k};\mathbf{x}_{k})d\mathbf{x}_{k}\text{ \cite%
{TCNPNT}}.
\end{eqnarray*}%
and we have shown that $\gamma _{N}^{(k)}(t,\mathbf{x}_{k};\mathbf{x}%
_{k}^{\prime })$ and $\gamma ^{(k)}(t,\mathbf{x}_{k};\mathbf{x}_{k}^{\prime
})$ have no negative eigenvalues, hence 
\begin{eqnarray*}
\limfunc{Tr}\left\vert \gamma _{N}^{(k)}(t,\mathbf{x}_{k};\mathbf{x}%
_{k}^{\prime })\right\vert &=&\int \gamma _{N}^{(k)}(t,\mathbf{x}_{k};%
\mathbf{x}_{k})d\mathbf{x}_{k}=\int \gamma _{N}^{(k)}(0,\mathbf{x}_{k};%
\mathbf{x}_{k})d\mathbf{x}_{k}, \\
\limfunc{Tr}\left\vert \gamma ^{(k)}(t,\mathbf{x}_{k};\mathbf{x}_{k}^{\prime
})\right\vert &=&\int \gamma ^{(k)}(t,\mathbf{x}_{k};\mathbf{x}_{k})d\mathbf{%
x}_{k}=\int \gamma ^{(k)}(0,\mathbf{x}_{k};\mathbf{x}_{k})d\mathbf{x}_{k}.
\end{eqnarray*}%
On the one hand,%
\begin{equation*}
\int \gamma ^{(k)}(0,\mathbf{x}_{k};\mathbf{x}_{k})d\mathbf{x}_{k}=\limfunc{%
Tr}\left\vert \gamma ^{(k)}(0,\mathbf{x}_{k};\mathbf{x}_{k}^{\prime
})\right\vert =\lim_{N\rightarrow \infty }\limfunc{Tr}\left\vert \gamma
_{N}^{(k)}(0,\mathbf{x}_{k};\mathbf{x}_{k}^{\prime })\right\vert
=\lim_{N\rightarrow \infty }\limfunc{Tr}\left\vert \gamma _{N}^{(k)}(t,%
\mathbf{x}_{k};\mathbf{x}_{k}^{\prime })\right\vert ,
\end{equation*}%
on the other hand,%
\begin{equation*}
\limfunc{Tr}\left\vert \gamma ^{(k)}(t,\mathbf{x}_{k};\mathbf{x}_{k}^{\prime
})\right\vert =\int \gamma ^{(k)}(t,\mathbf{x}_{k};\mathbf{x}_{k})d\mathbf{x}%
_{k}=\int \gamma ^{(k)}(0,\mathbf{x}_{k};\mathbf{x}_{k})d\mathbf{x}_{k}.
\end{equation*}%
That is%
\begin{equation*}
\limfunc{Tr}\left\vert \gamma ^{(k)}(t,\mathbf{x}_{k};\mathbf{x}_{k}^{\prime
})\right\vert =\lim_{N\rightarrow \infty }\limfunc{Tr}\left\vert \gamma
_{N}^{(k)}(t,\mathbf{x}_{k};\mathbf{x}_{k}^{\prime })\right\vert .
\end{equation*}%
Whence we conclude the proof of Corollary \ref{Corollary:general data
convergence} by Lemma \ref{Theorem:Simons'}.

\section{Appendix II: Proof of Theorem \protect\ref{Theorem:3*3d}\label%
{Section:Appendix}}

In this Appendix II, we prove Theorem \ref{Theorem:3*3d}. We will make use
of the lemma.

\begin{lemma}
\label{Lemma:MateiLemmaForIntegrals}\cite{KlainermanAndMachedon} Let $\xi
\in \mathbb{R}^{3}$ and $P$ be a 2d plane or sphere in $\mathbb{R}^{3}$ with
the usual induced surface measure $dS$.

(1) Suppose $0<a,$ $b<2,$ $a+b>2,$ then%
\begin{equation*}
\int_{P}\frac{dS(\eta )}{\left\vert \xi -\eta \right\vert ^{a}\left\vert
\eta \right\vert ^{b}}\leqslant \frac{C}{\left\vert \xi \right\vert ^{a+b-2}}%
.
\end{equation*}

(2) Suppose $\varepsilon =\frac{1}{10}$, then%
\begin{equation*}
\int_{P}\frac{dS(\eta )}{\left\vert \frac{\xi }{2}-\eta \right\vert
\left\vert \xi -\eta \right\vert ^{2-\varepsilon }\left\vert \eta
\right\vert ^{2-\varepsilon }}\leqslant \frac{C}{\left\vert \xi \right\vert
^{3-2\varepsilon }}.
\end{equation*}

Both constants in the above estimates are independent of $P.$
\end{lemma}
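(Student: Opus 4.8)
The plan is to prove both estimates by one scheme: \emph{rescale, localize, sum dyadically}. First I would reduce to the case $|\xi|=1$. Substituting $\eta=|\xi|\omega$ carries $P$ to a surface $P'$ of the same type (a plane to a plane, a sphere of radius $R$ to one of radius $R/|\xi|$), with $dS(\eta)=|\xi|^{2}\,dS(\omega)$, while each factor $|\xi-\eta|$, $|\eta|$, $|\tfrac{\xi}{2}-\eta|$ produces a power of $|\xi|$; a quick exponent count shows (1) is equivalent to its own $|\xi|=1$ case, and likewise for (2). Since $P'$ ranges over \emph{all} planes and spheres as $P$ does, it suffices to bound the $|\xi|=1$ integrals by an absolute constant. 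The only geometric inputs needed, both uniform over all planes and spheres $P$, all centers $q$ and radii $\rho$, are
\[
\mathrm{area}\bigl(P\cap B_{\rho}(q)\bigr)\le C\rho^{2}
\qquad\text{and, for }0\le c<2,\qquad
\int_{P\cap B_{r}(p)}\frac{dS(\eta)}{|\eta-p|^{c}}\le C_{c}\,r^{2-c}.
\]
The first is immediate for planes and holds for a sphere of arbitrary radius because a cap cut out by a ball of radius $\rho$ has area at most $4\pi\rho^{2}$; the second follows from the first by writing $P\cap B_{r}(p)=\bigcup_{j\ge0}P\cap\bigl(B_{2^{-j}r}(p)\setminus B_{2^{-j-1}r}(p)\bigr)$, bounding the $j$-th piece by $(2^{-j-1}r)^{-c}\cdot C(2^{-j}r)^{2}$, and summing the geometric series, which converges since $2-c>0$.

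For part (1), fix $|\xi|=1$ and split $P=\bigl(P\cap B_{1/2}(0)\bigr)\cup\bigl(P\cap B_{1/2}(\xi)\bigr)\cup P_{\mathrm{far}}$. On $P\cap B_{1/2}(0)$ one has $|\xi-\eta|\ge\tfrac12$, so the integrand is $\le 2^{a}|\eta|^{-b}$ and the second geometric input with $c=b<2$ gives a bound $\le C$; the ball around $\xi$ is symmetric, using $c=a<2$. On $P_{\mathrm{far}}=\{|\eta|\ge\tfrac12,\ |\eta-\xi|\ge\tfrac12\}$ I further split into the compact annulus $\{\tfrac12\le|\eta|\le2\}$, where the integrand is bounded and the surface measure is $\le C$ by the first input, and $\{|\eta|\ge2\}$, where $|\eta-\xi|\ge|\eta|/2$ forces the integrand to be $\le 2^{a}|\eta|^{-(a+b)}$; decomposing into dyadic shells and using the area bound yields $\sum_{j\ge1}C2^{j(2-a-b)}<\infty$, convergent precisely because $a+b>2$. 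Summing the three contributions gives (1).

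For part (2), again take $|\xi|=1$; there are now three singular points $0$, $\tfrac{\xi}{2}$, $\xi$, pairwise at distance $\ge\tfrac12$, so $B_{1/8}(0)$, $B_{1/8}(\xi/2)$, $B_{1/8}(\xi)$ are disjoint and on each one exactly one factor is singular, of order $2-\varepsilon<2$, $1<2$, $2-\varepsilon<2$ respectively; the second geometric input controls each local piece. Off these three balls I split as before: on $\{|\eta|\le2\}$ minus the balls the integrand is bounded and the area is $\le C$, while on $\{|\eta|\ge2\}$ all three factors are $\gtrsim|\eta|$, so the integrand is $\lesssim|\eta|^{-(5-2\varepsilon)}$ and the dyadic sum converges since $5-2\varepsilon>2$. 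This gives (2); the argument uses only $\varepsilon\in(0,1)$, so the value $\tfrac1{10}$ is inessential.

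The analytic content is thus routine dyadic bookkeeping; the step I expect to require real care is the \emph{uniformity of the two geometric inputs over spheres of arbitrary radius and center} — in particular when the radius is comparable to, or much smaller than, the scales $r$ and $\rho$ appearing above, and when a singular point lies near the sphere. I would handle this by covering any sphere with $O(1)$ spherical caps, each of which projects bi-Lipschitzly, with absolute constants, onto a coordinate plane, thereby reducing both inputs to their already-established planar versions.
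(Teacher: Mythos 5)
The paper offers no proof of this lemma at all: its ``proof'' is the citation to pages 174--175 of \cite{KlainermanAndMachedon}, so your argument is really being measured against the original Klainerman--Machedon computation rather than anything in this paper. Your proof is correct, and it is a genuinely self-contained alternative. The scaling $\eta=|\xi|\omega$ does reduce both estimates to the case $|\xi|=1$ with the class of admissible surfaces preserved (planes go to planes, spheres to spheres), and the remaining bookkeeping closes because every local singular exponent is strictly below $2$ (namely $a,b<2$ in part (1) and $2-\varepsilon,\,1,\,2-\varepsilon$ in part (2)), while the decay at infinity has total exponent $a+b>2$, respectively $5-2\varepsilon>2$. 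The only step you flag as delicate, the uniform bound $\mathrm{area}\bigl(P\cap B_{\rho}(q)\bigr)\leqslant C\rho^{2}$ for spheres of arbitrary radius $R$ and arbitrary center $q$, is in fact already settled by an elementary computation: with $d$ the distance from $q$ to the sphere's center, the cap area equals $\pi R\bigl(\rho^{2}-(R-d)^{2}\bigr)/d$ (truncated at $4\pi R^{2}$), which gives at most $2\pi\rho^{2}$ when $d\geqslant R/2$, while for $d<R/2$ a nonempty intersection forces $\rho>R/2$ and the full sphere area $4\pi R^{2}$ is at most $16\pi\rho^{2}$; so your claimed universal constant is right and no bi-Lipschitz cap covering is needed (though that fallback is also valid, being scale invariant). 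Compared with the Klainerman--Machedon original, which proceeds by direct region-by-region estimation with explicit parametrizations of the plane or sphere, your organization around scaling plus the two uniform geometric inputs makes the independence of the constants from $P$ transparent, and it also makes clear that the specific value $\varepsilon=\frac{1}{10}$ is immaterial as long as $0<\varepsilon<1$, which is all the paper uses in Appendix II.
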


\begin{proof}
See pages 174 - 175 of \cite{KlainermanAndMachedon}.
\end{proof}

In this appendix II, we write%
\begin{equation*}
V_{N}^{\tau }(x)=g^{3}(\tau )V_{N}(g(\tau )x)
\end{equation*}
which has the property that%
\begin{equation*}
\sup_{\tau ,\xi \mathbf{,}N}\left\vert \widehat{V_{N}^{\tau }}(\xi
)\right\vert \leqslant \int \left\vert V_{N}^{\tau }(x)\right\vert dx=b_{0}.
\end{equation*}

By duality, to gain Theorem \ref{Theorem:3*3d}, it suffices to prove that%
\begin{eqnarray*}
&&\left\vert \int_{\mathbb{R}^{3+1}}h(\tau ,y)\left\vert \nabla
_{y}\right\vert \left( \int V_{N}^{\tau }(y-y_{2})\delta (y_{2}\mathbf{-}%
y_{2}^{\prime })u(\tau ,y,y_{2};y_{2}^{\prime })dy_{2}dy_{2}^{\prime
}\right) dyd\tau \right\vert \\
&\leqslant &Cb_{0}\left\Vert h\right\Vert _{2}\left\Vert \left\vert \nabla
_{y_{1}}\right\vert \left\vert \nabla _{y_{2}}\right\vert \left\vert \nabla
_{y_{2}^{\prime }}\right\vert f\right\Vert _{2}.
\end{eqnarray*}%
From equation \ref{eqn:3*3d}, we compute the spatial Fourier transform of 
\begin{equation*}
\left\vert \nabla _{y}\right\vert \left( \int V_{N}^{\tau }(y-y_{2})\delta
(y_{2}\mathbf{-}y_{2}^{\prime })u(\tau ,y,y_{2};y_{2}^{\prime
})dy_{2}dy_{2}^{\prime }\right)
\end{equation*}%
to be 
\begin{equation*}
\left\vert \xi _{1}\right\vert \int \widehat{V_{N}^{\tau }}(\xi _{2}+\xi
_{2}^{\prime })e^{-\frac{i\tau }{2}(\left\vert \xi _{1}-\xi _{2}-\xi
_{2}^{\prime }\right\vert ^{2}+\left\vert \xi _{2}\right\vert
^{2}-\left\vert \xi _{2}^{\prime }\right\vert ^{2})}\hat{f}(\xi _{1}-\xi
_{2}-\xi _{2}^{\prime },\xi _{2},\xi _{2}^{\prime })d\xi _{2}d\xi
_{2}^{\prime },
\end{equation*}%
thus we have%
\begin{eqnarray*}
&&\left\vert \int_{\mathbb{R}^{3+1}}h(\tau ,y)\left\vert \nabla
_{y}\right\vert \left( \int V_{N}^{\tau }(y-y_{2})\delta (y_{2}\mathbf{-}%
y_{2}^{\prime })u(\tau ,y,y_{2};y_{2}^{\prime })dy_{2}dy_{2}^{\prime
}\right) dyd\tau \right\vert ^{2} \\
&=&\left\vert \int \left\vert \xi _{1}\right\vert \widehat{V_{N}^{\tau }}%
(\xi _{2}+\xi _{2}^{\prime })e^{-\frac{i\tau }{2}(\left\vert \xi _{1}-\xi
_{2}-\xi _{2}^{\prime }\right\vert ^{2}+\left\vert \xi _{2}\right\vert
^{2}-\left\vert \xi _{2}^{\prime }\right\vert ^{2})}\hat{f}(\xi _{1}-\xi
_{2}-\xi _{2}^{\prime },\xi _{2},\xi _{2}^{\prime })\hat{h}(\tau ,\xi
_{1})d\tau d\xi _{1}d\xi _{2}d\xi _{2}^{\prime }\right\vert ^{2} \\
&&(\text{spatial Fourier transform on }h) \\
&=&\bigg|\int \left( \int \widehat{V_{N}^{\tau }}(\xi _{2}+\xi _{2}^{\prime
})\left\vert \xi _{1}\right\vert e^{-\frac{i\tau }{2}(\left\vert \xi
_{1}-\xi _{2}-\xi _{2}^{\prime }\right\vert ^{2}+\left\vert \xi
_{2}\right\vert ^{2}-\left\vert \xi _{2}^{\prime }\right\vert ^{2})}\hat{h}%
(\tau ,\xi _{1})d\tau \right) \\
&&\hat{f}(\xi _{1}-\xi _{2}-\xi _{2}^{\prime },\xi _{2},\xi _{2}^{\prime
})d\xi _{1}d\xi _{2}d\xi _{2}^{\prime }\bigg|^{2} \\
&\leqslant &I(h)\left\Vert \left\vert \nabla _{y_{1}}\right\vert \left\vert
\nabla _{y_{2}}\right\vert \left\vert \nabla _{y_{2}^{\prime }}\right\vert
f\right\Vert _{L^{2}}^{2}\text{ }(\text{Cauchy-Schwarz})
\end{eqnarray*}%
where%
\begin{equation*}
I(h)=\int \frac{\left\vert \xi _{1}\right\vert ^{2}\left\vert \int \widehat{%
V_{N}^{\tau }}(\xi _{2}+\xi _{2}^{\prime })e^{-\frac{i\tau }{2}(\left\vert
\xi _{1}-\xi _{2}-\xi _{2}^{\prime }\right\vert ^{2}+\left\vert \xi
_{2}\right\vert ^{2}-\left\vert \xi _{2}^{\prime }\right\vert ^{2})}\hat{h}%
(\tau ,\xi _{1})d\tau \right\vert ^{2}}{\left\vert \xi _{1}-\xi _{2}-\xi
_{2}^{\prime }\right\vert ^{2}\left\vert \xi _{2}\right\vert ^{2}\left\vert
\xi _{2}^{\prime }\right\vert ^{2}}d\xi _{1}d\xi _{2}d\xi _{2}^{\prime }.
\end{equation*}%
So our purpose in the remainder of this Appendix II is to show that 
\begin{equation*}
I(h)\leqslant Cb_{0}^{2}\left\Vert h\right\Vert _{L^{2}}^{2}.
\end{equation*}%
Noticing that, away from the factor $\widehat{V_{N}^{\tau }}$, the integral $%
I(h)$ is symmetric in $\left\vert \xi _{1}-\xi _{2}-\xi _{2}^{\prime
}\right\vert $ and $\left\vert \xi _{2}\right\vert ,$ it suffices that we
deal with the region: $\left\vert \xi _{1}-\xi _{2}-\xi _{2}^{\prime
}\right\vert >\left\vert \xi _{2}\right\vert $ only since our proof treats $%
\widehat{V_{N}^{\tau }}$ as a harmless factor. We separate this region into
two parts, Cases I and II.

Away from the region $\left\vert \xi _{1}-\xi _{2}-\xi _{2}^{\prime
}\right\vert >\left\vert \xi _{2}\right\vert $, there are other restrictions
on the integration regions in Cases I and II. We state the restrictions in
the beginning of both Cases I and II. Due to the limited space near "$\int $%
", we omit the actual region. The alert reader should bear this mind.

\subsubsection{Case I: $I(h)$ restricted to the region $\left\vert \protect%
\xi _{2}^{\prime }\right\vert <\left\vert \protect\xi _{2}\right\vert $ with
integration order $d\protect\xi _{2}$ prior to $d\protect\xi _{2}^{\prime }$}

Write the phase function of the $d\tau $ integral inside $I(h)$ as%
\begin{equation*}
\left\vert \xi _{1}-\xi _{2}-\xi _{2}^{\prime }\right\vert ^{2}+\left\vert
\xi _{2}\right\vert ^{2}-\left\vert \xi _{2}^{\prime }\right\vert ^{2}=\frac{%
\left\vert \xi _{1}-\xi _{2}^{\prime }\right\vert ^{2}}{2}+2\left\vert \xi
_{2}-\frac{\xi _{1}-\xi _{2}^{\prime }}{2}\right\vert ^{2}-\left\vert \xi
_{2}^{\prime }\right\vert ^{2}.
\end{equation*}%
The change of variable%
\begin{equation}
\xi _{2,new}=\xi _{2,old}-\frac{\xi _{1}-\xi _{2}^{\prime }}{2}
\label{formula:change of variable 3*3d sphere}
\end{equation}%
leads to the expression%
\begin{equation*}
I(h)=\int \frac{\left\vert \xi _{1}\right\vert ^{2}\left\vert \int \widehat{%
V_{N}^{\tau }}(\xi _{2}+\frac{\xi _{1}+\xi _{2}^{\prime }}{2})e^{-\frac{%
i\tau }{2}\left( \frac{\left\vert \xi _{1}-\xi _{2}^{\prime }\right\vert ^{2}%
}{2}+2\left\vert \xi _{2}\right\vert ^{2}-\left\vert \xi _{2}^{\prime
}\right\vert ^{2}\right) }\hat{h}(\tau ,\xi _{1})d\tau \right\vert ^{2}}{%
\left\vert \xi _{2}-\frac{\xi _{1}-\xi _{2}^{\prime }}{2}\right\vert
^{2}\left\vert \xi _{2}+\frac{\xi _{1}\mathbf{-}\xi _{2}^{\prime }}{2}%
\right\vert ^{2}\left\vert \xi _{2}^{\prime }\right\vert ^{2}}d\xi _{1}d\xi
_{2}d\xi _{2}^{\prime }.
\end{equation*}%
Write out the square,%
\begin{eqnarray*}
I(h) &=&\int \frac{\left\vert \xi _{1}\right\vert ^{2}}{\left\vert \xi _{2}-%
\frac{\xi _{1}-\xi _{2}^{\prime }}{2}\right\vert ^{2}\left\vert \xi _{2}+%
\frac{\xi _{1}-\xi _{2}^{\prime }}{2}\right\vert ^{2}\left\vert \xi
_{2}^{\prime }\right\vert ^{2}}\widehat{V_{N}^{\tau }}(\xi _{2}+\frac{\xi
_{1}+\xi _{2}^{\prime }}{2})\overline{\widehat{V_{N}^{\tau ^{\prime }}}(\xi
_{2}+\frac{\xi _{1}+\xi _{2}^{\prime }}{2})} \\
&&e^{-\frac{i\left( \tau -\tau ^{\prime }\right) }{2}\left( \frac{\left\vert
\xi _{1}-\xi _{2}^{\prime }\right\vert ^{2}}{2}+2\left\vert \xi
_{2}\right\vert ^{2}-\left\vert \xi _{2}^{\prime }\right\vert ^{2}\right) }%
\hat{h}(\tau ,\xi _{1})\overline{\hat{h}(\tau ^{\prime },\xi _{1})}d\tau
d\tau ^{\prime }d\xi _{1}d\xi _{2}d\xi _{2}^{\prime } \\
&=&\int d\xi _{1}\int J(\overline{\hat{h}})(\tau ,\xi _{1})\hat{h}(\tau ,\xi
_{1})d\tau
\end{eqnarray*}%
where 
\begin{eqnarray*}
J(\overline{\hat{h}})(\tau ,\xi _{1}) &=&\int \frac{\left\vert \xi
_{1}\right\vert ^{2}e^{-i\tau \left\vert \xi _{2}\right\vert ^{2}}e^{i\tau
^{\prime }\left\vert \xi _{2}\right\vert ^{2}}\widehat{V_{N}^{\tau }}(\xi
_{2}+\frac{\xi _{1}+\xi _{2}^{\prime }}{2})\overline{\widehat{V_{N}^{\tau
^{\prime }}}(\xi _{2}+\frac{\xi _{1}+\xi _{2}^{\prime }}{2})}}{\left\vert
\xi _{2}-\frac{\xi _{1}-\xi _{2}^{\prime }}{2}\right\vert ^{2}\left\vert \xi
_{2}+\frac{\xi _{1}-\xi _{2}^{\prime }}{2}\right\vert ^{2}\left\vert \xi
_{2}^{\prime }\right\vert ^{2}} \\
&&e^{-\frac{i\left( \tau -\tau ^{\prime }\right) }{2}\left( \frac{\left\vert
\xi _{1}-\xi _{2}^{\prime }\right\vert ^{2}}{2}-\left\vert \xi _{2}^{\prime
}\right\vert ^{2}\right) }\overline{\hat{h}(\tau ^{\prime },\xi _{1})}d\tau
^{\prime }d\xi _{2}d\xi _{2}^{\prime }.
\end{eqnarray*}

Assume for the moment that 
\begin{equation*}
\int \left\vert J(\overline{\hat{h}})(\tau ,\xi _{1})\right\vert ^{2}d\tau
\leqslant Cb_{0}^{2}\left\Vert \hat{h}(\cdot ,\xi _{1})\right\Vert _{L_{\tau
}^{2}}^{2}
\end{equation*}%
with $C$ independent of $h$ or $\xi _{1}$, then we deduce that%
\begin{equation*}
I(h)\leqslant Cb_{0}^{2}\int d\xi _{1}\left\Vert \hat{h}(\cdot ,\xi
_{1})\right\Vert _{L_{\tau }^{2}}^{2}.
\end{equation*}

Hence we end Case I by this proposition.

\begin{proposition}
\begin{equation*}
\int \left\vert J(f)(\tau ,\xi _{1})\right\vert ^{2}d\tau \leqslant
Cb_{0}^{2}\left\Vert f(\cdot ,\xi _{1})\right\Vert _{L_{\tau }^{2}}^{2}
\end{equation*}%
where $C$ is independent of $f$ or $\xi _{1}.$
\end{proposition}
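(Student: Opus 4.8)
The plan is to read the assertion, for $\xi_1$ fixed, as an $L^2_\tau$ operator bound for $f(\cdot,\xi_1)\mapsto J(f)(\cdot,\xi_1)$ and to prove it by duality together with Plancherel in time. The potential enters the kernel of this operator only through the factors $\widehat{V_N^{\tau}}(\zeta_0)$ and $\overline{\widehat{V_N^{\tau'}}(\zeta_0)}$, $\zeta_0=\xi_2+\tfrac{\xi_1+\xi_2'}{2}$, each obeying $\sup_{\tau,\zeta,N}|\widehat{V_N^{\tau}}(\zeta)|\le\int|V_N^{\tau}|=b_0$; multiplying $\psi$ (resp. $f$) by such a factor dilates its $L^2_\tau$ norm only by $b_0$ and keeps Plancherel available, which is the precise sense in which $\widehat{V_N^{\tau}}$ is ``harmless'' and is where the powers of $b_0$ and the $N$-independence come from. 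Concretely, I would pair $J(f)(\cdot,\xi_1)$ with a test function $\psi\in L^2_\tau$ and carry out the $\tau$- and $\tau'$-integrations first; this turns $\overline{\psi}$ and $f$, each pre-multiplied by the corresponding bounded factor $\widehat{V_N^{\cdot}}(\zeta_0)$, into their time Fourier transforms $\widehat\Psi,\widehat{\mathcal F}$ evaluated at the reduced phase $\Phi=\Phi(\xi_1,\xi_2,\xi_2')$ produced by the change of variable \ref{formula:change of variable 3*3d sphere}, so that
\[
\langle J(f),\psi\rangle_{L^2_\tau}=\int_{\mathcal R}\frac{|\xi_1|^2}{|\xi_2-w|^2\,|\xi_2+w|^2\,|\xi_2'|^2}\,\widehat\Psi(\Phi)\,\widehat{\mathcal F}(\Phi)\,d\xi_2\,d\xi_2',\qquad w=\tfrac12(\xi_1-\xi_2'),
\]
with $\mathcal R$ the Case~I region. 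The purpose of \ref{formula:change of variable 3*3d sphere} is that, as a function of $\xi_2$, $\Phi$ equals $|\xi_2|^2$ plus a quantity depending only on $(\xi_1,\xi_2')$, so its $\xi_2$-level sets are origin-centered spheres.

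Next I would apply Cauchy--Schwarz in $(\xi_2,\xi_2')$, splitting the rational weight symmetrically, which reduces matters to the single-factor bound
\[
\int_{\mathcal R}\frac{|\xi_1|^2\,\bigl|\widehat{\mathcal F}(\Phi)\bigr|^2}{|\xi_2-w|^2\,|\xi_2+w|^2\,|\xi_2'|^2}\,d\xi_2\,d\xi_2'\ \lesssim\ b_0^2\,\|f(\cdot,\xi_1)\|_{L^2_\tau}^2,
\]
and the identical bound with $\psi$ in place of $f$. For fixed $\xi_2'$ I would use the co-area formula in $\xi_2$ relative to $\Phi$: the $d\xi_2$-integral becomes $\int ds$ of $|\widehat{\mathcal F}(s)|^2$ against $|2\xi_2|^{-1}$ times the surface integral over the corresponding sphere $\{|\xi_2|=\rho\}$; after bounding the $\widehat V$-factor by $b_0$, Plancherel in time absorbs $\int|\widehat{\mathcal F}(s)|^2\,ds$ into $b_0^2\|f(\cdot,\xi_1)\|_{L^2_\tau}^2$, and one is left with the requirement that
\[
\sup_{\xi_1}\ \int_{\mathbb R^3}\frac{|\xi_1|^2}{|\xi_2'|^2}\ \sup_{\rho>0}\ \frac1{\rho}\int_{\{|\xi_2|=\rho\}}\frac{dS(\xi_2)}{|\xi_2-w|^2\,|\xi_2+w|^2}\ \,d\xi_2'\ <\ \infty .
\]

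The heart of the proof, and the step I expect to be the main obstacle, is this weighted surface estimate. Lemma \ref{Lemma:MateiLemmaForIntegrals} gives exactly bounds of the shape $\int_P|\xi-\eta|^{-a}|\eta|^{-b}\,dS\lesssim|\xi|^{a+b-2}$ on a sphere $P$, but only for exponents strictly below $2$, whereas here the two singular factors carry the borderline exponent $2$. This is where the Case~I restrictions are used: $|\xi_1-\xi_2-\xi_2'|>|\xi_2|$ becomes, after \ref{formula:change of variable 3*3d sphere}, $|\xi_2-w|>|\xi_2+w|$, which together with $|\xi_2-w|+|\xi_2+w|\ge 2|w|$ forces $|\xi_2-w|\ge\max(|\xi_2|,|w|)$; one power $|\xi_2-w|^{-2}$ can therefore be traded for $|\xi_2-w|^{-(2-\varepsilon)}|w|^{-\varepsilon}$ (or $|\xi_2-w|^{-(2-\varepsilon)}|\xi_2|^{-\varepsilon}$), bringing both exponents below $2$ so that Lemma \ref{Lemma:MateiLemmaForIntegrals}(2) applies on $\{|\xi_2|=\rho\}$; the leftover powers of $|w|=\tfrac12|\xi_1-\xi_2'|$, together with the factor $|\xi_2'|^{-2}$, are then absorbed by the remaining $d\xi_2'$-integration, in which the restriction $|\xi_2'|<|\xi_2|$ and the prescribed integration order $d\xi_2$ before $d\xi_2'$ are precisely what force convergence. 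Every quantity here is homogeneous in $\xi_1$, so the bound is uniform in $\xi_1$, and since all occurrences of the potential have already been dominated by $b_0$, it is uniform in $N$. Recombining the two Cauchy--Schwarz factors gives $|\langle J(f),\psi\rangle|\lesssim b_0^2\,\|f(\cdot,\xi_1)\|_{L^2_\tau}\|\psi\|_{L^2_\tau}$, which is the proposition (the precise power of $b_0$ being, as elsewhere in the paper, not tracked).
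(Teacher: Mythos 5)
Your overall scaffolding is the same as the paper's: duality in $\tau$, the change of variable \ref{formula:change of variable 3*3d sphere}, treating $\widehat{V_{N}^{\tau }}$ as a bounded factor of size $b_{0}$, Plancherel in time, and the reduction to the uniform bound of $\int \frac{\left\vert \xi _{1}\right\vert ^{2}}{\left\vert \xi _{2}^{\prime }\right\vert ^{2}}\sup_{\rho }\frac{1}{\rho }\int_{\{\left\vert \xi _{2}\right\vert =\rho \}}\frac{dS(\xi _{2})}{\left\vert \xi _{2}-w\right\vert ^{2}\left\vert \xi _{2}+w\right\vert ^{2}}\,d\xi _{2}^{\prime }$ (the paper applies H\"{o}lder only in the radial variable and keeps both time transforms, which is a cosmetic difference from your symmetric Cauchy--Schwarz). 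The gap is in the critical surface estimate. From $\left\vert \xi _{2}-w\right\vert \geqslant \max \left( \left\vert \xi _{2}\right\vert ,\left\vert w\right\vert \right) $ you trade $\left\vert \xi _{2}-w\right\vert ^{-2}$ for $\left\vert \xi _{2}-w\right\vert ^{-(2-\varepsilon )}\left\vert w\right\vert ^{-\varepsilon }$ and claim this brings ``both exponents below $2$''. It does not: the factor $\left\vert \xi _{2}+w\right\vert ^{-2}$ (the reciprocal square of the original, unshifted variable, i.e. the $\left\vert \eta \right\vert $ slot of Lemma \ref{Lemma:MateiLemmaForIntegrals}) is untouched and stays at the borderline exponent $2$. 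With one exponent equal to $2$, neither part of Lemma \ref{Lemma:MateiLemmaForIntegrals} applies (part (1) requires both exponents strictly below $2$, part (2) requires the pattern $1,\,2-\varepsilon ,\,2-\varepsilon $), and the angular integral is genuinely log-divergent when the sphere $\{\left\vert \xi _{2}\right\vert =\rho \}$ passes through the singular point $\xi _{2}=-w$ (i.e. $\rho \approx \left\vert w\right\vert $), which lies inside the region $\left\vert \xi _{2}-w\right\vert >\left\vert \xi _{2}+w\right\vert $; the spare factor $\left\vert w\right\vert ^{-\varepsilon }$ cannot repair a divergence of the surface integration itself.

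What actually closes this step is the second Case I restriction, $\left\vert \xi _{2}^{\prime }\right\vert <\left\vert \xi _{2}+w\right\vert $ (in the original variables $\left\vert \xi _{2}^{\prime }\right\vert <\left\vert \xi _{2}\right\vert $), used quantitatively on the sphere and not merely, as in your last clause, as a vague aid to the $d\xi _{2}^{\prime }$ integration. The paper exploits the full chain $\left\vert \xi _{1}-\xi _{2}-\xi _{2}^{\prime }\right\vert >\left\vert \xi _{2}\right\vert >\left\vert \xi _{2}^{\prime }\right\vert $ (original variables) to lower \emph{both} critical exponents at once, $\frac{1}{\left\vert \xi _{2}^{\prime }\right\vert ^{2}\left\vert \xi _{1}-\xi _{2}-\xi _{2}^{\prime }\right\vert ^{2}\left\vert \xi _{2}\right\vert ^{2}}\leqslant \frac{1}{\left\vert \xi _{2}^{\prime }\right\vert ^{2+2\varepsilon }\left\vert \xi _{1}-\xi _{2}-\xi _{2}^{\prime }\right\vert ^{2-\varepsilon }\left\vert \xi _{2}\right\vert ^{2-\varepsilon }}$, dumping the excess $2\varepsilon $ onto $\left\vert \xi _{2}^{\prime }\right\vert $, which is constant on the sphere; together with the prefactor $\frac{1}{\rho }=\frac{1}{\left\vert \xi _{2}-\frac{\xi _{1}-\xi _{2}^{\prime }}{2}\right\vert }$ this is exactly the integrand of part (2) of Lemma \ref{Lemma:MateiLemmaForIntegrals} with $\xi =\xi _{1}-\xi _{2}^{\prime }$, giving $C\left\vert \xi _{1}-\xi _{2}^{\prime }\right\vert ^{-(3-2\varepsilon )}$, and then $\int \frac{\left\vert \xi _{1}\right\vert ^{2}d\xi _{2}^{\prime }}{\left\vert \xi _{2}^{\prime }\right\vert ^{2+2\varepsilon }\left\vert \xi _{1}-\xi _{2}^{\prime }\right\vert ^{3-2\varepsilon }}\leqslant C$ by scaling, which is where uniformity in $\xi _{1}$ is actually verified rather than asserted by homogeneity. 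So the architecture of your argument is right, but the exponent bookkeeping at the decisive step fails as written and must be replaced by this two-sided $\varepsilon $-transfer.
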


\begin{remark}
To avoid confusing notation in the proof of the proposition, we use $f(\tau
^{\prime },\xi _{1})$ to replace $\overline{\hat{h}(\tau ^{\prime },\xi _{1})%
}.$
\end{remark}

\begin{proof}
Again, by duality, we just need to prove%
\begin{equation*}
\left\vert \int J(f)(\tau ,\xi _{1})\overline{g(\tau )}d\tau \right\vert
\leqslant C\left\Vert f(\cdot ,\xi _{1})\right\Vert _{L_{\tau
}^{2}}\left\Vert g\right\Vert _{L_{\tau }^{2}}.
\end{equation*}%
For convenience, let%
\begin{equation*}
\phi (\tau ,\xi _{1},\xi _{2}^{\prime })=\frac{\tau }{2}\left( \frac{%
\left\vert \xi _{1}-\xi _{2}^{\prime }\right\vert ^{2}}{2}-\left\vert \xi
_{2}^{\prime }\right\vert ^{2}\right) .
\end{equation*}%
Then%
\begin{eqnarray*}
&&\left\vert \int J(f)(\tau ,\xi _{1})\overline{g(\tau )}d\tau \right\vert \\
&=&\bigg|\int \frac{\left\vert \xi _{1}\right\vert ^{2}d\xi _{2}d\xi
_{2}^{\prime }}{\left\vert \xi _{2}-\frac{\xi _{1}-\xi _{2}^{\prime }}{2}%
\right\vert ^{2}\left\vert \xi _{2}+\frac{\xi _{1}-\xi _{2}^{\prime }}{2}%
\right\vert ^{2}\left\vert \xi _{2}^{\prime }\right\vert ^{2}}\left[ \int
e^{-i\tau \left\vert \xi _{2}\right\vert ^{2}}\left( \widehat{V_{N}^{\tau }}%
(\xi _{2}+\frac{\xi _{1}+\xi _{2}^{\prime }}{2})\overline{e^{i\phi (\tau
,\xi _{1},\xi _{2}^{\prime })}g(\tau )}\right) d\tau \right] \\
&&\left[ \int e^{i\tau ^{\prime }\left\vert \xi _{2}\right\vert ^{2}}\left( 
\overline{\widehat{V_{N}^{\tau ^{\prime }}}(\xi _{2}+\frac{\xi _{1}+\xi
_{2}^{\prime }}{2})}e^{i\phi (\tau ^{\prime },\xi _{1},\xi _{2}^{\prime
})}f(\tau ^{\prime },\xi _{1})\right) d\tau ^{\prime }\right] \bigg| \\
&\leqslant &\int \frac{\left\vert \xi _{1}\right\vert ^{2}d\xi _{2}^{\prime }%
}{\left\vert \xi _{2}^{\prime }\right\vert ^{2}}\int \frac{d\xi _{2}}{%
\left\vert \xi _{2}-\frac{\xi _{1}-\xi _{2}^{\prime }}{2}\right\vert
^{2}\left\vert \xi _{2}+\frac{\xi _{1}-\xi _{2}^{\prime }}{2}\right\vert ^{2}%
}\left\vert \int e^{-i\tau \left\vert \xi _{2}\right\vert ^{2}}\left( 
\widehat{V_{N}^{\tau }}(\xi _{2}+\frac{\xi _{1}+\xi _{2}^{\prime }}{2})%
\overline{e^{i\phi (\tau ,\xi _{1},\xi _{2}^{\prime })}g(\tau )}\right)
d\tau \right\vert \\
&&\left\vert \int e^{i\tau ^{\prime }\left\vert \xi _{2}\right\vert
^{2}}\left( \overline{\widehat{V_{N}^{\tau ^{\prime }}}(\xi _{2}+\frac{\xi
_{1}+\xi _{2}^{\prime }}{2})}e^{i\phi (\tau ^{\prime },\xi _{1},\xi
_{2}^{\prime })}f(\tau ^{\prime },\xi _{1})\right) d\tau ^{\prime
}\right\vert .
\end{eqnarray*}%
To deal with the $d\tau $ and $d\tau ^{\prime }$ integrals, let 
\begin{equation*}
G(\tau )=\widehat{V_{N}^{\tau }}(\xi _{2}+\frac{\xi _{1}+\xi _{2}^{\prime }}{%
2})\overline{e^{i\phi (\tau ,\xi _{1},\xi _{2}^{\prime })}g(\tau )},
\end{equation*}

then%
\begin{equation*}
\int e^{-i\tau \left\vert \xi _{2}\right\vert ^{2}}\left( \widehat{%
V_{N}^{\tau }}(\xi _{2}+\frac{\xi _{1}+\xi _{2}^{\prime }}{2})\overline{%
e^{i\phi (\tau ,\xi _{1},\xi _{2}^{\prime })}g(\tau )}\right) d\tau =\hat{G}%
(\left\vert \xi _{2}\right\vert ^{2}).
\end{equation*}%
This is well-defined since%
\begin{eqnarray*}
\int_{\mathbb{R}}\left\vert G(\tau )\right\vert ^{2}d\tau &=&\int_{\mathbb{R}%
}\left\vert \widehat{V_{N}^{\tau }}(\xi _{2}+\frac{\xi _{1}+\xi _{2}^{\prime
}}{2})\right\vert ^{2}\left\vert \overline{e^{i\phi (\tau ,\xi _{1},\xi
_{2}^{\prime })}g(\tau )}\right\vert ^{2}d\tau \\
&\leqslant &\sup_{\tau ,N}\left\Vert \widehat{V_{N}^{\tau }}(\mathbf{\cdot }%
)\right\Vert _{L_{\xi }^{\infty }}^{2}\int_{\mathbb{R}}\left\vert g(\tau
)\right\vert ^{2}d\tau \\
&\leqslant &b_{0}^{2}\left\Vert g(\cdot )\right\Vert _{L_{\tau }^{2}}^{2}.
\end{eqnarray*}%
Hence, we find%
\begin{eqnarray*}
\left\vert \int J(f)(\tau ,\xi _{1})\overline{g(\tau )}d\tau \right\vert
&\leqslant &\int \frac{\left\vert \xi _{1}\right\vert ^{2}d\xi _{2}^{\prime }%
}{\left\vert \xi _{2}^{\prime }\right\vert ^{2}}\int \frac{\left\vert \hat{G}%
(\left\vert \xi _{2}\right\vert ^{2})\right\vert \left\vert \overline{\hat{F}%
}(\left\vert \xi _{2}\right\vert ^{2},\xi _{1})\right\vert d\xi _{2}}{%
\left\vert \xi _{2}-\frac{\xi _{1}-\xi _{2}^{\prime }}{2}\right\vert
^{2}\left\vert \xi _{2}+\frac{\xi _{1}-\xi _{2}^{\prime }}{2}\right\vert ^{2}%
} \\
&=&\int \frac{\left\vert \xi _{1}\right\vert ^{2}d\xi _{2}^{\prime }}{%
\left\vert \xi _{2}^{\prime }\right\vert ^{2}}\int \frac{\left\vert \hat{G}%
(\rho ^{2})\right\vert \left\vert \hat{F}(\rho ^{2},\xi _{1})\right\vert
\rho ^{2}d\rho d\sigma }{\left\vert \xi _{2}-\frac{\xi _{1}-\xi _{2}^{\prime
}}{2}\right\vert ^{2}\left\vert \xi _{2}+\frac{\xi _{1}-\xi _{2}^{\prime }}{2%
}\right\vert ^{2}}
\end{eqnarray*}%
if we use spherical coordinate in $\xi _{2}.$ Apply H\"{o}lder in $\rho $, 
\begin{eqnarray*}
&\leqslant &\int \frac{\left\vert \xi _{1}\right\vert ^{2}d\xi _{2}^{\prime }%
}{\left\vert \xi _{2}^{\prime }\right\vert ^{2}}\sup_{\rho }\left( \int 
\frac{\rho ^{2}d\sigma }{\rho \left\vert \xi _{2}-\frac{\xi _{1}-\xi
_{2}^{\prime }}{2}\right\vert ^{2}\left\vert \xi _{2}+\frac{\xi _{1}-\xi
_{2}^{\prime }}{2}\right\vert ^{2}}\right) \left( \int \left\vert \hat{F}%
(\rho ^{2},\xi _{1})\right\vert ^{2}\rho d\rho \right) ^{\frac{1}{2}}\left(
\int \left\vert \hat{G}(\rho ^{2})\right\vert ^{2}\rho d\rho \right) ^{\frac{%
1}{2}} \\
&\leqslant &b_{0}^{2}\left\Vert f(\cdot ,\xi _{1})\right\Vert _{L_{\tau
}^{2}}\left\Vert g\right\Vert _{L_{\tau }^{2}}\int \frac{\left\vert \xi
_{1}\right\vert ^{2}}{\left\vert \xi _{2}^{\prime }\right\vert ^{2}}%
\sup_{\rho }\left( \int \frac{\rho ^{2}d\sigma }{\rho \left\vert \xi _{2}-%
\frac{\xi _{1}-\xi _{2}^{\prime }}{2}\right\vert ^{2}\left\vert \xi _{2}+%
\frac{\xi _{1}-\xi _{2}^{\prime }}{2}\right\vert ^{2}}\right) d\xi
_{2}^{\prime }.
\end{eqnarray*}%
Reverse the change of variable in formula \ref{formula:change of variable
3*3d sphere}, we find%
\begin{eqnarray*}
&&\int \frac{\left\vert \xi _{1}\right\vert ^{2}}{\left\vert \xi
_{2}^{\prime }\right\vert ^{2}}\sup_{\rho }\left( \int \frac{\rho
^{2}d\sigma }{\rho \left\vert \xi _{2}-\frac{\xi _{1}-\xi _{2}^{\prime }}{2}%
\right\vert ^{2}\left\vert \xi _{2}+\frac{\xi _{1}-\xi _{2}^{\prime }}{2}%
\right\vert ^{2}}\right) d\xi _{2}^{\prime } \\
&=&\int \frac{\left\vert \xi _{1}\right\vert ^{2}}{\left\vert \xi
_{2}^{\prime }\right\vert ^{2}}\sup_{\rho }\left( \int \frac{\left\vert \xi
_{2}-\frac{\xi _{1}-\xi _{2}^{\prime }}{2}\right\vert ^{2}d\sigma }{%
\left\vert \xi _{2}-\frac{\xi _{1}-\xi _{2}^{\prime }}{2}\right\vert
\left\vert \xi _{1}-\xi _{2}-\xi _{2}^{\prime }\right\vert ^{2}\left\vert
\xi _{2}\right\vert ^{2}}\right) d\xi _{2}^{\prime } \\
&\leqslant &\left\vert \xi _{1}\right\vert ^{2}\int \frac{d\xi _{2}^{\prime }%
}{\left\vert \xi _{2}^{\prime }\right\vert ^{2+2\varepsilon }}\sup_{\rho
}\left( \int \frac{\left\vert \xi _{2}-\frac{\xi _{1}-\xi _{2}^{\prime }}{2}%
\right\vert ^{2}d\sigma }{\left\vert \xi _{2}-\frac{\xi _{1}-\xi
_{2}^{\prime }}{2}\right\vert \left\vert \xi _{1}-\xi _{2}-\xi _{2}^{\prime
}\right\vert ^{2-\varepsilon }\left\vert \xi _{2}\right\vert ^{2-\varepsilon
}}\right) \\
&\leqslant &C\left\vert \xi _{1}\right\vert ^{2}\int \frac{d\xi _{2}^{\prime
}}{\left\vert \xi _{2}^{\prime }\right\vert ^{2+2\varepsilon }\left\vert \xi
_{1}-\xi _{2}^{\prime }\right\vert ^{3-2\varepsilon }}\text{ (Second part of
Lemma \ref{Lemma:MateiLemmaForIntegrals})} \\
&\leqslant &C.
\end{eqnarray*}

In the above calculation, the $\sigma $ in the first line lies on the unit
sphere centered at the origin while the $\sigma $ in the second line is on a
unit sphere centered at $\frac{\xi _{1}-\xi _{2}^{\prime }}{2}$. We use the
same symbol because Lebesgue measure is translation invariant.

Thus, we conclude that%
\begin{equation*}
\left\vert \int J(f)(\tau ,\xi _{1})\overline{g(\tau )}d\tau \right\vert
\leqslant Cb_{0}^{2}\left\Vert f(\cdot ,\xi _{1})\right\Vert _{L_{\tau
}^{2}}\left\Vert g\right\Vert _{L_{\tau }^{2}}.
\end{equation*}
\end{proof}

\subsubsection{Case II: $I(h)$ restricted to the region $\left\vert \protect%
\xi _{2}^{\prime }\right\vert >\left\vert \protect\xi _{2}\right\vert $ with
integration order $d\protect\xi _{2}^{\prime }$ prior to $d\protect\xi _{2}$}

For this case, we express the phase function as%
\begin{eqnarray*}
\frac{\tau }{2}\left( \left\vert \xi _{1}-\xi _{2}-\xi _{2}^{\prime
}\right\vert ^{2}+\left\vert \xi _{2}\right\vert ^{2}-\left\vert \xi
_{2}^{\prime }\right\vert ^{2}\right) &=&\frac{\tau }{2}\left( \left\vert
\xi _{1}-\xi _{2}\right\vert ^{2}-2\left( \xi _{1}-\xi _{2}\right) \cdot \xi
_{2}^{\prime }+\left\vert \xi _{2}\right\vert ^{2}\right) \\
&=&\phi (\tau ,\xi _{1},\xi _{2})-\tau \left( \xi _{1}-\xi _{2}\right) \cdot
\xi _{2}^{\prime }
\end{eqnarray*}%
and let%
\begin{eqnarray*}
J\left( \overline{\hat{h}}\right) (\tau ,\xi _{1}) &=&\int \frac{\left\vert
\xi _{1}\right\vert ^{2}e^{-i\tau \left( \xi _{1}-\xi _{2}\right) \cdot \xi
_{2}^{\prime }}e^{i\tau ^{\prime }\left( \xi _{1}-\xi _{2}\right) \cdot \xi
_{2}^{\prime }}}{\left\vert \xi _{1}-\xi _{2}-\xi _{2}^{\prime }\right\vert
^{2}\left\vert \xi _{2}\right\vert ^{2}\left\vert \xi _{2}^{\prime
}\right\vert ^{2}} \\
&&\widehat{V_{N}^{\tau }}(\xi _{2}+\xi _{2}^{\prime })\overline{\widehat{%
V_{N}^{\tau ^{\prime }}}(\xi _{2}+\xi _{2}^{\prime })}e^{i\phi (\tau
^{\prime },\xi _{1},\xi _{2})}\overline{e^{i\phi (\tau ,\xi _{1},\xi _{2})}}%
\overline{\hat{h}(\tau ^{\prime },\xi _{1})}d\tau ^{\prime }d\xi
_{2}^{\prime }d\xi _{2}.
\end{eqnarray*}%
Again, we want to prove the following property of $J.$

\begin{proposition}
\begin{equation*}
\int \left\vert J(f)(\tau ,\xi _{1})\right\vert ^{2}d\tau \leqslant
Cb_{0}^{2}\left\Vert f(\cdot ,\xi _{1})\right\Vert _{L_{\tau }^{2}}^{2}
\end{equation*}%
where the constant $C$ is independent of $f$ or $\xi _{1}.$
\end{proposition}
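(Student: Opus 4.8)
The plan is to run exactly the scheme of Case I, modified to exploit the fact that here the $\tau$-phase is \emph{linear} in the variable integrated first. By duality it suffices to prove
\[
\left\vert \int J(f)(\tau ,\xi _{1})\overline{g(\tau )}\,d\tau \right\vert \leqslant Cb_{0}^{2}\left\Vert f(\cdot ,\xi _{1})\right\Vert _{L_{\tau }^{2}}\left\Vert g\right\Vert _{L_{\tau }^{2}},
\]
with $C$ independent of $f,g,\xi _1$. Expanding $J$ and separating the $d\tau$ and $d\tau'$ integrations, one recognizes the two bracketed time integrals as (conjugates of) Fourier transforms in time evaluated at the frequency $s=(\xi_1-\xi_2)\cdot\xi_2'$. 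Setting $G(\tau)=\widehat{V_N^\tau}(\xi_2+\xi_2')\,\overline{e^{i\phi(\tau,\xi_1,\xi_2)}g(\tau)}$ and $F(\tau')=\overline{\widehat{V_N^{\tau'}}(\xi_2+\xi_2')}\,e^{i\phi(\tau',\xi_1,\xi_2)}f(\tau',\xi_1)$, these are $\hat G(s)$ and $\overline{\hat F(-s)}$. Since the phase factors $e^{\pm i\phi}$ are real-exponent and independent of $\xi_2'$, the bound $\sup_{\tau,\xi,N}|\widehat{V_N^\tau}(\xi)|\leqslant b_0$ gives $\left\Vert G\right\Vert_{L^2_\tau}\leqslant b_0\left\Vert g\right\Vert_{L^2_\tau}$ and $\left\Vert F\right\Vert_{L^2_\tau}\leqslant b_0\left\Vert f(\cdot,\xi_1)\right\Vert_{L^2_\tau}$, uniformly in $\xi_1,\xi_2,\xi_2'$, exactly as in Case I.

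Next, for fixed $\xi_1,\xi_2$ I would change variables in $\xi_2'$ to coordinates adapted to the direction of $\xi_1-\xi_2$, writing $d\xi_2'=|\xi_1-\xi_2|^{-1}\,ds\,dS(\xi_2')$ with $\xi_2'$ ranging over the plane $P_s=\{\xi_2':(\xi_1-\xi_2)\cdot\xi_2'=s\}$. This produces
\[
\left\vert \int J(f)(\tau ,\xi _{1})\overline{g(\tau )}\,d\tau \right\vert \leqslant \int \frac{\left\vert \xi _{1}\right\vert ^{2}}{\left\vert \xi _{2}\right\vert ^{2}\left\vert \xi _{1}-\xi _{2}\right\vert }\left( \int_{\mathbb{R}}\left\vert \hat{G}(s)\right\vert \left\vert \hat{F}(-s)\right\vert w_{\xi_1,\xi_2}(s)\,ds\right) d\xi _{2},\qquad w_{\xi_1,\xi_2}(s)=\int_{P_s}\frac{dS(\xi_2')}{\left\vert \xi_1-\xi_2-\xi_2'\right\vert^2\left\vert \xi_2'\right\vert^2}.
\]
Hölder in $s$ followed by Plancherel in time bounds the inner integral by $\left\Vert w_{\xi_1,\xi_2}\right\Vert_{L^\infty_s}\left\Vert \hat G\right\Vert_{L^2_s}\left\Vert \hat F\right\Vert_{L^2_s}\leqslant C\left\Vert w_{\xi_1,\xi_2}\right\Vert_{L^\infty_s}b_0^2\left\Vert g\right\Vert_{L^2_\tau}\left\Vert f(\cdot,\xi_1)\right\Vert_{L^2_\tau}$, so the whole estimate reduces to the purely spatial bound $\left\vert\xi_1\right\vert^2\int |\xi_2|^{-2}|\xi_1-\xi_2|^{-1}\sup_s w_{\xi_1,\xi_2}(s)\,d\xi_2\leqslant C$, uniformly in $\xi_1$.

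Here is where the two region restrictions are used, and where I expect the main difficulty. The naive exponents on $P_s$ are the borderline values $a=b=2$, which are excluded by the hypotheses $a,b<2$ of Lemma \ref{Lemma:MateiLemmaForIntegrals}(1), so one must spend an $\varepsilon$ on each factor. In Case II we are on the region $|\xi_2'|>|\xi_2|$, and the reduction preceding the case split gives $|\xi_1-\xi_2-\xi_2'|>|\xi_2|$; hence for small $\varepsilon>0$
\[
\frac{1}{\left\vert \xi_1-\xi_2-\xi_2'\right\vert^2\left\vert \xi_2'\right\vert^2}\leqslant \frac{1}{\left\vert \xi_2\right\vert^{2\varepsilon}}\cdot\frac{1}{\left\vert \xi_1-\xi_2-\xi_2'\right\vert^{2-\varepsilon}\left\vert \xi_2'\right\vert^{2-\varepsilon}},
\]
and since $2-\varepsilon<2$ and $(2-\varepsilon)+(2-\varepsilon)>2$, Lemma \ref{Lemma:MateiLemmaForIntegrals}(1) yields $w_{\xi_1,\xi_2}(s)\leqslant C\left\vert \xi_2\right\vert^{-2\varepsilon}\left\vert \xi_1-\xi_2\right\vert^{-(2-2\varepsilon)}$, independently of $s$. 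Substituting into the spatial integral leaves $C\left\vert \xi_1\right\vert^2\int_{\mathbb{R}^3}\left\vert \xi_2\right\vert^{-2-2\varepsilon}\left\vert \xi_1-\xi_2\right\vert^{-3+2\varepsilon}\,d\xi_2$, which is finite for $0<\varepsilon<\tfrac12$ (one may take $\varepsilon=\tfrac1{10}$ as in Lemma \ref{Lemma:MateiLemmaForIntegrals}) by a standard convolution estimate for homogeneous kernels, and, being homogeneous of degree $0$ and rotation invariant, equals a constant independent of $\xi_1$. Combining the above gives the claimed $L^2_\tau$ bound on $J(f)$, with the $\varepsilon$-loss distributed over both singular factors on $P_s$ — this splitting, paid for by both region constraints simultaneously, being the one genuinely delicate point.
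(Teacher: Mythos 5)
Your argument is essentially the paper's own Case II proof: duality, recognizing the time integrals as Fourier transforms at frequency $(\xi_1-\xi_2)\cdot\xi_2^{\prime}$ with $\Vert G\Vert_{L^2_\tau}\leqslant b_0\Vert g\Vert_{L^2_\tau}$, slicing $\xi_2^{\prime}$ along the direction of $\xi_1-\xi_2$ (your coarea decomposition over the planes $P_s$ is exactly the paper's rotation to $\omega=(1,0,0)$ followed by rescaling, producing the same $\left\vert \xi_1-\xi_2\right\vert^{-1}$ factor), Cauchy--Schwarz in the sliced variable plus Plancherel, the $\varepsilon$-splitting of the exponents using $\left\vert \xi_1-\xi_2-\xi_2^{\prime}\right\vert>\left\vert \xi_2\right\vert$ and $\left\vert \xi_2^{\prime}\right\vert>\left\vert \xi_2\right\vert$, part (1) of Lemma \ref{Lemma:MateiLemmaForIntegrals} on the plane, and the homogeneous integral $\left\vert \xi_1\right\vert^{2}\int \left\vert \xi_2\right\vert^{-2-2\varepsilon}\left\vert \xi_1-\xi_2\right\vert^{-3+2\varepsilon}d\xi_2\leqslant C$. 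The only deviation is cosmetic, and in fact your use of the Case II constraint $\left\vert \xi_2^{\prime}\right\vert>\left\vert \xi_2\right\vert$ is the correct one (the paper's proof text states $\left\vert \xi_2^{\prime}\right\vert<\left\vert \xi_2\right\vert$ there, an apparent typo).
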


\begin{proof}
We calculate%
\begin{eqnarray*}
&&\left\vert \int J(f)(\tau ,\xi _{1})\overline{g(\tau )}d\tau \right\vert \\
&\leqslant &\int \frac{\left\vert \xi _{1}\right\vert ^{2}d\xi _{2}}{%
\left\vert \xi _{2}\right\vert ^{2}}\int \frac{d\xi _{2}^{\prime }}{%
\left\vert \xi _{1}-\xi _{2}-\xi _{2}^{\prime }\right\vert ^{2}\left\vert
\xi _{2}^{\prime }\right\vert ^{2}}\left\vert \int e^{-i\tau \left( \xi
_{1}-\xi _{2}\right) \cdot \xi _{2}^{\prime }}\left( \widehat{V_{N}^{\tau }}%
(\xi _{2}+\xi _{2}^{\prime })\overline{e^{i\phi (\tau ,\xi _{1},\xi
_{2})}g(\tau )}\right) d\tau \right\vert \\
&&\left\vert \int e^{i\tau ^{\prime }\left( \xi _{1}-\xi _{2}\right) \cdot
\xi _{2}^{\prime }}\left( \overline{\widehat{V_{N}^{\tau ^{\prime }}}(\xi
_{2}+\xi _{2}^{\prime })}e^{i\phi (\tau ^{\prime },\xi _{1},\xi _{2})}f(\tau
^{\prime },\xi _{1})\right) d\tau ^{\prime }\right\vert .
\end{eqnarray*}%
For the purpose of the $d\tau $ and $d\tau ^{\prime }$ integrals, let $\xi
_{1}-\xi _{2}=\left\vert \xi _{1}-\xi _{2}\right\vert \omega $, where $%
\omega $ is a unit vector in $\mathbb{R}^{3}$. Without loss of generality,
we may assume that $\omega =(1,0,0)$ and write $\xi _{2}^{\prime }=(x,y,z),$
we then have%
\begin{eqnarray*}
&&\int e^{-i\tau \left( \xi _{1}-\xi _{2}\right) \cdot \xi _{2}^{\prime
}}\left( \widehat{V_{N}^{\tau }}(\xi _{2}+\xi _{2}^{\prime })\overline{%
e^{i\phi (\tau ,\xi _{1},\xi _{2})}g(\tau )}\right) d\tau \\
&=&\int e^{-i\tau \left\vert \xi _{1}-\xi _{2}\right\vert \omega \cdot \xi
_{2}^{\prime }}\left( \widehat{V_{N}^{\tau }}(\xi _{2}+\xi _{2}^{\prime })%
\overline{e^{i\phi (\tau ,\xi _{1},\xi _{2})}g(\tau )}\right) d\tau \\
&=&\int e^{-i\tau \left\vert \xi _{1}-\xi _{2}\right\vert x}\left( \widehat{%
V_{N}^{\tau }}(\xi _{2}+\xi _{2}^{\prime })\overline{e^{i\phi (\tau ,\xi
_{1},\xi _{2})}g(\tau )}\right) du \\
&=&\hat{G}(\left\vert \xi _{1}-\xi _{2}\right\vert x),
\end{eqnarray*}%
where%
\begin{equation*}
G(\tau )=\left( \widehat{V_{N}^{\tau }}(\xi _{2}+\xi _{2}^{\prime })%
\overline{e^{i\phi (\tau ,\xi _{1},\xi _{2})}g(\tau )}\right) ,
\end{equation*}%
which still has the property that 
\begin{equation*}
\int \left\vert G(\tau )\right\vert ^{2}d\tau \leqslant b_{0}^{2}\int
\left\vert g(\tau )\right\vert ^{2}d\tau .
\end{equation*}%
Just as in case 1, this procedure furnishes%
\begin{eqnarray*}
&&\left\vert \int J(f)(\tau ,\xi _{1})\overline{g(\tau )}d\tau \right\vert \\
&\leqslant &\int \frac{\left\vert \xi _{1}\right\vert ^{2}d\xi _{2}}{%
\left\vert \xi _{2}\right\vert ^{2}}\int \frac{dxdydz}{\left\vert \xi
_{1}-\xi _{2}-\xi _{2}^{\prime }\right\vert ^{2}\left\vert \xi _{2}^{\prime
}\right\vert ^{2}}\left\vert \hat{G}(\left\vert \xi _{1}-\xi _{2}\right\vert
x)\right\vert \left\vert \overline{\hat{F}}(\left\vert \xi _{1}-\xi
_{2}\right\vert x,\xi _{1})\right\vert \\
&=&\int \left( \int \frac{dxdydz}{\left\vert \xi _{1}-\xi _{2}-\xi
_{2}^{\prime }\right\vert ^{2}\left\vert \xi _{2}^{\prime }\right\vert ^{2}}%
\left\vert \hat{G}(x)\right\vert \left\vert \hat{F}(x,\xi _{1})\right\vert
\right) \frac{\left\vert \xi _{1}\right\vert ^{2}}{\left\vert \xi _{1}-\xi
_{2}\right\vert \left\vert \xi _{2}\right\vert ^{2}}d\xi _{2}
\end{eqnarray*}%
Apply H\"{o}lder in $x,$ we have 
\begin{eqnarray*}
&\leqslant &\int \frac{\left\vert \xi _{1}\right\vert ^{2}}{\left\vert \xi
_{1}-\xi _{2}\right\vert \left\vert \xi _{2}\right\vert ^{2}}\left(
\sup_{x}\int \frac{dydz}{\left\vert \xi _{1}-\xi _{2}-\xi _{2}^{\prime
}\right\vert ^{2}\left\vert \xi _{2}^{\prime }\right\vert ^{2}}\right)
\left( \int \left\vert \hat{F}(x,\xi _{1})\right\vert ^{2}dx\right) ^{\frac{1%
}{2}}\left( \int \left\vert \hat{G}(x)\right\vert ^{2}dx\right) ^{\frac{1}{2}%
}d\xi _{2} \\
&\leqslant &b_{0}^{2}\left\Vert f(\cdot ,\xi _{1})\right\Vert _{L_{\tau
}^{2}}\left\Vert g\right\Vert _{L_{\tau }^{2}}\int \frac{\left\vert \xi
_{1}\right\vert ^{2}}{\left\vert \xi _{1}-\xi _{2}\right\vert \left\vert \xi
_{2}\right\vert ^{2}}\left( \sup_{x}\int \frac{dydz}{\left\vert \xi _{1}-\xi
_{2}-\xi _{2}^{\prime }\right\vert ^{2}\left\vert \xi _{2}^{\prime
}\right\vert ^{2}}\right) d\xi _{2}.
\end{eqnarray*}%
The first part of Lemma \ref{Lemma:MateiLemmaForIntegrals} along with the
restrictions $\left\vert \xi _{1}-\xi _{2}-\xi _{2}^{\prime }\right\vert
>\left\vert \xi _{2}\right\vert $ and $\left\vert \xi _{2}^{\prime
}\right\vert <\left\vert \xi _{2}\right\vert $ entail 
\begin{eqnarray*}
&&\int \frac{\left\vert \xi _{1}\right\vert ^{2}}{\left\vert \xi _{1}-\xi
_{2}\right\vert \left\vert \xi _{2}\right\vert ^{2}}\left( \sup_{x}\int 
\frac{dydz}{\left\vert \xi _{1}-\xi _{2}-\xi _{2}^{\prime }\right\vert
^{2}\left\vert \xi _{2}^{\prime }\right\vert ^{2}}\right) d\xi _{2} \\
&\leqslant &\int \frac{\left\vert \xi _{1}\right\vert ^{2}}{\left\vert \xi
_{1}-\xi _{2}\right\vert \left\vert \xi _{2}\right\vert ^{2+2\varepsilon }}%
\left( \sup_{x}\int \frac{dydz}{\left\vert \xi _{1}-\xi _{2}-\xi
_{2}^{\prime }\right\vert ^{2-\varepsilon }\left\vert \xi _{2}^{\prime
}\right\vert ^{2-\varepsilon }}\right) d\xi _{2} \\
&\leqslant &C\int \frac{\left\vert \xi _{1}\right\vert ^{2}d\xi _{2}}{%
\left\vert \xi _{1}-\xi _{2}\right\vert ^{3-2\varepsilon }\left\vert \xi
_{2}\right\vert ^{2+2\varepsilon }} \\
&\leqslant &C,
\end{eqnarray*}%
which finishes the proposition.
\end{proof}

\end{document}